\newtheorem{thm}{Theorem}
\newtheorem{cor}{Corollary}
\newtheorem{prop}{Proposition}
\newtheorem{prob}{Problem}
\newtheorem{lem}[thm]{Lemma}
\newdefinition{rem}{Remark}
\newdefinition{defi}{Definition}
\newproof{proof}{Proof}
\journal{*****}
\begin{document}

\begin{frontmatter}

\title{Integrability of the one dimensional Schroedinger equation}
\author{Thierry COMBOT\fnref{label2}}
\ead{thierr.combot@u-bourgogne.fr}
\address{}

\title{Integrability of the one dimensional Schroedinger equation}

\author{}

\address{}

\begin{abstract}
We present a definition of integrability for the one dimensional Schroedinger equation, which encompasses all known integrable systems, i.e. systems for which the spectrum can be explicitly computed. For this, we introduce the class of rigid functions, built as Liouvillian functions, but containing all solutions of rigid differential operators in the sense of Katz, and a notion of natural of boundary conditions. We then make a complete classification of rational integrable potentials. Many new integrable cases are found, some of them physically interesting.
\end{abstract}

\begin{keyword}
Stokes \sep Quantum mechanics \sep Isomonodromic deformations \sep Differential Galois theory \sep Special functions
\MSC[2010] 34M46 \sep 34M50 \sep 37J30 
\end{keyword}
\end{frontmatter}

\section{Introduction}

In this article, we are interested in the definition of integrability and the search of integrable potentials of the one dimensional Schroedinger equation
\begin{equation}\label{eq1}
\frac{d^2\psi}{dz^2} +(V(z)+E)\psi(z)=0
\end{equation}
where $V$ is a rational function, and $E$ a parameter. The problem is not only to find solutions of equation \eqref{eq1} under a more or less explicit form, but above all to compute the set $\mathcal{S}$ (called the spectrum) parameters values $E$ such that equation \eqref{eq1} admits a solution with particular properties (called boundary conditions). The most typical condition required is the square integrable condition
$$\int\limits_{-\infty}^{\infty} \mid \psi(z) \mid^2 dz <\infty$$
Equation \eqref{eq1} is the quantum equivalent of a one degree of freedom Hamiltonian system. In classical mechanics, this system is always integrable in the sense that we are always able to express the solutions in terms of quadrature. In this quantum equivalent, it is no longer the case.

There has been various ways to define the meaning of $V$ quantum integrable. Here we focus on complete integrability, i.e. finding all the eigenfunctions, which is more restrictive to partial integrability cases, as in \cite{3} where only finitely many eigenstates are found. The closest notion to the one presented in this article is the following

\begin{defi}[Integrability definition in \cite{1}]
The equation \eqref{eq1} is said to be integrable if for all $E$ in the spectrum, the solutions of equation \eqref{eq1} are Liouvillian.
\end{defi}

This definition seems to contain all ``quantum integrable'' cases, at least in the case of discrete spectrum. However, there are several inconvenience
\begin{itemize}
\item This does not allow (at least a priori) to compute the set $S$. Indeed, the integrability check has to be done the other way. Assume we know the spectrum $S$, the system is integrable if and only if equation \eqref{eq1} has Liouvillian solutions. This can however, for a fixed $E$, be done algorithmically through the Kovacic algorithm \cite{2}.
\item The notion of integrability is strongly dependant of the boundary conditions. This happens for example in the following equation
$$\frac{d^2\psi}{dz^2} +(z^{-1}+E)\psi(z)=0$$
If the boundary condition is being square integrable on $\mathbb{R}^+$, then the system is integrable with a discrete spectrum $\mathcal{S}$. If we look for square integrable solutions on $\mathbb{R}^-$ there are none, and if only near $0$ then $\mathcal{S}=\mathbb{C}$. In this continuous case, the system is not integrable in the above sense.
\item The Liouvillian condition is somewhat arbitrary. In particular, there are other functions which are dubbed ``nice'' but not Liouvillian, see \cite{4,5,6}.
\end{itemize}

So our purpose is to build a definition of integrability which is as most as possible independent of the boundary conditions, which allows to compute in an algebraic manner the spectrum, and which is large enough to contain all cases dubbed to be ``quantum integrable''. For this we construct a class of functions in section $2$, we name ``rigid functions'', the name coming from the notion of rigid operators introduced by Katz \cite{8}, to which they are closely linked.

\begin{defi}\label{def1}
A potential $V\in\mathbb{C}(z)$ is said to be quantum integrable if for all $E\in \mathbb{C}$, the solutions of equation \eqref{eq1} are rigid functions.
\end{defi}

In section $2$ is also introduced a notion of natural boundary conditions. The boundary conditions are natural if they can be expressed in terms of monodromy and Stokes matrices, see Definition \ref{defnat}. We prove in particular that the classical square integrability condition is ``almost'' equivalent to a natural boundary condition, in the sense that there exists a natural boundary condition which gives an infinite discrete set of energies containing those for which the square integrability condition is satisfied (as long as this one is not always satisfied).
We will see moreover how to compute explicitly the spectrum from the expressions of solutions in terms of rigid functions and boundary conditions. The quantum integrable potentials split naturally in two categories. The discrete type, for which there exist natural boundary conditions leading to an infinite countable spectrum, and the continuous one, for which any natural boundary condition leads to finite or continuous spectrum. The latter is related to isomonodromic deformations. In section $3$, we prove Theorem \ref{thmmain0}, the quantum integrable potentials $V$ should have eigenfunctions of $4$ possible forms. In section $4$, we prove the main theorem of the article, a classification of quantum integrable rational potentials. The families of integrable potentials are generated by Pade interpolation/series. The section $5$ and the Appendix is devoted to examples and the explicit generation of the quantum integrable potentials of these families, as their presentation uses Pade interpolation and Pade series which makes their construction not immediate, although straightforward. Among them, two physically interesting new quantum integrable potentials are solved in details
$$V(z)=-z^2-2-\frac{8}{2z^2+1}+\frac{16}{(2z^2+1)^2}$$
$$V(z)=\frac{1}{z}-\frac{4}{z^2+2z+2}+\frac{8}{(z^2+2z+2)^2}$$

For the rest of the article, we will note $\mathcal{W}(\mu,\nu,z)$ a non zero solution of the differential equation
$$y''(z)+\left(-\frac{1}{4}+\frac{\mu}{z}+\frac{1/4-\nu^2}{z^2}\right) y(z)=0.$$
Moreover, from now on, the $'$ will be the differentiation in $z$.\\

\begin{thm}\label{thmmain0}
If $V\in\mathbb{C}(z)$ is quantum integrable, then up to affine coordinate change and addition of a constant to $V$ the Schroedinger equation has solutions of one of the following forms
\begin{align*}
\psi(z,E) & =\frac{z^{3/2}\left(\frac{M(z,E)}{2z}\mathcal{W}(E/4,\nu,z^2)+\mathcal{W}'(E/4,\nu,z^2)\right)}{\sqrt{M(z,E)^2z^2+M(z,E)z-M'(z,E)z^2-z^4+z^2E-4\nu^2+1}}\\
\psi(z,E) & =\frac{z\left(\frac{M(z,E)}{\sqrt{-4E}}\mathcal{W}((-4E)^{-1/2},\nu,z\sqrt{-4E})+\mathcal{W}'((-4E)^{-1/2},\nu,z\sqrt{-4E})\right)}
{\sqrt{4M(z,E)^2z^2+4z^2E-4M'(z,E)z^2-4\nu^2+4z+1}}\\
\psi(z,E) & =\frac{z \left ( \frac{M(z,E)}{\sqrt{-4E}} \mathcal{W}(0,\nu,z\sqrt{-4E})+\mathcal{W}'(0,\nu,z\sqrt{-4E}) \right)}{\sqrt{4M(z,E)^2z^2+4z^2E-4M'(z,E)z^2-4\nu^2+1}}\\
\psi(z,E) & =\frac{\frac{i(-M(z,E)(z+E)/2+1/8)}{(z+E)^{3/2}}\mathcal{W}(0,\frac{1}{3},\frac{4i}{3}(z+E)^{3/2})+\mathcal{W}'(0,\frac{1}{3},\frac{4i}{3}(z+E)^{3/2})}
{(z+E)^{-1/4}\sqrt{M(z,E)^2+E-M'(z,E)+z}}
\end{align*}
with $M(z,E)$ rational in $z,E$.
\end{thm}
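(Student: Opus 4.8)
The plan is to reduce quantum integrability of $V$ to a classification of the second order operators $L_E=\partial^2+(V+E)$ whose solution space, for every $E\in\mathbb{C}$, consists of rigid functions, and then to identify these operators with rational pullbacks of a single universal rigid operator. First I would use the construction of rigid functions in Section~2 to record that the non-Liouvillian content of any rigid function is carried by a rigid operator in the sense of Katz. Applied to the rank two operator $L_E$, this forces $L_E$ to be projectively (gauge) equivalent to a rational pullback $f^\ast$ of an irreducible rigid rank two operator; since the only irreducible rigid rank two operators compatible with a rational potential are confluent hypergeometric, the universal model can be taken to be the Whittaker operator defining $\mathcal{W}(\mu,\nu,\cdot)$ (with the Bessel case $\mu=0$ and the Airy case arising as degenerations and pullbacks of it). Thus every solution of $L_E$ is, up to an algebraic gauge factor, of the form $g(z)\,\mathcal{W}(\mu,\nu,f(z))$ together with its derivative, for some rational pullback $f$ and parameters $\mu,\nu$ possibly depending on $E$.

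Next I would classify the admissible pullbacks $f$. The slope of the irregular singularity of $L_E$ at $\infty$ is $d/2+1$, where $d=\deg_\infty V$, because the exponential behaviour of the solutions is $\exp(\int\sqrt{-V-E})$. Pulling back the rank one singularity of the Whittaker equation at its point at infinity, the admissible values of this slope are exactly $1$, $3/2$ and $2$: rigidity, i.e. the requirement that the rigidity index remain equal to $2$ for all $E$, bounds the total irregularity and thereby excludes every higher slope as well as every additional genuine irregular singularity. These three slopes correspond respectively to the linear pullback $f(z)=z\sqrt{-4E}$ (so $V\to 0$ at infinity, the Coulomb/Bessel family), the $3/2$-power pullback $f(z)=\tfrac{4i}{3}(z+E)^{3/2}$ (so $V\sim z$, the Airy family), and the quadratic pullback $f(z)=z^2$ (so $V\sim -z^2$, the oscillator family). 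Imposing that $V$ itself be rational and independent of $E$ then pins down how $E$ must enter the parameters and the scaling: one finds $\mu=E/4$ in the oscillator case, $\mu=(-4E)^{-1/2}$ with the scaling $z\sqrt{-4E}$ in the Coulomb case, $\mu=0$ with the same scaling in the Bessel case, and $\mu=0,\ \nu=1/3$ in the Airy case, where the shift $z\mapsto z+E$ is dictated by $V+E=z+E$. The affine coordinate change and the additive normalisation of $V$ allowed in the statement are used precisely to bring $f$ to these canonical shapes.

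Finally I would account for the remaining poles of $V$. After fixing the base pullback, any further pole of $V$ must be an apparent singularity, since a genuine one would create local monodromy or Stokes data incompatible with rigidity holding for every $E$; such poles are produced by a finite chain of rational Darboux transformations applied to the base solution, whose accumulated logarithmic derivative is a rational function $M(z,E)$. Writing the Darboux-dressed solution as $(\partial+M)$ applied to the pulled-back $\mathcal{W}$, then multiplying by the factor $\sqrt{f'(z)}$ needed to return the pullback to normal form and by the normalising reciprocal square root, produces exactly the four displayed expressions; the polynomials under the square roots are the normal-form normalisation of the Darboux transform, exhibiting the characteristic Riccati combination $M^2-M'$, and the prefactors $z^{3/2}$, $z$ and $(z+E)^{-1/4}$ are the corresponding $\sqrt{f'}$.

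The main obstacle is the exhaustiveness claimed in the middle step: showing that no other configuration survives. Concretely, one must prove the slope bound rigorously, ruling out $\deg_\infty V\ge 3$, several irregular points, and every alternative $E$-dependence, from the single requirement that $L_E$ be a pullback of the rigid Whittaker operator for all $E$ simultaneously; and then verify that the Darboux dressing never destroys rigidity, i.e. that it only adds apparent singularities and keeps $M$ rational in both $z$ and $E$. This uniform combinatorial control of the local data at $\infty$ and at the poles of $V$ is the delicate heart of the argument, after which the reconstruction of the explicit formulas is a routine, if lengthy, normal-form computation.
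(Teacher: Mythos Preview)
Your outline has the right skeleton --- reduce to gauge-pullbacks of the Whittaker equation, classify the admissible pullbacks, then interpret the residual gauge as a rational $M$ --- but two of the load-bearing steps are not the ones the paper uses, and one of them looks genuinely problematic.

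\medskip
\textbf{The Liouvillian branch is missing.} You open by saying the non-Liouvillian content of a rigid function is carried by a Katz-rigid operator, and then never return to the Liouvillian alternative. The paper treats it at length: Theorem~\ref{thmrigid} lists several Galois-rigid possibilities with solvable group (diagonal, dihedral, logarithm, finite), and Section~3.3 is devoted to showing that the dihedral case is impossible for a Schr\"odinger equation and that the diagonal (hyperexponential) case can be absorbed into case~3 with $\nu=1/2$ (Lemma~\ref{lemhyper}). Without this analysis you have not proved that every quantum integrable $V$ falls into one of the four forms.

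\medskip
\textbf{The slope bound is obtained differently, and your version needs justification.} You exclude $\deg_\infty V\ge 3$ by appealing to a ``rigidity index equal to $2$ for all $E$'' constraint. But the hypothesis is only that the \emph{solutions} of $L_E$ are rigid functions, not that $L_E$ itself is a rigid operator; these are not equivalent, and the paper never computes a rigidity index for $L_E$. The paper's mechanism is instead an $E\to\infty$ argument: in Proposition~\ref{proppul2}, for $n\ge 3$ one shows that $f,\mu,\nu$ are forced to be independent of $E$, whence $\psi''/\psi$ has a finite limit as $E\to\infty$, contradicting $\psi''/\psi=V+E$. The remaining cases $n=2,1,-1,\le -2$ are sorted by the explicit $E$-dependence of the asymptotic exponents listed in Section~3.1, combined with the no-mobile-singularity restriction on $f$ from Proposition~\ref{proppull}. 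If you want to keep your slope/rigidity-index route you must first argue that ``solutions are rigid functions for all $E$'' really does force $L_E$ to be a rational pullback of a rigid operator with controlled irregularity; as stated this is an assertion, not a proof.

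\medskip
\textbf{The extra poles of $V$ are not apparent singularities.} You write that any further pole of $V$ must be apparent ``since a genuine one would create local monodromy or Stokes data incompatible with rigidity''. This is false in the paper's setting: the potentials produced in Theorem~\ref{thmmain} typically have genuine (non-apparent) regular singularities, and the constraint the paper actually uses is only that singularities be \emph{non-mobile} (independent of $E$). The function $M$ is a rational gauge transformation, not a chain of Darboux transformations creating apparent singularities, and its rationality in $z,E$ is argued in Section~3.4 via the Galois action on an algebraic gauge, not via a Darboux dressing argument.
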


The case $M=\infty$ has to be included, and effectively leads to quantum integrable potentials. Remark that given a solution $\psi$ of the Schroedinger equation, we can recover the potential as $V+E=-\psi''/\psi$. Thus the function $M$, and even its restriction to a generic value of $E$, completely defines the potential $V$ in the above expressions. In particular, the potential $V(z)+E$ can be written as a rational function of $z,E,M$ and its derivatives. Remark that however, all rational $M$ do not lead to potentials, as $-\psi''/\psi$ should be of the form $V(z)+E$. This will be the condition to obtain an integrable potential. We now present the classification results, i.e. a set of $M$ functions leading to all quantum integrable potentials $V$.

\begin{thm}\label{thmmain}
A quantum integrable potential $V\in\mathbb{C}(z)$ comes from a function $M$ given by
\begin{itemize}
\item In case $1$ of Theorem \ref{thmmain0}, the rational interpolation with numerator denominators degrees in $E$ less than $n/2,(n-1)/2$ given by
$M(z,\epsilon_1(4k+2)+4\epsilon_2\nu)=$
\begin{equation}\label{eqricsol}
-\frac{\partial}{\partial z} \ln\left( z^{2\epsilon_1\epsilon_2\nu+1}e^{\epsilon_1z^2/2} \!_1F_1(-k,2\epsilon_1\epsilon_2\nu+1,\epsilon_1z^2) \right)
\end{equation}
for $n$ points of the form $\epsilon_1(4k+2)+4\epsilon_2\nu,\; k\in\mathbb{N},\epsilon_1,\epsilon_2=\pm 1$
\item In case $2$ of Theorem \ref{thmmain0}, the rational interpolation with numerator denominators degrees in $E$ less than $n/2,(n-1)/2$ given by
$M(z,-(2\epsilon\nu+2k+1)^{-2})=$
\begin{align}\begin{split}\label{eqricsol2}
-\frac{\partial}{\partial z} \ln\left(z^{\epsilon\nu+1/2}e^{-\frac{z}{2\epsilon\nu+2k+1}} \!_1F_1\left(-k,2\epsilon\nu+1,\frac{2z}{2\epsilon\nu+2k+1}\right) \right)
\end{split}\end{align}
for $n$ points $-(2\epsilon\nu+2k+1)^{-2}$ with $k\in\mathbb{N},\epsilon=\pm 1$.
\item In case $3$ of Theorem \ref{thmmain0} the singular $M=\infty$.
\item In case $3$ of Theorem \ref{thmmain0} with $\nu = 0$, the rational function with numerator denominators degrees in $E$ less than $n/2,(n-1)/2$ defined by the series
$$M(z,E)=-\frac{\partial}{\partial z} \ln\left(\sum\limits_{i=0}^{n-1} D^i F(z) E^{n-1-i} \right) +O(E^n)$$
with $D=-\partial_z^2-1/(4z^2)$, $F(z)=P_1(z^2)+\ln z P_2(z^2)$ and $\deg P_1= n-1, \deg P_2\leq n/2-1$.
\item In case $3$ of Theorem \ref{thmmain0} with $\nu=1/2$, the rational function with numerator denominators degrees in $E$ less than $n/2,(n-1)/2$ defined by the series
$$M(z,E)=-\frac{\partial}{\partial z} \ln\left(\sum\limits_{i=0}^{n-1} (-1)^{i}\partial_z^{2i} F(z) E^{n-1-i}\right) +O(E^n)$$
with $F$ polynomial, $\deg F= 2n-1 \hbox{ or } 2n-2$.
\item In case $4$ of Theorem \ref{thmmain0}, the singular $M=\infty$.
\end{itemize}
\end{thm}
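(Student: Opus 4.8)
The plan is to exploit the remark following Theorem \ref{thmmain0}: once $\psi$ is written in one of the four forms, the potential is recovered by $V+E=-\psi''/\psi$, so the whole problem reduces to determining for which rational $M(z,E)$ the quantity $-\psi''/\psi$ has the form $V(z)+E$ with $V\in\mathbb{C}(z)$ independent of $E$. I would first recognise each of the four forms as a Darboux transformation of an exactly solvable base equation: since $\mathcal{W}$ solves the Whittaker equation, in case $1$ the base potential is the (shifted) harmonic oscillator $-z^2$, in case $2$ the Coulomb potential $1/z$, in case $3$ the inverse--square/free potential, and in case $4$ the linear (Airy) potential. In each case the bracket $\frac{M}{\cdot}\mathcal{W}+\mathcal{W}'$ is the first--order Darboux operator $\partial_z-M$ applied to the base solution, and the square root in the denominator is precisely the normalisation (a Wronskian) making $\psi$ a genuine solution. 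Hence $M=-\partial_z\ln\phi$ for a seed $\phi$, and evaluating $-\psi''/\psi$ expresses $V+E$ rationally in $z,E,M,M'$.

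The second step is to impose $\partial_E\big(-\psi''/\psi\big)=1$, i.e. that $V$ is independent of $E$. After substituting the Darboux expression this becomes a Riccati--type constraint on $M$ in the variable $z$ whose coefficients are explicit in $E$; equivalently, the seed $\phi=\exp(-\int M\,dz)$ must solve the base equation at energy $E$ up to a controlled defect. Differentiating this relation in $E$ and using that $V$ is rational in $z$ with poles of bounded order, the zeros of the denominator must remain apparent singularities, with trivial local monodromy, for every $E$ since the potential stays fixed. These uniform apparent--singularity conditions force $M$ to be a rational function of $E$ with the stated degree bounds $n/2$ and $(n-1)/2$; I would count these parameters to match the number of interpolation or Pade conditions.

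The third step is to pin down the interpolation data. For the discrete families (cases $1$ and $2$) the base equation admits, at the arithmetic energies $\epsilon_1(4k+2)+4\epsilon_2\nu$ respectively $-(2\epsilon\nu+2k+1)^{-2}$, a distinguished solution for which the confluent hypergeometric series ${}_1F_1(-k,\cdot,\cdot)$ terminates into a Laguerre polynomial; integrability requires $M$ to coincide there with the logarithmic derivative of that solution, which is exactly formulas \eqref{eqricsol} and \eqref{eqricsol2}, and rationality in $E$ with the degree bounds then makes $M$ the unique rational interpolant through these $n$ nodes. For the continuous cases ($3$ with $\nu=0$ or $\nu=1/2$, and $4$) there is no infinite discrete family, so instead I would expand the seed at $E=\infty$: writing $\phi=\sum_{i=0}^{n-1}D^iF\,E^{n-1-i}$ with $D$ the base operator one checks $(D-E)\phi=D^nF-E^nF$, so a seed $F$ of the prescribed degree gives $D^nF=0$ and hence an exact eigenfunction modulo $E^n$, making the Pade condition $M=-\partial_z\ln\phi+O(E^n)$ the asymptotic analogue of interpolation.

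The last step is the converse: each displayed $M$ must be shown to produce a genuine rational, quantum integrable potential, which follows because the Darboux/Pade construction along an apparent singularity preserves rigidity of the underlying operator, so the resulting solutions remain rigid functions for all $E$. The main obstacle I expect is the second step --- turning the single scalar condition $\partial_E V=0$ into the precise rational--interpolation structure and the exact degree bounds --- since this requires controlling the apparent singularities uniformly in $E$ and, in the logarithmic cases $\nu\in\{0,1/2\}$, handling the $\ln z$ term in the seed $F=P_1(z^2)+\ln z\,P_2(z^2)$ without destroying rationality of $V$.
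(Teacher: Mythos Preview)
Your framework is broadly correct --- the four forms are indeed gauge/Darboux transformations of the base Whittaker/Bessel/Airy equations, and the problem does reduce to finding the rational $M$ for which $-\psi''/\psi$ is independent of $E$. However, the mechanism you propose for step two is too vague to yield the classification, and several concrete pieces of the paper's argument are missing.

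The paper does not work with the condition $\partial_E V=0$ directly. Instead it observes that the square--root denominator $H(z,E)$ in each formula of Theorem~\ref{thmmain0} is itself an explicit polynomial in $M,M',z,E$, and proves (Proposition~\ref{propsing}) that $\psi$ solves a Schr\"odinger equation \emph{if and only if} $H$ factors as $w(E)P(z)/Q(z,E)$. The degree bound on $M$ then comes from the inequality $\deg_E w\ge\deg_E\operatorname{numer}(M)+\deg_E\operatorname{denom}(M)+1$, not from a parameter count. This factorisation is the engine of the whole proof: at each root $E_0$ of $w$ the equation $H=0$ becomes a \emph{linear} second--order ODE in $Y=\exp(-\!\int M)$ (the four equations \eqref{eqsing}), and rationality of $M$ forces $Y(z,E_0)$ to be hyperexponential. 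For cases $1$ and $2$ this is a Whittaker equation, whose hyperexponential solutions exist precisely at the arithmetic energies you list (Lemma~\ref{lem1}); for case $3$ it is a Bessel equation, hyperexponential only at $E_0=0$, forcing $w(E)=E^k$; for case $4$ it is Airy, which is never hyperexponential, so $w$ has no roots and only $M=\infty$ survives. Your ``apparent singularity'' language gestures at this but does not isolate the factorisation or the resulting linear ODEs at the roots of $w$.

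Two further gaps: (i) You do not address root multiplicities of $w$. The paper devotes Lemmas~\ref{lem2} and~\ref{lem3} to showing that double roots occur only for half--integer $\nu$ (and are then limits of the simple--root case) while triple roots are impossible; the latter requires a nontrivial holonomic residue computation. Without this, the interpolation description is incomplete. (ii) In case $3$ with $\nu\notin\tfrac12\mathbb{Z}$ you give no argument; the paper shows the unique rational series solution of the Riccati equation reproduces a Bessel shift and hence the same eigenfunction as $M=\infty$. Finally, a small error: the Pad\'e series in cases $\nu=0,1/2$ are expansions at $E=0$ (since $w(E)=E^k$), not at $E=\infty$ as you wrote; your telescoping identity $(D-E)\phi=D^nF-E^nF$ is correct and is essentially how the paper builds the seed, but the rationality of $M$ when $F$ contains $\ln z$ (the $\nu=0$ case) requires a separate inductive argument on the valuations of the Riccati coefficients, which you flag but do not supply.
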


\noindent
\textbf{Remarks}\\
The case $n=0$ (no interpolation points or series) will conventionally give $M=\infty$ (constant infinite function) and this convention allows to recover the potentials $z^2+\alpha/z^2,1/z+\alpha/z^2,z,\alpha/z^2$ which are singular cases in our classification.\\
The interpolation points could be not distinct: for specific values of $\nu$, two interpolations points given by different $k,\epsilon_1,\epsilon_2$ can be equal. The rational interpolation is then given by a limit process when $\nu$ tends to the specific value.\\
The $M$ function used to express a quantum integrable potential is not unique. This is due to recurrence relations between Whittaker functions. This induces a homographic transformation on $M$, and so infinitely many $M$ can give the same potential.\\

\section{Quantum integrability of 1D rational potentials}

As said before, a quantum problem is given by a potential $V\in\mathbb{C}(z)$ and some additional conditions on the solutions we are searching. We want an integrability definition that is as generic as possible, i.e. not depending on these boundary conditions but only to the potential $V$. Still some boundary conditions seem more natural than others. For example, asking that a solution should vanish on some fixed point seem too arbitrary to be acceptable. Indeed, this condition has the physical sense of an infinite wall at an arbitrary point, and so adding such boundary condition corresponds to the transformation $V(z) \rightarrow V(z)+\delta_a(z)$ where $\delta_a$ is a Dirac at $a\in\mathbb{C}$. This can be understood as a modification of the potential (adding a singularity to $V$) more than just a boundary condition for the quantum problem. So we need to restrict ourselves to ``admissible'' boundary conditions.

\subsection{Natural boundary conditions}

\begin{defi}[singularities]
Let us consider a linear differential equation
$$a_n(t)y^{(n)}(t)+ \dots + a_0(t)y(t)=0$$
with $a_i$ polynomials, $a_n\neq 0$ and relatively prime. The roots of $a_n$ are called singularities. If $\alpha$ is not a root of $a_n$, $\alpha$ is called a regular point. At a singularity $\alpha$, if the systems admits a converging Puiseux series (possibly with logs) basis of solutions, the point $\alpha$ is called singular regular, else $\alpha$ is called singular irregular. If moreover these Puiseux series are Laurent series, we call $\alpha$ a meromorphic singularity, and if polynomial series, apparent singularity.
\end{defi}

Near a meromorphic singularity, the solutions of the differential equation are univalued. In the even more special case of an apparent singularity, the point $\alpha$ is not a singularity for any solution of the differential equation (so is the origin of ``apparent'').\\

\begin{defi}[monodromy]\label{defmon}
Let us consider a linear differential equation
$$a_n(t)y^{(n)}(t)+ \dots + a_0(t)y(t)=0$$
with $a_i$ polynomials, $a_n\neq 0$ and relatively prime. Let $\alpha\in \mathbb{C}$ be a regular point, and $B$ a series basis solution at $\alpha$. Let $\gamma \subset \mathbb{C}$ be a closed oriented curve not containing singular points, with $\alpha\in\gamma$. By analytic continuation, we can extend the basis of solution $B$ at $\alpha$ along $\gamma$. After one loop, we obtain a solution basis $B'$. As $B,B'$ are both solutions basis at $\alpha$, there exists a matrix $M_\gamma$ such that $B'=B M_\gamma$ called \textit{the monodromy matrix along $\gamma$}.
\end{defi}

Remark that if the monodromy around a point is trivial, then it is either a regular point or at worst a meromorphic singularity. Indeed, the monodromy around a point encode the local multivaluation of the solutions of the differential equation.

\begin{defi}[Stokes]\label{defsto}
Let us consider a linear differential equation
$$a_n(t)y^{(n)}(t)+ \dots + a_0(t)y(t)=0$$
with $a_i$ polynomials, $a_n\neq 0$ and relatively prime. Let $\alpha\in \mathbb{C}$ be a singular irregular point. We can construct a basis of solutions at $\alpha$ with formal power series of the form
$$e^{\sum\limits_{i=1}^n c_i (z-\alpha)^{-i/p}} (z-\alpha)^\gamma \ln(z-\alpha)^k \sum\limits_{i=0}^\infty b_i (z-\alpha)^i$$ 
Such formal series solution can be identified with a Gevrey function solution of the differential equation following a particular direction towards $\alpha$ except for finitely many directions, called singular directions,and the directions between them called sectors. For each sector, the formal basis is identified to a Gevrey function basis, and going from one sector to the next defines a basis change, \textit{the Stokes matrix}. The monodromy matrix generated by the truncated formal series solution along a small loop around $\alpha$ is called the \textit{formal monodromy}. The monodromy matrix defined as in definition \ref{defmon} above along a small loop around $\alpha$ is called the \textit{true monodromy}.
\end{defi}

Remark that along a small loop around a singular regular point, it is easy to compute the monodromy matrix. The difficulty appear when $\gamma$ encompass several singularities. The path $\gamma$ can be deformed, but still we need to know how a Puiseux series solutions at one singularity reconnects with another at the other singularity. Let us now define the notion of natural boundary conditions.

\begin{defi}[Natural boundary conditions]\label{defnat}
Let us consider equation \eqref{eq1}. A natural boundary condition on solutions of \eqref{eq1} can be written under the form
$$\prod\limits_{i=1}^m M_i^{w_i} \in J$$
where $M_i$ are Stokes or monodromy matrices, $w_i\in \mathbb{Z}$ and $J$ a set of conjugacy class of matrices.
\end{defi}

The product encodes a path with integer turns around singularities and integer many crossing of singular directions. The fact that the condition has to be about a conjugacy class of matrices and not equal to a particular matrix is because of the arbitrary initial basis choice. Indeed, we have fixed a common point and basis arbitrary, and so if we want to get rid of this arbitrary choice, we need to consider that the matrices $M_i$ are defined up a \textbf{common} basis change:
$$(M_1,\dots, M_m) \rightarrow (P^{-1}M_1P,\dots, P^{-1}M_m P) \qquad P\in GL_2(\mathbb{C})$$

\begin{prop}
The spectrum for equation \eqref{eq1} with natural boundary conditions is the set of roots of
\begin{itemize}
\item a holomorphic function $f(E)$ if $\lim\limits_{z=\infty} V(z)=\infty$
\item a holomorphic function $f(\ln E), E\in\mathbb{C}^*$ if $\lim\limits_{z=\infty} V(z)=0$
\end{itemize}
\end{prop}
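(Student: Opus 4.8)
The plan is to turn the matrix membership condition into a single scalar equation and then to track the analytic dependence of that scalar on the parameter $E$.

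First I would use that equation \eqref{eq1} has no first order term: every solution basis has constant Wronskian, so all monodromy matrices lie in $SL_2(\mathbb{C})$ and all Stokes matrices are unipotent. Hence $\prod_{i=1}^m M_i^{w_i}$ has determinant $1$ for every $E$, and for a matrix of determinant $1$ the conjugacy class is determined by its trace (the only exceptions, $\mathrm{tr}=\pm 2$ with a nontrivial nilpotent part, being themselves trace conditions). Membership in the set $J$ of conjugacy classes therefore amounts to one or finitely many equations $\mathrm{tr}\big(\prod_i M_i^{w_i}(E)\big)=t$, and the spectrum is the zero set of
$$f(E)=\prod_{t}\Big(\mathrm{tr}\big(\textstyle\prod_i M_i^{w_i}(E)\big)-t\Big).$$
It remains to show that $f$ is holomorphic in the claimed variable. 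Second, I would establish holomorphic dependence of the connection data on $E$. As $E$ ranges over $\mathbb{C}$ the finite singularities of \eqref{eq1} are the poles of $V$; adding the constant $E$ changes neither their positions nor their polar parts, so the local exponents there are $E$-independent. By the classical theorem on analytic dependence of solutions of a linear ODE on a parameter entering holomorphically, any fundamental system on a fixed simply connected domain depends holomorphically on $E$, and hence so do all connection matrices between local bases at regular and regular singular points; their entries are entire in $E$.

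Third, and this is the main obstacle, I would control the Stokes matrices at the irregular singularity $\infty$. The formal exponentials there are $e^{\pm\int\sqrt{-(V+E)}}$, and the Stokes structure (the anti-Stokes directions and their combinatorics) is locally constant as long as the leading exponential term does not degenerate, in which case the parametric theory of multisummation yields Stokes matrices depending holomorphically on $E$. If $\lim_{z\to\infty}V=\infty$, the leading term of $\sqrt{-(V+E)}$ is fixed by the top coefficient of $V$ and is $E$-independent, while $E$ perturbs only strictly lower order exponential terms; no anti-Stokes direction is ever created or annihilated, so the Stokes matrices are holomorphic on all of $\mathbb{C}$ and, combined with the second step, $f$ is entire in $E$. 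If instead $\lim_{z\to\infty}V=0$, the equation degenerates near $\infty$ to $\psi''+E\psi\approx 0$ with formal exponentials $e^{\pm i\sqrt{E}z}$ and a Whittaker exponent of order $1/\sqrt{E}$; these two exponentials coincide exactly at $E=0$, where the irregular type degenerates, so $E=0$ must be excluded. Away from $0$ all formal and Stokes data are holomorphic in $\sqrt{E}$ and $1/\sqrt{E}$, and setting $w=\ln E$ makes $\sqrt{E}=e^{w/2}$ and $1/\sqrt{E}=e^{-w/2}$ entire while unwinding the square-root ambiguity produced by looping around $E=0$; hence every matrix entry, and so $f$, is holomorphic in $\ln E$ on $E\in\mathbb{C}^*$.

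In both cases $f$ is holomorphic in the stated variable, and by construction the spectrum is exactly its zero set, which is the assertion; when $f\not\equiv 0$ this zero set is discrete (the discrete-spectrum case), whereas $f\equiv 0$ gives the whole plane (the continuous case). The delicate point to make rigorous is the holomorphy of the Stokes matrices across the entire parameter domain; this is where I would invoke multisummation with a parameter and verify that within each case no singular direction is crossed as $E$ varies.
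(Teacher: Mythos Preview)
Your argument follows the same skeleton as the paper's: the finite singularities and their local data are $E$-independent, the Stokes structure at $\infty$ is rigid when $V\to\infty$ and degenerates only at $E=0$ when $V\to 0$, hence the connection matrices are holomorphic on $\mathbb{C}$ in the first case and on the universal cover of $\mathbb{C}^*$ (i.e.\ in $\ln E$) in the second. The paper's proof is in fact terser than yours: it simply invokes analytic dependence on the parameter and the universal-cover argument, without ever saying how the conjugacy-class condition becomes a scalar equation. Your explicit reduction to a trace condition, and your concrete computation of the formal exponentials $e^{\pm i\sqrt{E}z}$ to explain the square-root monodromy in $E$, make the mechanism visible where the paper leaves it implicit.

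One small caveat on your trace reduction: for $\mathrm{tr}=\pm 2$ the trace does \emph{not} separate $\pm I$ from the nontrivial unipotent class, so if $J$ contains one but not the other your product formula for $f$ is not quite the spectrum. This is easily repaired---the extra constraint ``$\prod_i M_i^{w_i}=\pm I$'' is still a finite system of holomorphic equations in $E$, and in one complex variable the common zero set of finitely many holomorphic functions is again the zero set of a single holomorphic function (Weierstrass). The paper is equally vague on this point, so this is a refinement rather than a defect relative to the original argument.
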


Remark that if $V$ converges at infinity, we can always assume it converges to $0$ as we can always make an energy shift for $E$. 

\begin{proof}
Equation \eqref{eq1} comes with a parameter $E$, which plays a fundamental role. The monodromy and Stokes matrices depend a priori on this parameter. Let us first remark that the singularities of \eqref{eq1} do not move with respect to $E$. The same applies for singular directions, except possibly at infinity for $E=0$ for which singular direction crossing is possible: indeed, the asymptotic behaviour of solutions change when $E=0$ if $V$ tends to zero at infinity.

Thus when $\lim_{z=\infty} V(z)=\infty$, the monodromy and stokes can be globally defined on $\mathbb{C}$ as functions of $E$. And as our equation depends analytically on $E$, all these matrices are holomorphic functions of $E$. The natural boundary conditions are put on this matrices, and so this gives the first case of the corollary.

When $\lim_{z=\infty} V(z)=0$, the monodromy and stokes are defined on $\mathbb{C}^*$. However, $\mathbb{C}^*$ is not simply connected, and thus this does not immply that these matrices are globally defined on $\mathbb{C}^*$. We need to consider the universal covering of $\mathbb{C}^*$. Our equation depends analytically on $E$, all these matrices are locally holomorphic functions of $E$. Locally holomorphic functions on the universal covering of $\mathbb{C}^*$ are holomorphic function in $\ln E$, and thus the corollary follows.
\end{proof}\qed

Let us now remark that the matrices $M_i$ of Definition \ref{defnat} depend on $E$, and they are not well defined for $E=0$ when $\lim_{z=\infty} V(z)=0$. Thus in this case the problem to know whether $E=0$ belongs to the spectrum is not defined through this presentation of natural boundary conditions. This can be explicitly seen on the example $V(z)=1/z$, for which the solutions of the Schroedinger equation are
$$\psi(z,E)=\mathcal{W}\left(-\frac{i}{2\sqrt{E}},\frac{1}{2},2i\sqrt{E}z\right)$$
The case $E=0$ is a singularity of this equation, as singular directions at the irregular point $\infty$ are crossing. The Whittaker function simplifies in the Bessel function. So from now on, this problem will be skipped completely by assuming that $E\in\mathbb{C}^*$ when $V(z)$ converges (and then assuming it converges to $0$).

\begin{defi}
Let us consider Schroedinger equation \eqref{eq1} with $\lim_{z=\infty} V(z)=0$ if $V$ converges. We say that $V$ is of continuous type if, up to common basis change, the monodromy and Stokes matrices do not depend on $E\in \mathbb{C}^*$. Else the equation is said of discrete type.
\end{defi}

Said otherwise, in the continuous case, the transformation $V(z) \longrightarrow V(z)+\epsilon$ is an isomonodromic iso-Stokes deformation. Such kind of deformations are very rare, and have been analysed by Painleve, leading to the so called Painleve equations \cite{9}.

\subsection{The square integrability condition}

The most classical boundary condition is square integrability of one solution
$$\int_{\mathbb{R}} \mid \psi \mid^2 dz <\infty$$
Although this seems to be a global condition (and so the word boundary would be inappropriate), the solutions $\psi$ are always regular outside the singularities of the differential equation. So the condition of square integrability comes down to analysing the behaviour at singularities. For our definition of natural boundary condition to be reasonable, it should include this square integrability condition. This will not always be exactly the case, so let us define a little larger notion

\begin{defi}
Let us consider equation \eqref{eq1} with $\lim_{z=\infty} V(z)=0$ if convergent, and some boundary conditions. Let us note $\mathcal{S}$ the set of $E\in\mathbb{C}^*$ satisfying these boundary conditions. We say that these boundary conditions are almost natural if there exist natural boundary conditions, defining a set $\mathcal{C}$, and such that
$$\mathcal{S} \subset \mathcal{C} \qquad \dim \mathcal{S} =\dim \mathcal{C}$$
\end{defi}

In this definition, an almost natural condition is ``close'' to a natural boundary condition in the sense that if $\mathcal{S}$ is discrete infinite (the case with physical sense), then one can find a set $\mathcal{C}$ containing it which is also discrete infinite. With the set $\mathcal{C}$, the ``structure'' of the spectrum $\mathcal{S}$ is known, we just have to remove some ``errors''.

\begin{defi}
Let us consider $\alpha$ a real singularity of the Schroedinger equation \eqref{eq1}. We say that the singularity $\alpha$ is active if the space of formal series solutions at $\alpha$ contains exactly a subspace of dimension $1$ of square integrable near $\alpha$ formal series.
\end{defi}

\begin{prop}\label{proptri}
We consider the Schroedinger equation \eqref{eq1} with $\lim_{z=\infty} V(z)=0$ if $V$ converges. Let us consider a fixed based point and a series solution basis at this point. For all $\alpha_i\in\mathbb{R}$ active singularities of equation \eqref{eq1}, consider the Stokes matrices going from direction $\mathbb{R}^{-}$ to $\mathbb{R}^+$ and the true monodromy matrices, and denote $G$ the multiplicative group generated. Let us note
$$\mathcal{S}=\{E\in\mathbb{C}^*,\exists \psi \hbox{ solution of \eqref{eq1} with }\int_{\mathbb{R}} \mid \psi \mid^2 dz <\infty \}$$
$$\mathcal{C}_1=\{E\in\mathbb{C}^*, G \hbox{ cotriangularizable} \}$$
$$\mathcal{C}_2=\{E\in\mathbb{C}^*, G \hbox{ codiagonalizable} \}$$
$$\mathcal{C}_3=\{E\in\mathbb{C}^*, G \subset I_2\}$$
Assume $\mathcal{S}\neq \mathbb{C}^*,\emptyset$. Then there is one $\mathcal{C}_i$ discrete countable such that $\mathcal{S} \subset \mathcal{C}_i$.
\end{prop}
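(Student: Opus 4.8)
The plan is to convert the analytic condition of square integrability into the existence of a common invariant line for $G$, and then to control the three sets $\mathcal{C}_i$ by analyticity in $E$. First I would localize: since any solution is holomorphic away from the finitely many real singularities and from $\pm\infty$, finiteness of $\int_\mathbb{R}|\psi|^2$ is equivalent to $\psi$ being square integrable in a neighbourhood of each real singularity and decaying at $\pm\infty$. At a regular active singularity $\alpha_j$ the local exponents $\rho_1,\rho_2$ give solutions $\sim (z-\alpha_j)^{\rho_k}$, and integrability of $|\psi|^2$ near $\alpha_j$ selects exactly those with $\mathrm{Re}\,\rho_k>-1/2$; activeness means this is one line $L_j$, which, because the local monodromy multiplies $(z-\alpha_j)^{\rho_k}$ by $e^{2\pi i\rho_k}$, is an eigenline of the true monodromy $M_{\alpha_j}$. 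At the irregular point $\infty$ (where $V\to0$ and the dominant balance is $e^{\pm i\sqrt E z}$), square integrability as $z\to+\infty$ and as $z\to-\infty$ selects the two recessive lines $R_+,R_-$; since the Stokes matrices here are unipotent and fix the recessive solution, $R_\pm$ are the invariant lines attached to the Stokes data on either side, and the $\mathbb{R}^-\to\mathbb{R}^+$ Stokes matrix is precisely the operator relating them.

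A globally square integrable solution therefore exists precisely when, after transport to the fixed basis at the base point, the lines $L_1,\dots,L_r$ and $R_+,R_-$ all coincide in a single line $\ell_\psi$. Such an $\ell_\psi$ is simultaneously an eigenline of every $M_{\alpha_j}$ and invariant under the $\mathbb{R}^-\to\mathbb{R}^+$ Stokes matrix, hence a common invariant line of all generators of $G$; equivalently $G$ is cotriangularizable. This proves $\mathcal{S}\subseteq\mathcal{C}_1$ in all cases. I only use one inclusion: a common invariant line need not be the square integrable one, which is why $\mathcal{C}_1$ can be strictly larger than $\mathcal{S}$.

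Next I would invoke the previous proposition: the monodromy and Stokes matrices are holomorphic in $E$ on the universal cover of $\mathbb{C}^*$, which is one dimensional and simply connected. Cotriangularizability, codiagonalizability and triviality are cut out by the vanishing of analytic functions of the matrix entries (a resultant forcing a shared eigenvector, then a second one, then vanishing of all commutators), so each $\mathcal{C}_i$ is either discrete countable or all of $\mathbb{C}^*$, and $\mathcal{C}_3\subseteq\mathcal{C}_2\subseteq\mathcal{C}_1$. I would then take $i$ to be the least index with $\mathcal{C}_i\neq\mathbb{C}^*$, so that $\mathcal{C}_i$ is discrete, and show $\mathcal{S}\subseteq\mathcal{C}_i$. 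If $i=1$ this is the inclusion already proved. If $i>1$ then $\mathcal{C}_{i-1}=\mathbb{C}^*$, i.e.\ $G$ carries a persistent family $\ell_0(E)$ of common invariant lines; for $E\in\mathcal{S}$ the square integrable line $\ell_\psi(E)$ is again invariant, and if it differs from the persistent ones it supplies an extra common invariant line (a second one, resp.\ a third forcing $G$ scalar), placing $E$ in the smaller set $\mathcal{C}_i$.

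The step I expect to be the crux is exactly this refinement, namely ruling out that $\ell_\psi$ silently coincides with a persistent invariant line. The locus where a fixed analytic invariant line is itself square integrable is again analytic, hence discrete or total; the total case forces every $E$ into $\mathcal{S}$ and contradicts $\mathcal{S}\neq\mathbb{C}^*$, while the residual isolated coincidences must be reabsorbed into the closed set $\mathcal{C}_i$ by a local degeneration argument at those points. The hypothesis $\mathcal{S}\neq\emptyset$ is used to exclude singularities admitting no square integrable local solution, and $\mathcal{S}\neq\mathbb{C}^*$ is used precisely to guarantee that the descent terminates at a proper, hence discrete, $\mathcal{C}_i$ rather than at $\mathcal{C}_3=\mathbb{C}^*$. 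Alongside this, the other delicate point is the rigorous identification at $\infty$ of the recessive subspaces $R_\pm$ with the invariant line of the $\mathbb{R}^-\to\mathbb{R}^+$ Stokes matrix, including the behaviour at the anti-Stokes values $E>0$, which I would treat by working at generic $E$ and extending by analyticity.
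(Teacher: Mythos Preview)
Your reduction of square integrability to the existence of a common invariant line for $G$, and hence the inclusion $\mathcal{S}\subset\mathcal{C}_1$, is sound and matches the paper. The analyticity step (each $\mathcal{C}_i$ is either discrete or all of $\mathbb{C}^*$) is also the same.

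The gap is in your refinement from $\mathcal{C}_1$ to $\mathcal{C}_2$ (and $\mathcal{C}_2$ to $\mathcal{C}_3$). You correctly identify the crux: when $\mathcal{C}_1=\mathbb{C}^*$ you must rule out that the square-integrable line $\ell_\psi$ coincides with the persistent invariant line $\ell_0$. But your handling of this is incomplete. First, the claim that ``the locus where $\ell_0(E)$ is square integrable is analytic'' needs justification, and more seriously, the persistent line $\ell_0(E)$ need not be globally well defined and single-valued as an analytic family (it can bifurcate at those $E$ where $G$ becomes diagonal). Second, even granting that the coincidence locus is discrete, you then assert that the isolated coincidence points ``must be reabsorbed into $\mathcal{C}_i$ by a local degeneration argument'': this is the missing idea. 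There is no a priori reason why an isolated $E_0$ with $\ell_\psi(E_0)=\ell_0(E_0)$ should satisfy the stronger condition defining $\mathcal{C}_i$, and you give no mechanism for it. As written, you only obtain $\mathcal{S}\subset\mathcal{C}_i\cup D$ with $D$ some other discrete set, which does not give the statement.

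The paper closes this gap by a structural, not analytic, device. At each active singularity $\alpha$ it introduces the generator $M_\alpha$ of the local differential Galois group (diagonal in the formal basis); this matrix lies in $G$ and, crucially, \emph{always} stabilises the local square-integrable line $E_{int}$, because $E_{int}$ is an eigenline for a fixed, $E$-independent eigenvalue of $M_\alpha$. When $\mathcal{C}_1=\mathbb{C}^*$ one conjugates $G$ to upper-triangular and (if $M_\alpha$ has distinct eigenvalues) further makes $M_\alpha$ diagonal; then $E_{int}$ is forced to be either the first or the second coordinate line, and this choice is determined by the eigenvalue, hence \emph{the same for every $E$}. If it were the first, $E_{int}$ would be $G$-invariant for all $E$ and $\mathcal{S}=\mathbb{C}^*$, contradiction; so it is the second, and for $E\in\mathcal{S}$ the group $G$ fixes two lines, giving $E\in\mathcal{C}_2$ directly, with no exceptional set. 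The passage $\mathcal{C}_2\to\mathcal{C}_3$ is handled by the same idea. The point you are missing is precisely this $E$-independent pinning of $E_{int}$ via the eigenvalues of $M_\alpha$.
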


Remark that the $\mathcal{C}_i$ come from natural boundary conditions. But we have not a priori $\mathcal{S}=\mathcal{C}_i$ for some $i$. This is however the case for typical quantum integrable physical systems as the potentials $V(z)=z^2,1/z$. If $\mathcal{S}$ is infinite countable (which is typically the interesting case), then the inclusion $\mathcal{S}\subset \mathcal{C}_i$ is strong, and so we can say that the square integrable condition is almost equal to a natural boundary condition as one of the $\mathcal{C}_i$ contains $\mathcal{S}$ and is of same dimension.\\

\begin{cor}
The square integrability condition is almost natural.
\end{cor}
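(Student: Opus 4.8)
The plan is to deduce this directly from Proposition \ref{proptri}, which has already carried out the analytic work of relating the square integrability set $\mathcal{S}$ to the cotriangularizability, codiagonalizability, or triviality of the group $G$ generated by the relevant Stokes and true monodromy matrices. Recall that $\mathcal{S}$ is \emph{almost natural} precisely when there exists a set $\mathcal{C}$ cut out by natural boundary conditions (in the sense of Definition \ref{defnat}) with $\mathcal{S}\subset\mathcal{C}$ and $\dim\mathcal{S}=\dim\mathcal{C}$. Hence the entire task reduces to exhibiting such a $\mathcal{C}$, and Proposition \ref{proptri} essentially hands it to us.

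First I would dispose of the two degenerate cases excluded by Proposition \ref{proptri}. If $\mathcal{S}=\mathbb{C}^*$, take for $\mathcal{C}$ the set defined by the trivial natural condition (the empty product lying in the conjugacy class of the identity), so $\mathcal{C}=\mathbb{C}^*$ and both the inclusion and the equality of dimensions are immediate. If $\mathcal{S}=\emptyset$, take a natural condition that is never satisfiable, e.g. requiring a monodromy matrix to lie in a conjugacy class disjoint from its actual range, so $\mathcal{C}=\emptyset$ and the requirements hold vacuously.

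The substantive case is $\mathcal{S}\neq\mathbb{C}^*,\emptyset$, where Proposition \ref{proptri} applies verbatim: it furnishes one of the sets $\mathcal{C}_1,\mathcal{C}_2,\mathcal{C}_3$ that is discrete countable and contains $\mathcal{S}$. As observed in the remark following that proposition, each $\mathcal{C}_i$ arises from natural boundary conditions, so I would set $\mathcal{C}=\mathcal{C}_i$ and obtain the inclusion $\mathcal{S}\subset\mathcal{C}$ for free. For the dimension equality, since $\mathcal{C}_i$ is discrete countable we have $\dim\mathcal{C}_i=0$; and $\mathcal{S}$, being a subset of a discrete set, is itself discrete, whence $\dim\mathcal{S}=0=\dim\mathcal{C}_i$. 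This is exactly the defining condition of an almost natural boundary condition, and the corollary follows with no further computation.

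The one point deserving care — and where I expect the only (modest) obstacle to lie — is the verification, asserted in the remark, that the three group-theoretic conditions defining the $\mathcal{C}_i$ genuinely translate into the matrix-membership form of Definition \ref{defnat}, uniformly in $E$. Concretely, codiagonalizability of the finitely many generators of $G$ is encoded by the vanishing of their pairwise commutators, each of the form $M_aM_bM_a^{-1}M_b^{-1}\in\{I_2\}$, while cotriangularizability (existence of a common eigenvector) is encoded by the trace conditions $\mathrm{tr}(M_aM_bM_a^{-1}M_b^{-1})=2$, i.e. membership of such a word in the set $J$ of trace-$2$ conjugacy classes. These are finitely many conditions, each of the prescribed product form, and their simultaneous imposition defines $\mathcal{C}_i$; the common-basis-change ambiguity of the $M_i$ is exactly absorbed by phrasing membership in terms of conjugacy classes. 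Granting this translation — which relies on the finite generation of $G$ and the rigidity of the $2\times 2$ situation — completes the argument.
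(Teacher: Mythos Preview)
Your proposal is correct and follows the same logical route as the paper: the Corollary is an immediate consequence of Proposition~\ref{proptri} together with the remark that each $\mathcal{C}_i$ is cut out by natural boundary conditions. The paper's proof environment placed after the Corollary is in fact the deferred proof of Proposition~\ref{proptri} itself, so there is nothing substantively different between your deduction and the paper's; you have simply taken the Proposition as given (as you are entitled to, since it is stated earlier) and spelled out the trivial degenerate cases $\mathcal{S}=\emptyset,\mathbb{C}^*$ a bit more explicitly than the paper does.
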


\begin{proof}
Let us consider an $E\in \mathcal{S}$ and $\alpha\in\mathbb{R}$ be a singularity. In the general irregular case with $\alpha\in\mathbb{R}$, we have such kind of behaviour
$$e^{\sum\limits_{i=1}^m c_i (z-\alpha)^{-i/p}} (z-\alpha)^\gamma \ln(z-\alpha)^k,\qquad p\in\mathbb{N}^*,k\in\{0,1\}, \; c_i,\gamma\in\mathbb{C}$$
A basis of such formal solutions lives in a differential field extension over the field of Laurent series, and thus we can attach to it a differential Galois group. Remark that the Schroedinger equation is unimodular, the Wronskian is constant, and so this group is in $SL_2(\mathbb{C})$. As it is diagonal, it is generated by one matrix, we will note in this proof $M_\alpha$. We can moreover assume that $M_\alpha$ is in $G$, as $M_\alpha$ belong to the local differential Galois group at $\alpha$ (i.e. the Galois group over the base field of meromorphic functions on an open neighbourhood of $\alpha$).

Depending on the parameters, these formal series can be square integrable or not. Let us denote $E_{int}$ the subspace of formal series square integrable. The dimension of this space can be $1,2$. As there is a square integrable solution, there is a non zero element of $E_{int}$ which is sent to $E_{int}$ be the Stokes matrix $S$ going from direction $\mathbb{R}^{-}$ to $\mathbb{R}^+$. And the same for the true monodromy. Let us remark that this is automatically satisfied if $E_{int}$ is of dimension $2$, so we can restrict ourselves to the case of dimension $1$, for which $\alpha$ is called an active singularity. This has to be satisfied simultaneously for all real singularities, and so all true monodromy matrices and Stokes matrices from $\mathbb{R}^{-}$ to $\mathbb{R}^+$ at active singularities have to stabilize a common vector space. In other words, the group $G$ has to be cotriangularizable. Thus $\mathcal{S} \subset \mathcal{C}_1$.

If $\mathcal{C}_1$ is discrete countable, then the proposition is proved. So we can now assume that $\mathcal{C}_1=\mathbb{C}^*$. In other words, the group $G$ is triangular for all $E$. But Proposition \ref{proptri} has the hypothesis $\mathcal{S}\neq \mathbb{C}^*$ and so the vector space $E_{int}$ is not always stabilized by $G$. Let us now remark that the matrix $M_\alpha$ always stabilizes the vector space $E_{int}$. And this matrix at an active singularity is in the group $G$.

Let us first assume that there exist an active singularity such that $M_\alpha$ is not identity. Now two cases:
\begin{itemize}
\item either $M_\alpha$ is diagonalizable with distinct eigenvalues. Then after to basis change, we can assume that $M_\alpha$ is diagonal and the group $G$ is triangular (recall that $M_\alpha\in G$). As $E_{int}$ is of dimension $1$, is not generated by the vector $(1,0)$ (else it would be stabilized by $G$ for all $E$) and is stabilized by $M$, then
$$E_{int} =\mathbb{C}.(0,1).$$
If $E\in\mathcal{S}$, we have that this vector space is stabilized by $G$. Thus $G$ stabilize two supplementary $1$-dimensional vector spaces, and so is diagonal. This gives $\mathcal{S} \subset \mathcal{C}_2$.
\item either $M_\alpha$ is not diagonalizable. So both eigenvalues of $M_\alpha$ are equal to $1$, and $M_\alpha$ is triangular (after basis change). However, $M_\alpha$ stabilize $E_{int}$, which can only be $\mathbb{C}.(1,0)$. But then $E_{int}$ is stabilized by $G$ for any $E$, and this would implies $\mathcal{S}= \mathbb{C}^*$. Impossible.
\end{itemize}
The last remaining case is when all matrices $M_\alpha$ at active singularities are identity. Then the formal series solutions cannot have nor exponentials, nor fractional/irrational powers. So all active singularities are meromorphic singularities (formal series solutions are Laurent series), and thus the group $G$ is reduced to identity, i.e. $\mathcal{C}_2=\mathbb{C}^*$.

If $\mathcal{C}_2$ is discrete countable, then the proposition is proved. So we can now assume that $\mathcal{C}_2=\mathbb{C}^*$. In other words, the group $G$ is diagonal for all $E$. But Proposition \ref{proptri} has the hypothesis $\mathcal{S}\neq \mathbb{C}^*$ and so the vector space $E_{int}$ is not always stabilized by $G$. The matrices $M_\alpha$ at active singularities are also diagonal, and stabilize the vector space $E_{int}$. However this cannot be $\mathbb{C}.(1,0), \mathbb{C}.(0,1)$, as else it would be stabilized by $G$ for all $E$. So these matrices must have a third stable vector space. And so they are identity. This implies that the group $G$ is reduced to identity. So any vector space is stabilized by $G$, and thus in particular $E_{int}$. Thus $\mathcal{S}\subset \mathcal{C}_3$. Finally, if $\mathcal{C}_3=\mathbb{C}^*$, the group $G$ is identity for all $E$, and thus $\mathcal{S}$ is either $\emptyset$ or $\mathbb{C}^*$.
\end{proof}\qed

For the computation of the spectrum of quantum integrable system under the square integrability condition, we will first compute the sets $\mathcal{C}_i$ which can be found algebraically from the monodromy/Stokes matrices. Then we obtain a countable discrete set of ``candidates''  and we can look more precisely the behaviour of solutions at singularities to check which energies satisfy the square integrability condition.\\

\noindent
\textbf{Other examples of almost natural boundary condition}
\begin{itemize}
\item Prescribed singular behaviour at one singularity
\item Prescribed radius of convergence for series solutions
\item Analyticity of solutions of a particular domain
\item To belong to the Bargman space (holomorphic functions with $\mid f(z) \mid^2\exp(-z^2/2)$ integrable)
\item Prescribed ramification/coverings of the Riemann sphere
\end{itemize}

\subsection{Rigid functions}

Our objective now is to compute the spectrum, and more precisely defined a class of Schroedinger equation \eqref{eq1} for which the monodromy and Stokes matrices can be explicitly computed. At first view, this seems to be intractable, as analytic continuation of formal series is used everywhere to define these matrices. However, the idea of rigid operators \cite{8} is to gather all algebraic information we have at our disposal, and try to compute these matrices.

The main information we have is local monodromy in the singular regular case (so a small loop around a singularity) and the formal monodromy matrix for irregular singular points. Making a whole turn around this irregular singular point gives moreover a multiplicative relation between formal monodromy (known), true monodromy (unknown), and Stokes matrices (unknown). Finally we have a global structure: if we have $n$ singularities, making a turn around $n-1$ of them equals to making a turn around the one left (recall that we are on the Riemann sphere). However, these matrices are not known in a common basis.

\subsubsection{Regular case}

Let us first focus on the regular singular case, i.e. no irregular singularities at all. To summarize, we search $M_1,\dots,M_m\in GL_n(\mathbb{C})$ such that
$$M_1\dots M_{m-1}=M_m^{-1}$$
just knowing the $M_i$ up to conjugacy. Of course, one just has to find the $M_1,\dots,M_m$ up to \textbf{common} basis change. Is this enough to find the $M_i$? Sometimes \cite{14}

\begin{defi}\label{defrigid1}
Let us consider a linear differential operator
$$y(t) \longrightarrow a_n(t)y^{(n)}(t)+ \dots + a_0(t)y(t)$$
with $a_i$ polynomials, $a_n\neq 0$ relatively prime, and with only regular singularities. The operator is said to be rigid if the monodromy matrices are uniquely defined up to \textbf{common} basis change by their conjugacy class given by local monodromy.
\end{defi}

Search for rigid operators is still on going, and is known as the Deligne-Simpson problem \cite{12}:
\begin{prob}
Let $n,m$ be two positive integers. Find all $m$-uplet $(G_1,\dots G_m)$ of conjugacy classes of $GL_n(\mathbb{C})$ such that the equation
$$M_1\dots M_m=I_n \quad M_i \in G_i,\; i=1\dots m$$
admits a unique solution up to common conjugacy.
\end{prob}

For small dimensions, the problem is solved, and in particular for $n=2$, the only possible solution leads to a famous operator, Gauss hypergeometric differential equation.

\subsubsection{Irregular case}

The definition of rigid operators in the irregular case can be done through a limiting process, the confluence.
Using the parameters in a family of regular rigid operator, we make two singularities fuse with simultaneous rescaling. This leads to an equation with one less singularity, but irregular. Moreover, we have
\begin{itemize}
\item The limit direction when the fusion occurs becomes a singular direction
\item The Stokes matrices are the limits of the monodromy matrices around each singularity of the fusion
\item Multiple singularities can fuse simultaneously, leading to several singular direction and Stokes matrices
\end{itemize}

So the irregular generalisation of rigid operators is straightforward
\begin{defi}
Let us consider a linear differential operator
$$y(t) \longrightarrow a_n(t)y^{(n)}(t)+ \dots + a_0(t)y(t)$$
with $a_i$ polynomials, $a_n\neq 0$ relatively prime. The operator is said to be rigid if either it is regular and rigid according to Definition \ref{defrigid1}, or is produced by a limit confluence process of a family of regular and rigid operators according to Definition \ref{defrigid1}.
\end{defi}

As the order is conserved by the limiting process, we only have to look at confluence processes for the Gauss hypergeometric equation. Outside elementary functions, this produces the Whittaker differential equation (and Bessel differential equation as a specialization).

\subsubsection{Galois rigidity}

There an additional global algebraic structure we have not used yet to compute our monodromy/Stokes matrices, which is the differential Galois group. In particular, due to Ramis theorem, we know that monodromy/Stokes matrices belong to the differential Galois group. This group can be computed algebraically and automatically thanks to the Kovacic algorithm. It is an algebraic Lie subgroup of $GL_n(\mathbb{C})$. So this suggests the following generalization of the Deligne Simpson problem (already raised in \cite{12})

\begin{prob}
Let $n,m$ be two positive integers and $G$ an algebraic Lie subgroup of $GL_n(\mathbb{C})$. Find all $m$-uplet $(G_1,\dots G_m)$ of conjugacy classes of $G$ such that the equation
$$M_1\dots M_m=I_n \quad M_i \in G_i,\; i=1\dots m$$
admits finitely many solutions in $G$ up to common conjugacy.
\end{prob}

\begin{defi}[Galois rigid operators]
Let us consider a linear differential operator
$$y(t) \longrightarrow a_n(t)y^{(n)}(t)+ \dots + a_0(t)y(t)$$
with $a_i$ polynomials, $a_n\neq 0$ relatively prime. The operator is said to be rigid if it is
\begin{itemize}
\item either regular and if the monodromy matrices are defined up to a finite choice up to \textbf{common} basis change by their conjugacy class given by local monodromy and their inclusion to the differential Galois group.
\item a limit confluence process of a family of regular and rigid operators of the above case.
\end{itemize}
\end{defi}

\begin{thm}\label{thmrigid}
The Galois rigid operators of order $2$ without meromorphic singularities are, up to hyperexponential multiplication and Moebius transformation
\begin{itemize}
\item The hypergeometric equation.
\item The Whittaker equation.
\item The logarithm equation $zy''+ y'$.
\item Any operator with dihedral Galois group over $\mathbb{C}(z)$ and diagonal Galois group over $\mathbb{C}(\sqrt{z})$.
\item Any operator with diagonal Galois group or finite Galois group.
\end{itemize}
\end{thm}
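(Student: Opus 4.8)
The plan is to reduce to a normal form and then run a case analysis over the possible differential Galois groups. First I would use hyperexponential multiplication to clear the first order term: writing the operator as $a_2 y'' + a_1 y' + a_0 y$ and substituting $y = \exp(-\tfrac12\int a_1/a_2)\,u$ puts it in reduced Schroedinger form $u'' + q u = 0$ with $q\in\mathbb{C}(z)$, while a Moebius transformation normalizes the positions of the singularities. In this form the Wronskian is constant, so the differential Galois group $G$ is an algebraic subgroup of $SL_2(\mathbb{C})$. By Kovacic's classification, $G$ is (i) reducible, i.e. conjugate into the Borel, with the diagonal torus $\mathbb{G}_m$ and the unipotent $\mathbb{G}_a$ as special subcases, (ii) imprimitive, i.e. contained in the infinite dihedral normalizer $N(T)$ of the torus, (iii) finite and primitive, or (iv) all of $SL_2(\mathbb{C})$. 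The classification in the theorem will fall out by determining, for each of these, exactly when the Galois rigidity condition holds.

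Second, I would make the rigidity criterion quantitative via a Galois analogue of Katz's index of rigidity. For $m$ singularities with local monodromy/formal data lying in conjugacy classes $C_i\subset G$, the moduli of admissible tuples $(M_1,\dots,M_m)$ with $\prod M_i = I$ taken up to common conjugation by $G$ has expected dimension $\sum_i \dim C_i - 2(\dim G - \dim Z(G))$; rigidity, i.e. finitely many solutions up to common basis change, means this expected dimension is $0$, which using $\dim C_i = \dim G - \dim Z_G(M_i)$ reads
$$(2-m)\dim G + \sum_{i=1}^m \dim Z_G(M_i) = 2\dim Z(G).$$
I would then evaluate this identity group by group, inserting the centralizer dimensions computed from the Jordan/semisimple type of the admissible local monodromies, and remembering that the hypothesis ``without meromorphic singularities'' discards the degenerate singularities whose local monodromy is central, since these carry no multivaluation and could be gauged away.

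Third comes the case analysis. For $G = SL_2$, where $\dim G = 3$ and $\dim Z = 0$, non-central local monodromies have one-dimensional centralizers, so the identity forces $6 - 2m = 0$, i.e. exactly $m = 3$ regular singular points; by the $n=2$ Deligne--Simpson uniqueness quoted earlier this is the Gauss hypergeometric equation, and its confluences give the Whittaker and Bessel equations, yielding the first two items. For finite $G$ one has $\dim G = \dim Z = 0$, so the tuple count is automatically finite and every finite-Galois operator is rigid; for the diagonal torus, $\dim G = \dim Z = 1$ and $G$ is abelian, so the identity $2 - m + m = 2$ holds for every $m$ and every diagonal-Galois operator is rigid; together these give the last item. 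For the infinite dihedral group I would pass to the double cover $\mathbb{C}(\sqrt z)$, over which $G$ becomes diagonal and hence rigid, the remaining structure being only the finite $\mathbb{Z}/2$ of the cover; this produces the dihedral family. The minimal reducible-unipotent case $G=\mathbb{G}_a$ with two unipotent singularities is $zy''+y'=0$, the logarithm equation, and the remaining reducible (full Borel) operators factor, so up to hyperexponential multiplication they collapse onto the diagonal, unipotent, or finite subcases and contribute nothing new. Finally the irregular members are obtained from the regular rigid families through the confluence process, which preserves the order and, for the $SL_2$ family, sends hypergeometric to Whittaker.

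The hardest part will be the irregular case: extending the rigidity index to include Stokes data and formal monodromy, and proving that confluence of the hypergeometric family is the only route to an irregular rigid order-$2$ operator in the $SL_2$ case, so that Whittaker and Bessel exhaust the irregular $SL_2$ possibilities. Close behind is the bookkeeping at the boundaries between the Kovacic classes, namely verifying that an operator nominally ``dihedral'' or ``Borel'' whose group secretly degenerates is already captured by the diagonal or finite items, together with checking that ``up to hyperexponential multiplication and Moebius transformation'' genuinely collapses the reducible zoo onto the three listed normal forms.
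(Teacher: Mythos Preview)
Your index-of-rigidity strategy is a reasonable and more conceptual alternative to the paper's explicit matrix calculations, and it handles the $SL_2$, diagonal, and finite cases cleanly. But the formula as you state it is calibrated for $G$-conjugacy classes and common conjugation by $G$, whereas the actual problem in the paper's definition uses \emph{ambient} data: the local monodromy gives the conjugacy class in $GL_2$, the matrices are constrained to lie in the Galois group $G\subset GL_2$, and ``common basis change'' is conjugation in $GL_2$ (effectively, by the normalizer of $G$). For $G=SL_2$ this distinction is invisible, but for the smaller groups it is the whole game. Concretely, for $G=\mathbb{G}_a$ your formula yields $(2-m)\cdot 1 + m\cdot 1 = 2$, i.e.\ rigidity for every $m$; with the correct bookkeeping (each unipotent $M_i$ contributes one unknown $\beta_i$, the product relation is the single linear equation $\sum_i(\prod_{j>i}\lambda_j)\beta_i=0$, and the allowed basis changes act by a common scaling of the $\beta_i$), the moduli has dimension $m-2$, so rigidity forces $m=2$. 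That is precisely what the paper computes, and it is what singles out the logarithm equation rather than an infinite family.

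The same gap appears in your dihedral paragraph. You assume passage to $\mathbb{C}(\sqrt z)$ and then invoke the diagonal case; but the content of the theorem is the converse implication: rigidity of a dihedral operator \emph{forces} the quadratic extension to be $\mathbb{C}(\sqrt z)$ (after M\"obius). The paper obtains this by writing the monodromies explicitly, observing that each anti-diagonal $M_i$ carries one free parameter $\mu_i$ (its $GL_2$-conjugacy class only records $\mu_i\mu_i^{-1}$), deriving from $\prod M_i=I$ the single relation $\prod_{i\ \text{odd}}\lambda_i\mu_i=\prod_{i\ \text{even}}\lambda_i\mu_i$, and quotienting by the diagonal basis change; this gives finitely many solutions only when there are exactly two anti-diagonal monodromies, hence exactly two ramification points, hence diagonality over $\mathbb{C}(\sqrt z)$. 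Your sketch does not produce this constraint. Likewise, the assertion that ``full Borel'' operators collapse onto the diagonal/unipotent/finite subcases is essentially the paper's reduction, but it still leaves the $\mathbb{G}_a$ rigidity count to be done correctly. If you want to keep the index-formula approach, you need to reformulate it with ambient conjugacy classes intersected with $G$ and with conjugation by $N_{GL_2}(G)$, and then redo the triangular and dihedral counts; otherwise, the hands-on parameter count the paper uses is the missing ingredient.
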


\begin{proof}

Let us treat each possible differential Galois group.

If $\hbox{Gal}=GL_2(\mathbb{C}) \hbox{ or } SL_2(\mathbb{C})$, this has already been done in \cite{14,12} for regular operators. The only possible case is the Gauss hypergeometric equation. Its confluence gives the Whittaker equation.

A triangular group. We are first looking for Fuchsian equations. For each matrix, we know its conjugacy class, and so in particular its eigenvalues. By multiplying the solutions by a hyperexponential function, we can fix one one these eigenvalues to $1$ for each matrix $M_i$, and we choose the upper left one. As all the matrices should be upper triangular, only one coefficient is still unknown $\beta_i$
$$M_i=\left(\begin{array}{cc}  1&\beta_i\\0&\lambda_i\end{array} \right) $$
If the two eigenvalues are equal, then $\lambda_i=1$ and we moreover know if $\beta_i$ is $0$ or not. The diagonalizable case with double eigenvalue case leads only to the identity matrix, and so a meromorphic singularity, which is forbidden. The only information known on the $M_i$ is the multiplicative relation $M_1\dots M_m=I_2$. So the relation becomes
$$ \prod\limits_{i=1}^m M_i=I_2$$
The upper right coefficient of this product is
\begin{equation}\label{eq2}
\sum\limits_{i=1}^m \left(\prod\limits_{j=i+1}^m \lambda_j\right) \beta_i
\end{equation}
The other coefficients of the product do not give us additional information. This upper right coefficient is a linear form in the $\beta_i$, the unknowns. As we have removed the case for which $\beta_i$ is known to be zero, this linear form is the only relation we have on the $\beta_i$. Remark now that we are searching the matrices $M_i$ up to common basis change. Here we have to keep the triangular form, so we can make a triangular basis change. Such basis change multiply by some constant the upper right coefficient of the $M_i$. So the rigidity problem comes down to find when equation \eqref{eq2} has finitely many solutions up to transformation
$$(\beta_i)_{i=1\dots m} \longrightarrow  (\alpha \beta_i)_{i=1\dots m} \qquad \alpha\in\mathbb{C}^*$$
So this encodes a projective plane of dimension $m -2$. This has finitely many points if and only if $m=2$.

Thus we have at most $2$ singularities. If there is only one, then it should be meromorphic, as a multivalued function has at least two ramification points on the Riemann sphere. So it has exactly $2$ singularities, and using Moebius transformation, we can fix one of them at $0$, and the other one at infinity. As the differential equation is Fuchsian, the differential Galois group is the Zariski closure of the monodromy matrices group. So in particular this group has a common eigenvector of eigenvalue $1$. And thus a rational solution. So up to multiplication by a rational function, this solution can be sent to $1$, and we obtain that our equation writes down
$$\frac{d^2y}{dz^2}-a(z) \frac{dy}{dz}=0$$
with $a\in\mathbb{C}(z)$. The rational function can only have one pole at zero, and $0$ should be singular regular. So $a(z)=c z^{-1}$. The solutions of this equation can be written
$$y(z)=\int z^c dz$$
If $c\neq -1$, then we obtain a hyperexponential solution, and thus the Galois group is diagonal, which is included in another case. The case $c=-1$ gives the differential equation of the theorem. Remark to conclude that this equation cannot have an irregular confluence.

The dihedral case. The Galois group is not connected. If the projective Galois group is not finite (next case of the theorem), then there are exactly two components. These can be written
$$\left(\begin{array}{cc}  \lambda&0\\0&\mu\end{array} \right), \left(\begin{array}{cc}  0&\lambda\\\mu&0\end{array} \right) $$
In the first case, knowing the eigenvalues allows to determine the matrix. In the second case, this only gives us the product $\lambda\mu$. By multiplying the solutions by a hyperexponential, we can assume the Galois group being unimodular. And so that the determinant of these matrices should be $1$. This implies that $\lambda\mu=-1$ in the second case. Then for each monodromy matrix in the second component, we have one unknown. In the other hand, monodromy matrices in the identity component are fully known.

The multiplicative relation can be written
$$\left(\begin{array}{cc}  \lambda_1&0\\0&\lambda_1^{-1}\end{array} \right) \left(\begin{array}{cc}  0&\mu_1\\-\mu_1^{-1}&0\end{array} \right)\left(\begin{array}{cc}  \lambda_2&0\\0&\lambda_2^{-1}\end{array} \right) \dots \left(\begin{array}{cc}  \lambda_p&0\\0&\lambda_p^{-1}\end{array} \right) \left(\begin{array}{cc}  0&\mu_p\\-\mu_p^{-1}&0\end{array} \right)=I_2$$
We simplified directly the relation by multiplying successive diagonal matrices. In the equation, the $\lambda$'s are known, the $\mu$'s are unknown. Remark that we need an even number of non diagonal matrices, i.e. $p$ even. Writing down the product, we obtain just one equation from identification
$$\prod\limits_{i=1,\; i\hbox{ odd}}^p \lambda_i\mu_i=\prod\limits_{i=1,\; i\hbox{ even}}^p \lambda_i\mu_i$$
As the monodromy matrices are searched up to common basis change, using a diagonal basis change (as we need to keep the diagonal structure of the identity component), we can multiply the $\mu_i$'s by an arbitrary complex number. Still, even up to a multiplication of all the $\mu_i$'s, this equation has finitely many solutions only if $p=2$.

We also know that the differential Galois group is dihedral, and so the solutions can be written under the form
$$y(z)=Ae^{\int f(z)+\sqrt{g(z)} dz}+Be^{\int f(z)-\sqrt{g(z)} dz} \qquad f,g\in\mathbb{C}(z)$$
There are exactly two monodromy matrices outside the identity component of the Galois group. The corresponding singularities are root-poles of odd order of $g$. Using a Moebius transformation, we can put them at $0,\infty$, and thus
$$y(z)=Ae^{\int f(z)+\sqrt{z}g(z) dz}+Be^{\int f(z)-\sqrt{z}g(z) dz} \qquad f,g\in\mathbb{C}(z)$$
These are exactly the solutions of operators of order $2$ with rational coefficients whose differential Galois group is diagonal over $\mathbb{C}(\sqrt{z})$.

For the diagonal Galois group case, the eigenvalues of the monodromy matrices are known, they are all codiagonalizable, and thus we know all the monodromy matrices. In the finite projective case, after multiplication by a hyperexponential function, the Galois group becomes finite, and thus the monodromy matrices are known up to a finite choice, as they belong to the Galois group which is finite.

\end{proof}\qed

\subsubsection{The class of rigid functions}

We can now define the class of functions we are interested in. These are built in a same way as Liouvillian functions, except that ``basic'' functions are now solutions of Galois rigid operators.

\begin{defi}[Rigid functions]
The field of rigid functions $\mathcal{R}$ is the smallest differential field with the following properties
\begin{itemize}
\item $\mathcal{R}$ contains the solutions of all Galois rigid operators
\item if $f\in \mathcal{R}$, then for any algebraic function $g$, $f\circ g \in \mathcal{R}$ (algebraic pullbacks)
\end{itemize}
We moreover define field of rigid functions of order $n$, $\mathcal{R}_n$, as the smallest differential field containing the solutions of all Galois rigid operators of order $\leq n$ and all their algebraic pullbacks.
\end{defi}

\noindent
\textbf{Examples}\\

A Liouvillian function can be written in finite terms using integrals, exponentials, and algebraic functions. In our rigid function field, integrals are not allowed. For a Liouvillian function to be rigid, we have to somehow compute these integrals more explicitly. However, this does not imply that these integrals should be elementary functions. Indeed, Liouvillian elementary functions are rigid functions, but Liouvillian rigid functions are not always elementary. This kind of construction has already been suggested by Mark van Hoeij in \cite{15}, where he suggest $(F,O)$ differential fields. The field of base functions $F$ is exactly the same, coming from rigid differential operators, but he adds the integrals in the set of allowed operators $O$, which is excluded here. \\

\noindent
The error function
$$\int e^{z^2} dz= z^{-1/2}e^{z^2/2}\mathcal{W}\left(\frac{1}{4},\frac{1}{4},z^2\right)$$
is Liouvillian, is not elementary, but still is rigid. This is because this integral admits a representation in terms of the Whittaker function $\mathcal{W}$, which is a rigid function (as a confluence of the Gauss hypergeometric equation).\\

\noindent
The elliptic integrals of the first kind
$$\int \frac{1}{\sqrt{(z-a)(z-b)(z-c)}} dz$$
These can be expressed as Heun function \cite{10}. The cross ratio (see \cite{11}) is not a constant with respect to $a,b,c$. So if this integral could be expressed using a hypergeometric function (Whittaker functions are excluded due to the irregular singularity), then it would lead to a parametric algebraic transformation between Heun and hypergeometric function. Such parametric transformations have been classified in \cite{7}, and a transformation for general elliptic integral of the first kind is not included. This integral cannot be expressed in terms of elementary functions in general, and thus is not rigid for generic $a,b,c$. Remark that some exceptional values of $a,b,c$ for which this function is rigid are known
$$\int \frac{1}{\sqrt{z^3-1}} dz=-\frac{2}{\sqrt{z}} \,_2F_1\left(1/2,1/6,7/6,\frac{1}{z^3} \right)$$
$$\int \frac{1}{\sqrt{z^3-z}} dz=-\frac{2}{\sqrt{z}} \,_2F_1\left(1/2,1/4,5/4,\frac{1}{z^2} \right)$$
$$\int \frac{1}{\sqrt{z^3-a^2z^2}} dz =\frac{2}{a} \hbox{arctan}\left(\frac{\sqrt{z-a^2}}{a}\right)$$
As the monodromy group is these cases can be explicitly computed, this implies that the complex lattice of corresponding elliptic functions can be explicitly computed. And indeed, those have exceptional properties, being triangular, square and collapsed respectively.

In general, it is difficult to prove that some function is not rigid. In the case of a differential equation of order $2$ with generic exponents at singularities, we cannot find an algebraic pullback mapping these singularities to only $3$ points, because it would imply a rational relation between the exponents (see pullbacks of hypergeometric equation to themselves in \cite{13}). The cases with parameters are thus much easier, and it is mostly done for Heun functions in \cite{7}, as exponents typically depend on the parameter. This is exactly our situation as the energy level $E$ is a parameter in our Schroedinger equation, and appears in the asymptotic behaviour of the solutions.

\section{Rigid eigenfunctions}

We now proceed to the proof of Theorem \ref{thmmain0}.\\

\textbf{Notation}: In the following of the article, we will use both the notations $f(z,E)$ and $f(z)$, which will be two different functions, and we will always precise the variables in case of ambiguity.

\subsection{Asymptotic analysis}

The possible asymptotic behaviours of eigenfunctions are of the following form
$$z^\gamma e^{\sum\limits_{k=1}^{\max(0,n)+2} a_k z^{k/2}}$$
with $n$ the asymptotic exponent of $V$ at infinity (if $V$ converges at infinity, we always assume that it converges to $0$). The exponent $\gamma$ encodes the formal monodromy at infinity. Let us look now at the dependence in function of $E$. We inject this expression in equation \eqref{eq1}, giving
\begin{itemize}
\item if $n\geq 3$, then $\gamma$ and all the $a_i$ are constant
\item if $n=2$, then $\gamma$ is affine in $E$ and all the $a_i$ are constant
\item if $n=1$, then $\gamma=-1/4$, $a_3$ constant, $a_2=0$ and $a_1$ linear in $E$.
\item if $n=-1$, then $a_2=\sqrt{-E}$, $\gamma\sqrt{-E}$ constant.
\item if $n\leq -2$, then $a_2=\sqrt{-E}$, $\gamma=0$.
\end{itemize}

To find rigid eigenfunctions, we need to search two type of rigid solutions: the ones coming from hypergeometric or Whittaker functions, and the Liouvillian ones. Let us first remark that equation \eqref{eq1} has an irregular singularity at infinity. If a solution is rigid and non Liouvillian, then its expression has to involve a non solvable hypergeometric or Whittaker function, which are solutions of equations of order $2$. Thus they should have an expression of the form
\begin{equation}\label{eqpsi}
\psi(z,E)=h(z,E)(M(z,E)F(f(z,E))+F'(f(z,E)))
\end{equation}
with $h$ hyperexponential, $f,M$ algebraic and $F$ hypergeometric or Whittaker, with parameters possibly depending on $E$. The quotient of the two possible asymptotic behaviours of solutions of equation \eqref{eq1} has always an essential singularity at infinity. If the function $F$ was hypergeometric, then the quotient of two solutions of the form \eqref{eqpsi} would have a Puiseux/log series at infinity. Thus the function $F$ should be of Whittaker type.\\

So non Liouvillian eigenfunctions should be of the form
$$e^{\int g(z,E) dz}\left(M(z,E)\mathcal{W}(\mu(E),\nu(E),f(z,E))+\mathcal{W}'(\mu(E),\nu(E),f(z,E))\right)$$
where $g,M,f$ are algebraic in $z,E$ ($E$ being the parameter). Computing the second or differential equation whose solution is this function, the $g$ can be expressed in function of $f,M$ through the condition that no term in $\psi'$ appears in this equation. We then obtain for $V+E$ a large rational expression depending on $M,f$, their derivatives in $z$ and $E,\nu(E),\mu(E)$.

\subsection{Whittaker pullbacks}

A rigid solution related to (a non solvable) Whittaker function of a second order differential equation is of the form
$$e^{\int g(z) dz}\left(M(z)\mathcal{W}(\mu,\nu,f(z))+\mathcal{W}'(\mu,\nu,f(z))\right)$$
with $f,M,g$ algebraic. Still if we restrict ourselves to differential equation with rational coefficients, the $f,M,g$ cannot be arbitrary algebraic functions (see \cite{16} for Bessel functions).

\begin{prop}\label{proppull0}
If the function
$$e^{\int g(z) dz}\left(M(z)\mathcal{W}(\mu,\nu,f(z))+\mathcal{W}'(\mu,\nu,f(z))\right)$$
with $f,M,g$ algebraic is solution of a second order unimodular differential equation with $\mathcal{W}$ non solvable, then $f$ is either rational, or the square root of a rational function with $\mu=0$.
\end{prop}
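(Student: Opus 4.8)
The plan is to use the non-solvability of $\mathcal{W}$ to turn the problem into a pullback question, and then to read off the admissible $f$ from the branching of the algebraic function together with the symmetries of the Whittaker equation. Write $\psi=h\,(M\mathcal{W}(\mu,\nu,f)+\mathcal{W}'(\mu,\nu,f))$ with $h=e^{\int g\,dz}$, and let $L=\partial^2+q$ be the second order unimodular operator with rational $q$ annihilating $\psi$. Since $\mathcal{W}$ is non solvable, the differential Galois group of $L$ is all of $SL_2(\mathbb{C})$, and $L$ is, up to the gauge transformation by $h$ and the cyclic-vector change encoded by $M$, the pullback of the Whittaker operator under $t=f(z)$. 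The solution space $V_L$ of $L$ is two dimensional and stable under the monodromy of $L$; in particular it is stable under the branch permutations of the algebraic function $f$. Hence if $f_1,\dots,f_d$ denote the branches of $f$ lying in one orbit of this action, then $V_L$ contains the corresponding functions $h_j(M_j\mathcal{W}(\mu,\nu,f_j)+\mathcal{W}'(\mu,\nu,f_j))$ obtained by analytic continuation.

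Because $\dim V_L=2$, these Galois conjugates cannot be all independent, so I would next argue that two of them must be proportional modulo the two dimensional Whittaker data. As $\mathcal{W}(\mu,\nu,\cdot)$ is non solvable, the functions $\mathcal{W}(\mu,\nu,f_i)$ and $\mathcal{W}(\mu,\nu,f_j)$ can only enter a common two dimensional solution space in the required way if the two branches $f_i,f_j$ are related by a self-equivalence of the Whittaker equation, i.e. a change of argument $t\mapsto\phi(t)$ (combined with an allowed gauge factor) that maps solutions of the Whittaker equation with parameters $(\mu,\nu)$ to solutions of the same equation. The substance of this step is that a non-solvable second order equation admits no hidden relations between its values at genuinely distinct arguments other than those coming from such self-equivalences; the local determining factors make this concrete, since near a pole of $f$ the exponential parts of $L$ are $e^{\pm f/2}$ while a rational $\partial^2+q$ only admits the conjugate pair $e^{\pm\int\sqrt{-q}\,dz}$ with $\sqrt{-q}$ at most two valued, forcing the second branch to be $-f$.

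It then remains to classify the self-equivalences of the Whittaker equation in the argument. The equation
$$y''+\left(-\frac14+\frac{\mu}{t}+\frac{1/4-\nu^2}{t^2}\right)y=0$$
has a regular singular point at $t=0$ and a rank one irregular singular point at $t=\infty$, so any self-equivalence must preserve this singular configuration and their types; thus $\phi$ fixes $0$ and $\infty$ and is linear, $\phi(t)=ct$, and preservation of the irregular determining factors $e^{\pm t/2}$ forces $c=\pm1$. The value $c=1$ is the identity, while $c=-1$ preserves the equation precisely when the odd term $\mu/t$ vanishes, that is when $\mu=0$. Consequently $f$ has at most two branches: either a single branch, in which case $f\in\mathbb{C}(z)$ is rational, or two branches exchanged by $t\mapsto -t$, in which case $f^2=R\in\mathbb{C}(z)$, so $f=\sqrt{R}$, and this case forces $\mu=0$. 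This gives exactly the two alternatives of the statement, with the second being the Bessel degeneration of the Whittaker function.

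The step I expect to be the main obstacle is the passage in the second paragraph: rigorously deducing, from the two dimensionality and Galois stability of $V_L$, that linear dependence among the conjugate Whittaker contributions really forces a genuine self-equivalence of the Whittaker equation, rather than an accidental relation created by the extra freedom in the hyperexponential $h$, the cyclic-vector function $M$, and the derivative combination $M\mathcal{W}+\mathcal{W}'$. One must check that these gauge and cyclic-vector data contribute only rational pieces that cannot manufacture a dependence between $\mathcal{W}(\mu,\nu,f_i)$ and $\mathcal{W}(\mu,\nu,f_j)$ for inequivalent arguments, and one must simultaneously verify that the branching of $f$ over the regular singular value $t=0$ (producing exponents $e(1/2\pm\nu)$ at the zeros of $f$, with $e$ the ramification index) closes up into monodromy consistent with a rational $q$ without admitting ramification of order greater than two anywhere. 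Controlling this interplay is the technical heart of the argument.
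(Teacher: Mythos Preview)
Your approach is essentially the same as the paper's: exploit the action of branch permutations of $f$ on the two-dimensional solution space, and then classify the self-equivalences of the Whittaker equation in its argument. Your classification of the latter ($t\mapsto ct$, $c=\pm1$, with $c=-1$ forcing $\mu=0$) matches the paper exactly.

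You have correctly identified the gap. The paper closes it as follows. Rather than arguing abstractly that ``two conjugates must be proportional modulo the Whittaker data'', the paper fixes one branch permutation $\sigma$ and writes the resulting relation directly on the Whittaker side:
\[
\mathcal{W}_1(\mu,\nu,\sigma(t)) \;=\; S_1(t)\,\mathcal{W}_1(\mu,\nu,t) + S_2(t)\,\mathcal{W}_2(\mu,\nu,t),
\]
with $S_1,S_2$ algebraic. To kill one of the two terms on the right, it picks a diagonal monodromy/Stokes element of the Whittaker equation with eigenvalue not a root of unity and iterates it; after passing to a power that closes the loop on the Riemann surface of $\sigma$ and taking a limit, only one term survives. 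Unimodularity then pins down the surviving coefficient as $c\sqrt{\sigma(t)}$. This is precisely the mechanism that rules out ``accidental'' dependences created by $h$, $M$, and the derivative combination: those contribute only algebraic factors, and the iteration argument shows the algebraic relation between $\mathcal{W}_1\circ\sigma$ and $\mathcal{W}_1,\mathcal{W}_2$ must in fact be of self-equivalence type.

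The second point you flag (ramification of $f$ away from $f=0,\infty$) is handled by a local analyticity argument: at a putative ramification point $\alpha\notin\{0,\infty\}$ of $\sigma$, the relation $\mathcal{W}_1(\sigma(t))/\sqrt{\sigma(t)}=c\,\mathcal{W}_i(t)$ has an analytic right side but, for a generic choice of $\mathcal{W}_1$, a non-analytic left side. Hence $\sigma$ is ramified only over $0$ and $\infty$, so $\sigma(t)=at^r$; the asymptotics at $\infty$ force $r=1$, and then $a=1$ or $a=-1$ with $\mu=0$. This replaces your appeal to the exponential parts $e^{\pm f/2}$ and gives the ramification bound you were looking for without having to analyze the exponents $(1/2\pm\nu)e$ at zeros of $f$.
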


\begin{proof}
We are searching a pullback transformation and gauge transformation which sends an unimodular differential equation with rational coefficients (the Whittaker equation) to an unimodular differential equation with rational coefficients. Both these transformations are algebraic, as the function $\exp(\int g(z) dz)$ can be expressed algebraically in $M,f,z$ and their derivatives. So the functions we are looking are of the form
\begin{equation}\label{eqtest}
M_1(z)\mathcal{W}(\mu,\nu,f(z))+M_2(z)\mathcal{W}'(\mu,\nu,f(z))
\end{equation}
with $M_1,M_2,f$ algebraic. Thus in particular the pullback function $f$ cannot be an arbitrary algebraic function as this function has to satisfy a rational linear differential equation.

We now consider the function $\sigma$ which sends one value of $f$ to the others (recall $f$ is algebraic and so multivalued). This function is the Galois action on the branches of $f$. And thus the action of $\sigma$ on \eqref{eqtest} produces another solution of the differential equation. As we assumed the differential Galois group of the Whittaker function being $SL_2(\mathbb{C})$, this implies a relation of the form
$$\mathcal{W}_1(\mu,\nu,\sigma(z))=S_1(z)\mathcal{W}_1(\mu,\nu,z)+S_2(z)\mathcal{W}_2(\mu,\nu,z)$$
with $S_1,S_2$ algebraic, $\mathcal{W}_1,\mathcal{W}_2$ a basis of solutions of the Whittaker equation. Let us make some precisions about the function $\sigma$. The differential equation has rational coefficients, the singularities of \eqref{eqtest} should not depend on which branch of $f$ we choose. Thus if $f(z)=0,\infty$ for some branch, this should be the same for the other branches (and $0,\infty$ cannot be exchanged as they lead to a different type of singularity, one regular, the other irregular). This implies in particular that $\sigma$ is univalued at $0,\infty$, and their value are $0,\infty$. Let us now act the Galois group of the Whittaker equation on the above relation. We consider a path in $\mathbb{C}^*$ and a corresponding monodromy/Stokes matrix $A$, assumed to be diagonal (possible as the Galois group is $SL_2(\mathbb{C})$) with eigenvalues $\alpha,1/\alpha$, $\alpha$ not root of unity, acting on the basis $\mathcal{W}_1,\mathcal{W}_2$. As $\sigma$ is algebraic, the path is not always closed on its associated Riemann surface. Still, if we take a suitable power of $A$, noted $B$, this will correspond to a closed path of the Riemann surface associated to $\sigma$ ($\sigma$ being algebraic, its monodromy group is finite). When applying $B$ infinitely many times, this gives after taking a limit
$$\mathcal{W}_1(\mu,\nu,\sigma(z))=S_1(z)\mathcal{W}_1(\mu,\nu,z) \hbox{ or}$$
$$\mathcal{W}_1(\mu,\nu,\sigma(z))=S_2(z)\mathcal{W}_2(\mu,\nu,z)$$
Using the unimodular property, we get that $S_{1,2}(z)=c\sqrt{\sigma(z)},\; c\in\mathbb{C}$.

Let us now consider a ramification point $\alpha$ of $\sigma$, outside $0,\infty$. We have the relation
$$\frac{\mathcal{W}_1(\mu,\nu,\sigma(z))}{\sqrt{\sigma(z)}}=c\mathcal{W}_i(\mu,\nu,z)$$
Near $\alpha$, the righthandside is analytic. The function $\mathcal{W}_1$ can be chosen arbitrary (solution of the Whittaker equation), and is analytic at $\alpha$. Thus the left hand side (for a generic choice of $\mathcal{W}_1$) is not analytic. Thus such ramification point $\alpha$ does not exist.

This implies that $\sigma$ has at most two ramification points $0,\infty$, and moreover knowing that $\sigma(0)=0,\sigma(\infty)=\infty$, this implies that
$$\sigma(z)=az^r,\quad a\in\mathbb{C}^*,\; r\in\mathbb{Q}^+_*$$
Now looking at asymptotics near infinity, we find that the only possible exponent is $r=1$. Now we need to express $\mathcal{W}_1(\mu,\nu,az)$ in function of $\mathcal{W}_1,\mathcal{W}_2$. The only possibility is $a=1$ or $a=-1$ with $\mu=0$. This implies that $f$ is either rational, or the square root of a rational function with $\mu=0$.
\end{proof}\qed

We now need to find $M,f$ leading to a function $V$ depending only on $z$ and not on the parameter $E$.

\begin{defi}
We consider a linear differential equation with a parameter $E$. We say that this equation has no mobile singularity if the position of the singularities does not depend on $E$. Similarly, we say that a rational function has no mobile singularity (respectively root) if its poles (respectively root) do not depend on the parameter $E$.
\end{defi} 

The Schroedinger equation has no mobile singularities. So the solutions \eqref{eqpsi} should have no mobile singularities. In particular, the Whittaker functions have ramification points at $0,\infty$, and this will restrict the possible pullback transformations $f$.

\begin{prop}\label{proppull}
The pullback function $f(z,E)$ has to be of the form
$$w(E)f(z) \hbox{ or }$$
$$(w_1(E)f(z)+w_2(E))^k \hbox{ with } \nu=\pm\frac{1}{2k},\; k\in\mathbb{N}\setminus \{0,1\} \hbox{ or } $$
$$(w_1(E)f(z)+w_2(E))^k \hbox{ with } \nu=\pm\frac{1}{2k},\; \mu=0,\; k\in\frac{1}{2} \mathbb{N} \setminus \{0,1/2,1\}$$
\end{prop}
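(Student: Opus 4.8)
The plan is to exploit the fact that the Schroedinger equation \eqref{eq1} has no mobile singularities: since $E$ enters $V+E$ only as an additive constant, the singular points of \eqref{eq1}, hence of any solution $\psi$, are the poles of $V$ together with $\infty$, all independent of $E$. I would therefore track how the two ramification loci of the Whittaker function, $\zeta=0$ (regular, ramified according to $2\nu$) and $\zeta=\infty$ (irregular), are pulled back through $\zeta=f(z,E)$, and impose that the resulting singularities of $\psi$ be either $E$-independent or removable by the single-valued algebraic/hyperexponential factor $h=e^{\int g}$ in \eqref{eqpsi}. Throughout I may use Proposition \ref{proppull0}, so that for each fixed $E$ the pullback $f(\cdot,E)$ is rational in $z$, or the square root of a rational function with $\mu=0$.

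First I would treat the poles of $f$. Where $f\to\infty$ the Whittaker factor behaves like $e^{-f/2}f^{\mu}$, so $\psi$ acquires an essential (irregular) singularity there. Such a singularity cannot sit at a regular point of $V$, and being non-meromorphic it cannot be cancelled by the single-valued factor $h$. Hence the pole locus of $f$ in $z$ is $E$-independent, and I can write $f(z,E)=N(z,E)/D(z)$ with $D$ a fixed polynomial carrying exactly the poles. I would then analyse the zeros. A \emph{mobile} zero $z_0(E)$ is necessarily a regular point of \eqref{eq1}, so its two local solutions have exponents $0$ and $1$; single-valuedness of $\psi$ at such a moving point forces the local monodromy of the pulled-back solutions to be the scalar $\pm I$ and logarithm-free (the sign $-I$ being neutralised by a square root inside $h$). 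If $f$ vanishes to order $k$ at $z_0$, the pullback multiplies the Whittaker exponent gap $2\nu$ by $k$, so the two exponents differ by $2k\nu$; matching this to the gap $1$ of a regular point gives $2k\nu=\pm1$, that is $\nu=\pm\frac{1}{2k}$, with every mobile zero of order exactly $k$.

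Finally I would globalise. If $f$ has no mobile zeros then all its zeros and poles are fixed, so $f(z,E)=w(E)f(z)$ for a fixed rational $f(z)$, giving the first form. If mobile zeros are present they all have order $k$, so $f=u^{k}$ times fixed factors, with $u$ rational of fixed poles and simple mobile zeros; to pin $u$ down I would run the asymptotic argument at infinity exactly as in the proof of Proposition \ref{proppull0} (which forced the scaling exponent to be $1$ and the scaling constant to be $\pm1$), now to show that the admissible $E$-dependence collapses to a projective pencil $u=w_1(E)f(z)+w_2(E)$, the additive term $w_2$ being precisely what makes the zeros mobile. This yields the second form for integer $k$. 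The half-integer case is where $\mu=0$ becomes essential: by Proposition \ref{proppull0} a square-root pullback forces $\mu=0$, and then the Whittaker equation is even in $\zeta$, so the relation $\mathcal{W}(0,\nu,-\zeta)\sim\mathcal{W}(0,\nu,\zeta)$ absorbs the two-valuedness of $u^{k}$ around its branch points, producing the third form with $k\in\frac12\mathbb{N}$.

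I expect the main obstacle to be this globalisation step: rigorously excluding richer $E$-dependence of the inner function $u$ — several independent families of mobile zeros, or mobile zeros of differing orders colliding with the fixed poles — and collapsing everything to a single affine pencil through the behaviour at the irregular point $\infty$. The local part (ruling out logarithms at the pulled-back $\zeta=0$ and extracting the sharp value $\nu=\pm\frac{1}{2k}$ rather than merely $2k\nu\in\mathbb{Z}$) is clean once one uses the regular-point exponent gap, but transferring that pointwise information into the global rational/algebraic shape of $f$, uniformly in $E$, is the delicate point.
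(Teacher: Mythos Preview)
Your local analysis is essentially the paper's: the poles of $f$ cannot move because they produce an irregular singularity, and a mobile zero of order $k$ forces the exponent gap $2k\nu=\pm1$, i.e.\ $\nu=\pm\tfrac{1}{2k}$ (with $k=1$ excluded by the logarithm). That part is fine.

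The gap is exactly where you flag it, and it is a real one: your proposed globalisation does not work as stated. You plan to ``run the asymptotic argument at infinity exactly as in the proof of Proposition~\ref{proppull0}'' to collapse $u$ to an affine pencil $w_1(E)f(z)+w_2(E)$. But the argument in Proposition~\ref{proppull0} is not an asymptotic argument about $E$-dependence at all; it is a Galois-on-branches argument for a \emph{fixed} $E$, constraining the multivaluation $\sigma$ of the algebraic pullback. It gives you nothing about how the rational function $u(\cdot,E)$ varies with $E$ once you already know it is rational in $z$. Having fixed poles and simple mobile zeros does not by itself force $u$ to lie in a one-parameter pencil: a priori $u$ could be any rational function in $z$ whose numerator coefficients depend arbitrarily on $E$.

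The idea you are missing is to look at the \emph{critical points of $f$ that do not lie on the level $f=0$}. Writing $f(z,E)=F(z)\,Q(z,E)^k$ with the mobile simple zeros packed into $Q$, one has
\[
f'(z,E)=Q(z,E)^{k-1}\bigl(F'(z)Q(z,E)+kF(z)Q'(z,E)\bigr),
\]
so a critical point $\alpha$ with $f(\alpha)\neq0$ is a root of the second factor. At such $\alpha$ the pullback ramifies over a \emph{regular} value of the Whittaker equation, and this always produces a genuine singularity of $\psi$ (no exponent tuning can save you, unlike at $f=0$). Hence those roots cannot move with $E$, i.e.
\[
F'(z)Q(z,E)+kF(z)Q'(z,E)=w_1(E)\,S(z)
\]
for some fixed $S$. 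This is a first-order linear ODE in $Q$ whose general solution is $Q(z,E)=w_2(E)F(z)^{-1/k}+w_1(E)P_1(z)$; forcing both terms to be polynomial gives $F=P_2^{-k}$ and hence $f=\bigl(w_2(E)+w_1(E)P_1/P_2\bigr)^k$. That critical-point step is the missing key, and without it your argument cannot conclude.
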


\begin{proof}
Let us first consider the case $\mu\neq 0$. Then we have $f(z,E)$ rational in $z$ according to Proposition \ref{proppull0}. The Schroedinger equation has no mobile singularities. The values $f=\infty$ always lead to singularities of \eqref{eqtest}, and so cannot depend on $E$. The point $0$ is a regular singularity of the Whittaker equation with exponents $1/2+\nu,1/2-\nu$. Let us consider a root $\alpha$ of $f$, with multiplicity $k\in\mathbb{N}^*$. The function $\mathcal{W}(\mu(E),\nu(E),f(z,E))$ admits a Puiseux series in $z$ near $z=\alpha$, with first term exponent $(1/2+\nu)k$ or $(1/2-\nu)k$. Now if $\alpha$ depends on $E$, it cannot be a singularity of the Schroedinger equation (not even an apparent one). Now taking into account the gauge transformation, the function $\psi$ admits a Puiseux series with first term exponent $(1/2+\nu)k+\gamma$ or $(1/2-\nu)k+\gamma$, with $\gamma$ depending on the gauge transformation. If we want $\alpha$ not being a singularity of the Schroedinger equation, we need these exponents to be $0,1$. And thus we need
$$(1/2+\nu)k-(1/2-\nu)k=\pm 1$$
and thus $\nu=\pm 1/(2k)$. Remark moreover that $\nu=\pm 1/2$ leads to a logarithmic singularity for the Whittaker function, and thus $\alpha$ would always be a singularity.

Thus if $\nu\notin \{\pm 1/(2k), k\in\mathbb{N}\setminus \{0,1\}\}$, then all the roots of $f$ do not depend on $E$. And so is of the form $w(E)f(z)$.
Let us now assume $\nu=\pm 1/(2k), k\in\mathbb{N}\setminus \{0,1\}$. The pullback function is of the form $F(z)Q(z,E)^k$, with $F$ rational, $Q$ polynomial with simple roots in $z$. Let us now look at critical points of $f$. If such a critical point $\alpha$ is not on the level $f=0$, then it will give a singularity. We have
$$f'(z,E)=Q(z,E)^{k-1}(F'(z)Q(z,E)+F(z)Q'(z,E))$$
and so the right factor $F'(z)Q(z,E)+F(z)Q'(z,E)$ cannot have roots depending on $E$ (as else it would lead to a mobile singularity). And so
$$F'(z)Q(z,E)+F(z)Q'(z,E)=w_1(E)S(z)$$
This is a non homogeneous linear differential equation in $Q$, and the solutions are of the form
$$Q(z,E)=w_2(E)F(z)^{-1/k}+w_1(E)P_1(z)$$
Let us remark that we can assume that $w_1,w_2$ are not $\mathbb{C}$-dependant, as this would lead again to a pullback function of the form $w(E)f(z)$. And so both functions $F(z)^{-1/k},P_1(z)$ have to be polynomials. Let us note $F(z)=1/P_2(z)^k$, giving
$$f(z,E)=\left(w_2(E)+w_1(E)\frac{P_1(z)}{P_2(z)} \right)^k$$
which gives the second case of the Proposition.

Let us now consider the case $\mu=0$. Then $f(z,E)$ is a square root of a function rational in $z$. The same arguments as before still apply, except that the root multiplicity $k$ can be half-integer. So if $\nu\notin \{\pm 1/(2k), k\in1/2\mathbb{N}\setminus \{0,1/2,1\}\}$, then the pullback is of the form $w(E)f(z)$, and else is of the form
$$f(z,E)=\left(w_2(E)+w_1(E)\frac{P_1(z)}{P_2(z)} \right)^k$$
giving the third case of the Proposition.

\end{proof}\qed

\begin{prop}\label{proppul2}
The possible pullback functions $f(z,E)$ for non-Liouvillian eigenfunctions are up to affine transformation
\begin{itemize}
\item $z^2$ with $\mu=E/4$, $\nu$ constant.
\item $2iz\sqrt{E}$ with $\mu=1/(2i\sqrt{E})$, $\nu$ constant.
\item $2iz\sqrt{E}$ with $\mu=0$, $\nu$ constant.
\item $\frac{4i}{3}(z+E)^{3/2}$ with $\mu=0, \nu=\pm 1/3$.
\end{itemize}
\end{prop}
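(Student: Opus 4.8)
The plan is to pin down $f$, $\mu$ and $\nu$ by feeding three inputs into each other: the admissible shapes of $f$ already isolated in Propositions \ref{proppull0} and \ref{proppull}; the table of asymptotic behaviours $z^{\gamma}e^{\sum_k a_k z^{k/2}}$ at infinity together with the recorded $E$-dependence of $\gamma$ and of the $a_k$; and the defining constraint $V+E=-\psi''/\psi$ with $V$ independent of $E$. This last constraint is the engine of the proof: it says $\partial_E(V+E)\equiv 1$, a constant in $z$, so the whole question becomes ``in which ways may $E$ enter the eigenfunction so that $-\psi''/\psi$ acquires exactly a pure additive constant and nothing more''.

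First I would read off the leading behaviour. From $\mathcal{W}(\mu,\nu,f)\sim e^{-f/2}f^{\mu}$ we get $\psi\sim(\mathrm{alg})\,e^{\int g-f/2}f^{\mu}$, in which the exponential rate $-f/2$ and the power $f^{\mu}$ are independent of the prefactor $M$. Consequently the top powers of $z$ in $-\psi''/\psi$ depend only on $f$ and $\mu$, and since they all lie in $V$ they must be $E$-free except for one additive constant. Matching the leading term gives $n=2\deg_z f-2$, tying the degree of $f$ to the exponent $n$ of $V$ at infinity, so the asymptotic table immediately bounds $\deg f$ and tells me where $E$ is permitted to sit.

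Next I would run the cases of Proposition \ref{proppull}. For the scaling shape $f=w(E)\tilde f(z)$ with $\tilde f$ rational, the degree-one subcase forces $V$ to converge and, matching $a_2=\sqrt{-E}$, yields $f=2iz\sqrt E$; here the additive $E$ is the $z^{0}$ term of $-\psi''/\psi$, produced by $(f'/2)^2=-E$, and $\mu=1/(2i\sqrt E)$ or $\mu=0$ according to whether the $1/z$ term of $V$ survives ($n=-1$) or not ($n\le-2$). The degree-two subcase gives $f=z^{2}$ up to affine change, where substitution shows the additive $E$ is the constant $1+4\mu$ of $-\psi''/\psi$, forcing $\mu=E/4$. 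In these cases $\nu$ controls the exponents at the fixed singularity $f=0$, i.e. $z=0$, which are $E$-free because $V$ is, so $\nu$ is constant. For the power shape $f=(w_1(E)R(z)+w_2(E))^{k}$ with $\nu=\pm1/(2k)$, an integer $k$ with $R$ affine just reproduces $f=z^{2}$, whereas the genuinely new survivor is $k=3/2$ with $R$ affine, giving $f=\tfrac{4i}{3}(z+E)^{3/2}$, $\mu=0$, $\nu=\pm1/3$: now the additive $E$ comes from the shift $z\mapsto z+E$ inside $f$, since $((\ln\psi)')^{2}=-(z+E)$.

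The hard part will be the exclusions: pullbacks of degree $\ge3$, rational $\tilde f$ carrying finite poles, and powers with $k\ge2$ or with a non-affine base $R$. In all of these the leading asymptotic of $-\psi''/\psi$ offers no constant term and, by $E$-independence of the intermediate positive powers of $z$, pins $\mu$ to a constant, so neither $f$ nor $\mu$ can manufacture the additive $E$; one must then prove that the unknown rational $M(z,E)$ cannot rescue the situation, i.e. that no $M$ makes $\partial_E(V+E)$ a nonzero constant while leaving every positive power of $z$ in $V$ free of $E$ and $V$ rational. Controlling this contribution of $M$ to the borderline terms of $-\psi''/\psi$ is the real obstacle; once it is settled, fixing the exact constants ($\mu=E/4$, the factor $\tfrac{4i}{3}$, $\nu=\pm1/3$) is a direct substitution matching leading and constant coefficients in the Schroedinger equation.
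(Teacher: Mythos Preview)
Your plan follows the same overall strategy as the paper: feed the admissible pullback shapes from Propositions~\ref{proppull0}--\ref{proppull} and the asymptotic table at infinity into the constraint that $-\psi''/\psi=V(z)+E$ with $V$ independent of~$E$. Your identification of the four survivors, and the way $E$ enters each of them, is correct. The paper organises the case split by the exponent $n$ of $V$ at infinity rather than by the pullback shape, but this is only a cosmetic difference.

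The genuine gap is precisely the one you flag yourself: the exclusion of the cases $n\geq 3$, and more generally any situation in which the asymptotic table forces $f$, $\mu$, $\nu$ all to be constant in $E$ so that only $M(z,E)$ carries the energy dependence. You propose to attack this via $\partial_E(V+E)\equiv 1$ and then to control ``the contribution of $M$ to the borderline terms'', but you give no mechanism for doing so, and term-by-term matching in $z$ will not easily close because $M$ is an unknown rational (a~priori algebraic) function of both variables. The paper bypasses this difficulty with a single clean device you are missing: expand at $E=\infty$ rather than differentiate in~$E$. Since $M$ is algebraic in $E$, after multiplying $\psi$ by a suitable power of $E$ one obtains a nonzero limit $s(z)$; hence $\psi''/\psi$ also has a finite limit, which is impossible because $\psi''/\psi=-(V(z)+E)\to-\infty$. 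The same $E\to\infty$ expansion, now applied directly to $-\psi''/\psi$, is what pins down $f$ in every remaining case: the leading term comes out as $\mu\,f'(z)^2/f(z)$ for $n=2$, as $-\tfrac{9}{16}w_1^2 f'(z)^2\,E$ for $n=1$, and as $-\tfrac14 f'(z)^2\,E$ for $n\leq-1$, each of which forces $f'$ to be constant and hence $f$ affine in $z$, with the precise normalisation read off immediately. Replacing your $\partial_E$ heuristic by this leading-order-in-$E$ argument is exactly what is needed to turn your sketch into a proof.
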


\begin{proof}
Let us note $n$ the asymptotic exponent of $V$ at infinity. Recall that we can always assume that when $V$ converges at infinity, it converges to $0$. And thus that $n\in\mathbb{Z}^*$. The proof is made by disjunction of cases of possible $n\in\mathbb{Z}^*$.

We first remark that if $n\geq 2$, then the pullback function $f$ has to be of the form $w(E)f(z)$ according to the asymptotic expansion, and moreover, $w(E)$ is constant. In this more special case $n\geq 3$, we have moreover that the formal monodromy at infinity is constant. The formal monodromy at infinity of the Whittaker function is encoded by $\mu$, and thus $\mu$ has to be constant. So $f,\mu,\nu$ do not depend on $E$, only $M$ can depend on $E$ (and $g$ as a consequence)
$$\psi(z,E)=e^{\int g(z,E) dz}\left(M(z,E)\mathcal{W}(\mu,\nu,f(z))+\mathcal{W}'(\mu,\nu,f(z))\right)$$
After computation, we find that $g$ is algebraic in $f,E$ and their derivatives. So we make a series expansion of $\psi$ in $E$ near $E=\infty$. After multiplying by a suitable power of $E$, this produces a limit function $s(z)$, smooth almost everywhere. And thus $\psi''(z,E)/\psi(z,E)$ has a limit when $E$ tends to infinity. Impossible as $\psi''(z,E)/\psi(z,E)=V(z)+E$.

Let us now study the case $n=2$. Recall that $\mu$ is affine (non constant) in $E$, so we can make a parameter change and consider $\mu$ as the parameter instead of $E$. So $\psi''(z,E)/\psi(z,E)$ should be affine in $\mu$. Making a series expansion of this at $\mu=\infty$, we find
$$\mu\frac{f'(z)^2}{f(z)}+o(\mu)$$
Thus after possibly affine variable change, we can assume $f(z)=z^2$. This gives by the way a relation between $\mu$ and $E$, $E=4\mu+C$, and we can assume the constant $C$ is zero by putting it into $V$. To conclude, remark that $\nu$ is the exponent at the singularity $0$ of the Whittaker function. In the Schroedinger equation, the exponents do not depend on $E$, and thus so does $\nu$.\\

For the case $n=1$, the asymptotic should be of the form
$$z^{-1/4} e^{a_3z^{3/2}+a_1(E)z^{1/2}}$$
with $a_1$ affine in $E$. Looking at the possible pullbacks in Proposition \ref{proppull}, the only possible one is
$$w_1(E)(f(z)+w_2(E))^{3/2},\;\; \mu=0,\;\nu=\pm 1/3$$
and thus $w_1$ is constant, $w_2$ affine in $E$. After possibly adding a constant to $V$, we can assume $w_2(E)=E$. We find that $g$ is algebraic in $f,E$ and their derivatives and we express $\psi''(z,E)/\psi(z,E)$ in function of $f,M,E$ algebraically. We then make a series expansion at $E=\infty$, giving
$$-\frac{9}{16}w_1^2f'(z)^2E+o(E)$$
So after possibly affine variable change, we can assume $f(z)=z$ and thus $w_1=4i/3$.\\

For $n=-1$ the pullback function $f$ is of the form $w(E)f(z)$, and combining this with the asymptotic expansion, we have that $f(z,E)$ is of the form $f(z)\sqrt{-E}$. 

The asymptotic data also give us that $\gamma\sqrt{-E}$ is constant, where $\gamma$ encodes the formal monodromy exponent at infinity of $\psi(z,E)$. The $\mu$ parameter in $\mathcal{W}$ encodes the formal monodromy of $\mathcal{W}$ at infinity, and thus we obtain that 
$\gamma=\mu$ up to a (integer) constant. Thus we have a relation of the form (knowing that $\gamma\neq 0$)
$$E=\frac{\alpha}{\mu^2}+\beta,\qquad \alpha\neq 0$$
Remark that we can assume $\beta=0$ as a constant can be put in the potential $V$. We can also assume $\alpha=-1/4$ by making a dilatation of the coordinate system (which multiplies $E$ by a constant), giving $\mu=1/(2i\sqrt{E})$. The parameter $\nu$ also cannot depend on $E$ as the exponents of the Schroedinger equations do not depend on $E$. We obtain a large expression for the potential $V$ depending on $M,f,E$, and we make a series expansion at $E=\infty$, giving
$$-\frac{1}{4}f'(z)^2E+O(1)$$
Thus we have that $f(z)=2iz$ (up to affine variable change), and thus the the pullback function is $2iz\sqrt{E}$.

For $n\leq -2$, according to asymptotics, we need to have $\mu=0$. Using Proposition \ref{proppull}, the possible pullbacks are of the form $w(E)f(z)$ or
$$w_1(E)(f(z)+w_2(E))^k,\; k\in\frac{1}{2} \mathbb{Z} \setminus \{0,1/2,1\},\; \nu=\frac{1}{4k}$$
Now using the asymptotics in $z$, we have
$$w_1(E)(f(z)+w_2(E))^k=z\sqrt{-E}+O(1)$$
For $k\geq 3/2$, such series expansion is impossible, and so the only possible pullbacks are of the form $w(E)f(z)$.

With the asymptotics, we obtain moreover $w(E)=\sqrt{-E}$. Computing the corresponding potential and making a series expansion at $E=\infty$, we obtain
$$-\frac{1}{4}f'(z)^2E+o(E)$$
And thus up to affine variable change the pullback is of the form $2iz\sqrt{E}$.

\end{proof}\qed

\subsection{Liouvillian pullbacks}

Let us now look at Liouvillian eigenfunctions. The finite projective case and the log case of Theorem \ref{thmrigid} are impossible due to the asymptotic behaviour at infinity. For the diagonal case, we need to search all Schroedinger equation with diagonal Galois group (for all $E$). This implies that there exists a hyperexponentional solution for all $E$.

\begin{lem}\label{lemhyper}
If the Schroedinger equation \eqref{eq1} has one hyperexponential solution
$$\psi(z,E)=e^{\int F(z,E) dz}$$
then the space of solutions of equation \eqref{eq1} is of the form
\begin{equation}\label{eqhyper}
e^{\int g(z,\sqrt{-E})dz}\left(A\frac{M(z,E)}{\sqrt{-E}} \hbox{ch}(z\sqrt{-E})+B\hbox{sh}(z\sqrt{-E})\right) \qquad A,B\in\mathbb{C}
\end{equation}
with $g,M$ rational in both variables.
\end{lem}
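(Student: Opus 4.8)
The plan is to turn the hypothesis into a Riccati equation and then control the second solution by reduction of order. Writing $\psi = e^{\int F\,dz}$ and substituting into \eqref{eq1}, the hypothesis says exactly that $F$ is rational in $z$ and solves $F' + F^2 + V + E = 0$. Since \eqref{eq1} has no first order term it is unimodular, so its Wronskian is constant and a second, independent solution is $\psi_2 = \psi_1\int\psi_1^{-2}\,dz = e^{\int F}\int e^{-2\int F}\,dz$. The whole task is then to show that $\psi_2$ is again of elementary exponential type and that, after extracting a common gauge factor, the two solutions assemble into the stated shape with rational data.

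First I would pin down the behaviour at infinity. By the asymptotic analysis of Section 3.1, in the convergent case $\lim_{z=\infty}V = 0$ the two asymptotics are $z^\gamma e^{\pm z\sqrt{-E}}$ with $\gamma\sqrt{-E}$ constant; thus $\gamma = c/\sqrt{-E}$ and $z^\gamma = e^{\int c/(z\sqrt{-E})\,dz}$ is itself hyperexponential with exponent rational in $z$ and $\sqrt{-E}$. Choosing the branch with $\psi_1\sim z^\gamma e^{z\sqrt{-E}}$, I would write $F = \sqrt{-E} + G$ with $G$ rational in $z$ and $G\to 0$, so that $e^{-2\int F} = e^{-2z\sqrt{-E}}e^{-2\int G}$ and the reduction-of-order integral is $\int e^{-2z\sqrt{-E}}e^{-2\int G}\,dz$.

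The crucial point is that no singularity of \eqref{eq1} may move with $E$: the poles of $F$, hence of $G$, are fixed at the poles of $V$, and the local exponents there are independent of $E$. I would use this, applied uniformly in $E$, to kill any exponential-integral (Ei-type) obstruction in the reduction-of-order integral: such an obstruction carries a coefficient proportional to a residue times $e^{-2a\sqrt{-E}}$ at a pole $a$, and this transcendental-in-$\sqrt{-E}$ factor can only disappear for all $E$ if the residue itself vanishes. Hence the integral is elementary and $\psi_2 = e^{\int G}\bigl(Q(z)\,e^{-z\sqrt{-E}} + \tilde Q(z)\,e^{z\sqrt{-E}}\bigr)$ with $Q,\tilde Q$ rational. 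Both exponentials $e^{\pm z\sqrt{-E}}$ now occur, the differential Galois group is diagonal, and the solution space is $e^{\int g}$ times the $\mathbb{C}$-span of $e^{\pm z\sqrt{-E}}$ with rational coefficients, i.e. $e^{\int g}\bigl(r_1(z)\cosh(z\sqrt{-E}) + r_2(z)\sinh(z\sqrt{-E})\bigr)$ with $g,r_1,r_2$ rational in $z$ and $\sqrt{-E}$. Reorganizing this span, and using the freedom in the normalization of the gauge and the resulting dressing (the homographic non-uniqueness of $M$ noted after Theorem \ref{thmmain}), I would put it in the announced form with a single rational $M$ multiplying the $\cosh$ term.

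The rationality of $g$ and $M$ in $E$, more precisely in $\sqrt{-E}$, is recovered from the fact that the hyperexponential solution exists for every $E$ while $V$ is independent of $E$. Differentiating the Riccati relation in $E$ gives $(\partial_E F)' + 2F(\partial_E F) = -1$: since $\partial_E F$ is rational in $z$, this linear equation forces the $E$-dependence to propagate rationally through the whole construction. I expect the main obstacle to be exactly this double rationality/elementarity issue: ruling out the transcendental (exponential-integral) obstruction in $\int e^{-2z\sqrt{-E}}e^{-2\int G}\,dz$ uniformly in $E$, and then showing that the coefficients produced are genuinely rational in $\sqrt{-E}$ rather than merely algebraic. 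The no-mobile-singularity condition imposed for all $E$ is the lever that should close both gaps.
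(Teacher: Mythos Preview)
Your reduction-of-order route is workable in principle but makes the argument harder than necessary, and the step you yourself flag as the main obstacle---proving that $\int e^{-2\int F}\,dz$ is elementary---is not closed by what you wrote. The integrand is hyperexponential, not rational, so the Ei-residue heuristic does not apply as stated; and the claim that the factor $e^{-2a\sqrt{-E}}$ ``can only disappear for all $E$ if the residue itself vanishes'' presupposes some a priori control on the $\sqrt{-E}$-dependence of $\psi_2$, which is exactly what you are trying to establish. You also only handle the convergent case $\lim_{z\to\infty}V=0$: the cases $n\ge 1$ must be excluded first (the paper does so by a monodromy/scaling argument), and even $n=-1$ must be ruled out, since there the formal exponent $\gamma$ is forced to vary with $E$ while a hyperexponential whose logarithmic derivative has $E$-independent residues cannot accommodate that.

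The idea you are missing is that the second solution comes for free by symmetry, not by integration. After writing $\int F$ in partial fractions and using the no-mobile-singularity and constant-residue constraints, the paper obtains $e^{\int F}=\prod_i(z-z_i(E))^{\alpha_i}\,e^{P(z,E)+H(z)}$ with the $\alpha_i$ constant, and the asymptotic analysis (once $n\ge -1$ is excluded) forces $P(z,E)=z\sqrt{-E}$. Now the Schr\"odinger equation has coefficients in $\mathbb{C}(z,E)$, so the field automorphism $\sqrt{-E}\mapsto -\sqrt{-E}$ sends solutions to solutions: applying it to $\psi_1$ yields a second hyperexponential solution with exponential part $e^{-z\sqrt{-E}}$, automatically independent. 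No integral is evaluated, no Ei obstruction arises, and rationality in $(z,\sqrt{-E})$ is built in from the start. Factoring out the common gauge and rewriting $e^{\pm z\sqrt{-E}}$ as $\cosh,\sinh$ then gives the stated form directly.
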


\begin{proof}
We first write $F$ under partial fraction decomposition. After integration, we obtain a logarithmic part and a rational part. We know there are no mobile singularities, that the residues are constant, and that in the finite irregular singularity case the exponential part does not depend on $E$, we deduce that
$$e^{\int F(z,E) dz}=\prod\limits_{i=1}^p (z-z_i(E))^{\alpha_i} e^{P(z,E)+H(z)}$$
with $P$ polynomial in $z,E$ and $H$ rational in $z$ of negative degree. We now look at possible asymptotic expansions at $z=\infty$. Let us look at the monodromy at infinity. Recall that for $n=2,-1$, the true monodromy should depend on $E$. However, all the $\alpha_i$ are constant in $E$. The Stokes phenomenon here is trivial, and thus the true monodromy at infinity is the sum of the $\alpha_i$. So it cannot depend on $E$. For $n=1$, non rational terms are required, and so is also impossible.

For $n\geq 3$, we obtain that $P$ is constant in $E$ when $n\geq 1$, and true monodromy at infinity is constant. This implies
$$\lim\limits_{E\rightarrow\infty}  e^{\int F(z,E) dz} E^{\beta}=s(z)\neq 0$$
for a suitable $\beta$. Thus $\psi(z,E)$ would converge after rescaling to $s(z)$. Impossible as $\psi''(z,E)/\psi(z,E)=V(z)+E$. Thus $n\leq -2$.

We conclude that $P(z,E)=z\sqrt{-E}$. We can now act the Galois group element $\sqrt{-E} \longrightarrow -\sqrt{-E}$ to obtain for free a new hyperexponential solution. Let us now remark that if an $\alpha_i$ is not a positive integer, then $z_i(E)$ is a singularity of the Schrodinger equation. And thus $z_i$ cannot depend on $E$. Such a term can be put in factor for both hyperexponential solutions. This gives a solution space of the form
$$e^{\int g(z,\sqrt{-E})dz}\left(Ae^{z\sqrt{-E}} T(z,\sqrt{-E})+Be^{-z\sqrt{-E}} T(z,-\sqrt{-E})\right) \qquad A,B\in\mathbb{C}$$
with $T$ polynomial, $g$ rational. Rewriting the terms using $\hbox{ch},\hbox{sh}$ instead of exponentials, and then putting in the coefficient in front of $\hbox{sh}$ and then inside the $e^{\int g(z,\sqrt{-E})dz}$ by changing $g$, we obtain the expression \eqref{eqhyper}. And the expression of $M$ in function of $T$ proves that it is indeed rational in $E$.

\end{proof}\qed

This Lemma implies in particular that if we have one hyperexponential solution, then we have two and the solution space of equation \eqref{eq1} is the solution space of a differential equation which is a rational gauge transformation of the equation $y''+Ey=0$.\\

Let us now look at the dihedral case, for which the space of solutions is of the form
$$Ae^{\int R_1(z) dz}+Be^{\int R_2(z)dz}$$
with $R_1,R_2$ belonging to an extension of degree $2$ over $\mathbb{C}(z)$. Remark all rigid functions in the dihedral case are not algebraic pull-backs of Galois rigid operators with dihedral Galois group. Indeed, we need to take into account the field operations, as in the following example
$$(\sqrt{z^2+1}+1)^{\sqrt{2}} e^{z\sqrt{z^2+1}}$$
This is not a gauge transformation/algebraic pullback of a solution of a dihedral Galois rigid equation. However, the term in the exponential can be written $\sqrt{z^2(z^2+1)}$ and so is a pullback of $\exp{\sqrt{z}}$. So is $(\sqrt{z^2+1}+1)^{\sqrt{2}}$ with $\sqrt{z^2+1}$. Still the expression in the exponential has to be elementary, and thus the integrals $\int R_i(z) dz$ have to be elementary.

\begin{prop}
A Schroedinger equation \eqref{eq1} cannot have a rigid solution space with dihedral Galois group.
\end{prop}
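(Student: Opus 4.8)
The plan is to reduce the dihedral hypothesis to a single algebraic relation between $V$ and one rational function $g$, and then to derive a contradiction by a winding-number (argument principle) count on a small loop around a branch point. In the dihedral case the two solutions are $\psi_\pm=e^{\int(f\pm\sqrt g)\,dz}$ with $f,g\in\mathbb C(z)$ and $\sqrt g\notin\mathbb C(z)$. Writing the Riccati equation $y'+y^2+V+E=0$ for $y=\psi_\pm'/\psi_\pm=f\pm\sqrt g$ and subtracting the two conjugate copies forces $(\sqrt g)'/\sqrt g=-2f$, i.e. $f=-g'/(4g)$, so that $\psi_\pm=g^{-1/4}e^{\pm\int\sqrt g\,dz}$; adding the two copies then yields the master relation
\[
V(z)+E=\frac{g''}{4g}-\frac{5(g')^{2}}{16g^{2}}-g .
\]
Since $\sqrt g\notin\mathbb C(z)$, I would write $g=u\,q^2$ with $u$ the squarefree part of $g$ (a fixed, non-constant rational function carrying the branch locus) and $q\in\mathbb C(z)$.

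Next I would pin down the branch locus. The branch points of $\sqrt g$ are exactly the odd-order zeros and poles of $g$; at such a point the two solutions are interchanged by the local monodromy, so it is a genuine singularity of equation \eqref{eq1}. As $V$ is rational and the singularities of \eqref{eq1} have no mobile singularity, the branch locus, hence $u$, is independent of $E$, and, as in the cases treated above, $g$ is rational in $z$ and $E$. Expanding the master relation at such a branch point $a$ (an odd-order zero of $g$) gives $V+E=-\frac{5(2m+1)^2}{16}(z-a)^{-2}+\cdots$, so that $a$ is a double pole of $V$ with an $E$-independent leading coefficient; in particular $a$ is a pole of $V$.

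The crux is the winding count. Fix a small circle $\gamma$ around $a$ enclosing no other singularity of \eqref{eq1}. As $E\to\infty$ the asymptotic (WKB) analysis of the master relation gives $g=-E-V+O(1/E)$ uniformly on $\gamma$, the turning points (the zeros of $q$, where this expansion breaks) clustering at distance $O(E^{-1/2})$ around the poles of $V$, hence strictly inside $\gamma$ and not on it. Thus $g$ and $-E-V$ are nonzero and uniformly close on $\gamma$, so by the argument principle they have the same winding number around $0$ along $\gamma$. But the winding of $g=u\,q^2$ is odd, since the order of $g$ at $a$ is odd while every other zero or pole of $g$ inside $\gamma$ comes from $q^2$ and is even; whereas the winding of $-E-V$ is even, since inside $\gamma$ it has the double pole of $V$ at $a$ together with an even number (namely two) of simple zeros, the turning points where $V=-E$. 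This parity clash, odd $=$ even, is the sought contradiction, ruling out the dihedral case. Note that in the diagonal case $u$ is trivial and no such $a$ exists, so the argument is specific to genuine dihedral branching.

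The main obstacle is the analytic input in the last step, namely justifying $g=-E-V+O(1/E)$ uniformly on the fixed loop $\gamma$ and the localisation of the turning points; this rests on controlling the $E$-dependence of $g$ (its rationality in $E$, and the fact that the master relation forces the leading term $-E$ for large $E$ away from the poles of $V$). Once that estimate is in hand, the argument-principle bookkeeping and the parity conclusion are routine.
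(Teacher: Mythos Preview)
Your approach is correct and genuinely different from the paper's. You run a topological winding/parity argument at a fixed finite branch point, whereas the paper exploits the \emph{rigidity} hypothesis to force $\int\sqrt{g}\,dz$ to be elementary, then does asymptotic analysis at $z=\infty$ to reduce (after an affine change) to the single shape $g=-E(z^{2}+1)$, and finally expands $-\psi''/\psi$ at $E=\infty$ to reach a contradiction. Notably, your argument never uses rigidity and therefore proves the stronger statement that \emph{no} rational potential $V$ can have (genuinely) dihedral Galois group for all $E$; the paper's method really needs rigidity to reduce the integral to elementary form and does not yield this.

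A few remarks on details. The leading coefficient you quote for the double pole of $V$ at an odd-order zero of $g$ is slightly off: the master relation gives $-(2m+1)(2m+5)/16$, not $-5(2m+1)^{2}/16$ (they coincide only at $m=0$); this is harmless for the parity argument. You should also allow $a$ to be an odd-order \emph{pole} of $g$: then $V$ has a pole of the same odd order (or a double pole if the pole of $g$ is simple), but the parity count on $\gamma$ still gives winding $0$ for $-E-V$ versus odd for $g=u\,q^{2}$. Finally, both your proof and the paper's rest on the same external input, the rationality of $g$ in $E$ (the paper invokes Theorem~5.4 of \cite{21}); your ``as in the cases treated above'' should be read as exactly this. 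Once granted, the expansion $g=-E-V+O(1/E)$ is rigorous, not merely formal: the master relation forces $\deg_{E}g=1$ with leading coefficient $-1$ and next term $-V$, and since the Laurent coefficients in $E$ are rational in $z$ with poles only at the finitely many singularities you have excluded from $\gamma$, the bound is uniform on $\gamma$. The rest is Rouch\'e.
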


\begin{proof}

The space of solutions is of the form
$$Ae^{\int R_1(z,E) dz}+Be^{\int R_2(z,E)dz}$$
where $R_1,R_2$ are solutions of the Ricatti equation associated to \eqref{eq1}. This equation has base field coefficients $\mathbb{C}(E)$. We now use Theorem 5.4 of \cite{21}, saying that $R_1,R_2$ are in an extension of degree $2$ over $\mathbb{C}(z,E)$. And thus $R_1,R_2 \in \mathbb{C}(z,E,\sqrt{f(z,E)})$ for some polynomial $f$. We can now rewrite the solution space under the form
$$e^{\int g(z,E) dz} \left(Ae^{\int \sqrt{f(z,E)}F(z,E) dz}+Be^{-\int \sqrt{f(z,E)}F(z,E) dz} \right) $$
with $g,f,F$ rational functions in $z,E$. We also know that the Schroedinger equation is unimodular, giving a condition on $g$ allowing it to be expressed in function of $f,F$. We find in particular that the solution space should be of the form
$$\frac{1}{f(z,E)^{1/4} \sqrt{F(z,E)}}\left(Ae^{\int \sqrt{f(z,E)}F(z,E) dz}+Be^{-\int \sqrt{f(z,E)}F(z,E) dz} \right) $$

Let us now look at singularities. The solutions should not have mobile singularities, and thus $f(z,E),F(z,E)$ cannot have mobile roots/poles. And thus we can write our solution space under the form
$$\frac{1}{f(z)^{1/4} \sqrt{F(z)}}\left(Ae^{\int \sqrt{w(E)f(z)}F(z) dz}+Be^{-\int \sqrt{w(E)f(z)}F(z) dz} \right)$$
with $f,F$ rational in $z$, $w$ rational in $E$. We can moreover assume $f(z)$ polynomial with only simple roots.

Now recall that our solution has to be rigid as well. And thus the integral has to be an elementary function. So we have
$$\int \sqrt{w(E)f(z)}F(z) dz=\sqrt{w(E)f(z)}Q(z)+\sum\limits_i \alpha_i\sqrt{w(E)} \ln  u_i(z)$$
with $Q\in\mathbb{C}(z),u_i\in\mathbb{C}(z,\sqrt{f(z)})$.

Let us remark that if $w$ is constant, then we would obtain a solution space not depending in $E$, which is impossible. On the other hand, we know that in the Schroedinger equation \eqref{eq1}, there are no mobile singularities, the residues are constant, and in the finite irregular singularity case the exponential part does not depend on $E$. So we deduce that
$Q$ is a polynomial.

We now look at the asymptotic behaviour in $z$. The cases $n\geq 2$ are impossible, as $E$ has to appear in the exponential part. The case $n=1$ is also impossible as $\sqrt{w(E)}$ has to appear as a factor of all terms in the exponential. Remain the cases $n\leq -1$, for which we should have
$$\sqrt{w(E)f(z)}Q(z) \sim \sqrt{-E}z$$
As $f$ is a polynomial and cannot be constant (as else the Galois group would be diagonal instead of dihedral), we conclude that $f$ is of degree $2$, and $Q$ is constant. And so after possibly an affine coordinate change, we obtain
$$w(E)=-E,\quad f(z)=z^2+1,\quad Q(z)=1$$
Thus the Schroedinger equation should have a solution of the form
$$\psi(z,E)= \frac{ S(z)^{\sqrt{-E}} e^{\sqrt{-E}\sqrt{z^2+1}}}{(z^2+1)^{1/4} \sqrt{\frac{S'(z)}{S(z)}+\frac{z}{\sqrt{z^2+1}} }} $$
with
$$S(z)=\prod\limits_i u_i(z)^{\alpha_i}$$
We then compute a series expansion of $-\psi/\psi$ at $E=\infty$, giving
$$-\frac{\psi''(z,E)}{\psi(z,E)}=E\left(\frac{S'(z)^2}{S(z)^2}+\frac{2zS'(z)}{\sqrt{z^2+1}S(z)}+\frac{z^2}{z^2+1} \right) +o(E)$$
The integrability condition is
$$\frac{S'(z)^2}{S(z)^2}+\frac{2zS'(z)}{\sqrt{z^2+1}S(z)}+\frac{z^2}{z^2+1}=1$$
and this cannot be satisfied for a function $S(z)$ of the required form.

\end{proof}\qed

\subsection{Proof of Theorem \ref{thmmain0}}

The hyperexponential factor $\exp{\int g(z,\sqrt{-E})dz}$ can be computed explicitly by just requiring that the differential equation whose solution is the eigenfunction does not have a $\psi'$ term. This gives a first order differential equation, and this equation admits surprisingly very simple solutions, giving the prefactors in the expressions of Theorem \ref{thmmain0}. The Liouvillian case of Lemma \ref{lemhyper} is included as a special case of the third case of Theorem \ref{thmmain0}, with $\nu=1/2$. Indeed, for $\mu=0,\nu=1/2$, the Whittaker function simplifies and simply becomes an exponential. Finally we have to prove that $M(z,E)$ is not only algebraic but rational. In the Liouvillian case, this is already proved by Lemma \ref{lemhyper}. We remark that $M$ as written in Theorem \ref{thmmain0} corresponds to a gauge transformation of a differential equation with rational coefficients in $z,E$. In the non solvable case, if $M$ was algebraic and not rational, the Galois action would generate more solutions. Impossible, as we already have a vector space of dimension $2$ of solutions (a single solution of a non solvable equation generates a basis of solutions under the action of the differential Galois group).

Remark that the expressions of the gauge transformations in Theorem \ref{thmmain0} are not chosen as simple as possible, but these choices will simplify proofs of Theorem \ref{thmmain}.

\section{Classification of integrable potentials}

We will now describe all the possible $M$ of Theorem \ref{thmmain0} leading to a potential $V$.

\subsection{Mobile singularities}

We first prove the following Proposition, valid for the $4$ families of eigenfunctions of Theorem \ref{thmmain0}

\begin{prop}\label{propsing}
Let $\psi$ be a function of the form given by Theorem \ref{thmmain0} and $H$ the denominator under the square root. The function $\psi$ is a solution of a Schroedinger equation if and only if
$$H(z,E)=\frac{w(E)P(z)}{Q(z,E)}$$
with $w,P,Q$ polynomials and $\deg_E w\geq \deg_E \hbox{numer}(M)+\deg_E \hbox{denom}(M)+1$.
\end{prop}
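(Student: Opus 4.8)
The plan is to recognize that $\psi$ has the structure $\psi = N(z,E)/\sqrt{H(z,E)}$, where $N$ is built from a Whittaker function and its derivative (combined through the algebraic factor $M$) and $H$ is the expression under the square root. Since $V+E = -\psi''/\psi$, the key requirement is that this logarithmic-derivative combination be rational in $z$ with no mobile singularities, i.e. with poles independent of $E$. My first step is to compute $-\psi''/\psi$ in terms of $N$ and $H$: writing $\psi = N H^{-1/2}$ one finds
\begin{equation*}
-\frac{\psi''}{\psi} = -\frac{N''}{N} + \frac{H''}{2H} - \frac{3}{4}\frac{(H')^2}{H^2} + \frac{N'}{N}\frac{H'}{H}.
\end{equation*}
The term $N''/N$ is controlled because $N$ satisfies the pulled-back Whittaker equation by construction, so its only genuine singularities are the images of the Whittaker singularities $0,\infty$ together with the (apparent) zeros of the factor in front of $\mathcal{W}$. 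The real source of spurious, possibly mobile, singularities is the square-root denominator $H$, so the whole proposition is a constraint forcing $H$ to have the stated shape.

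Next I would analyze the zeros and poles of $H$ as functions of $z$ depending on the parameter $E$. A zero of $H$ at some point $z=\alpha(E)$ contributes, through the $H''/2H - \tfrac34 (H')^2/H^2$ terms, a double pole to $-\psi''/\psi$ unless it is cancelled. Because the Schroedinger equation has no mobile singularities (Proposition~\ref{proppul2} and the surrounding discussion fix the pullback and hence all true singularities to be $E$-independent), any $E$-dependent zero of $H$ must be apparent, i.e. not a genuine singularity of $\psi$. I would show that an $E$-dependent zero of $H$ forces a matching behaviour in $N$ that cannot occur for the Whittaker-type numerator unless the zero sits at a fixed point; the upshot is that the genuinely $z$-dependent part of $H$ with $E$-dependence can only enter through an overall factor and a denominator $Q(z,E)$ that cancels against corresponding structure in $N$. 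This is how the factorization $H = w(E)P(z)/Q(z,E)$ emerges: $P(z)$ carries the fixed (non-mobile) zeros, the numerator's $E$-dependence collapses into the scalar $w(E)$, and $Q(z,E)$ accounts for poles of $H$ that are likewise non-mobile in $z$ but may carry $E$-dependent coefficients.

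The degree inequality $\deg_E w \geq \deg_E \operatorname{numer}(M) + \deg_E \operatorname{denom}(M) + 1$ I would obtain by a bookkeeping argument on the $E$-degrees. Since $H$ appears under a square root and must combine with the $M$-dependent numerator $N$ to produce something whose logarithmic derivative is rational with bounded $E$-behaviour, the overall power of $E$ in $H$ must dominate the power of $E$ introduced by $M$. Concretely, $M$ contributes $\deg_E \operatorname{numer}(M)$ and $\deg_E \operatorname{denom}(M)$ to numerator and denominator of $N^2$, and for $\psi = N/\sqrt{H}$ to be free of mobile singularities the scalar $w(E)$ must absorb all of these plus the extra unit coming from the Wronskian normalization (the unimodularity that fixes the $H^{-1/2}$ prefactor). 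Matching the highest $E$-powers in the identity $H = $ (the explicit expression under the root in Theorem~\ref{thmmain0}) then yields the stated bound.

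The main obstacle I anticipate is the apparent-singularity analysis: proving rigorously that every $E$-dependent zero of $H$ must be apparent and that apparentness forces it either to cancel against $N$ or to be immobile. This requires a careful local Puiseux-series computation at a moving zero $z=\alpha(E)$, tracking how the $\mathcal{W}$-exponents $(1/2\pm\nu)$ interact with the gauge factor and with the half-integer power coming from $\sqrt{H}$, and showing the only consistent outcomes are the integer exponents $0,1$ that signal no singularity — essentially the same mechanism used in Proposition~\ref{proppull}, now applied to the denominator rather than to the pullback $f$. Once that local analysis pins down the admissible zero and pole structure of $H$, the global factorization and the degree count follow by routine manipulation.
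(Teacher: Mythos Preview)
Your overall instinct—that the issue is mobile singularities coming from zeros of $H$—is right, but the execution is far more complicated than necessary, and in places it drifts off target.

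For the forward direction the paper uses a one-line observation you are missing: the Schr\"odinger equation is \emph{linear}, so any singularity of a solution $\psi$ is automatically a singularity of the equation, hence a pole of $V$, hence $E$-independent. A zero of $H$ produces a branch-point (half-integer exponent) singularity of $\psi$ for a generic choice of the Whittaker solution, and that is already enough: such a zero cannot move with $E$, which forces the numerator of $H$ to split as $w(E)P(z)$. Your proposed ``careful local Puiseux-series computation \dots tracking how the $\mathcal{W}$-exponents $(1/2\pm\nu)$ interact'' is misdirected: the exponents $1/2\pm\nu$ live at zeros of the pullback $f$, not at zeros of $H$; at a generic zero of $H$ the Whittaker factor is analytic and only the $H^{-1/2}$ contributes, so no apparent-singularity mechanism of the Proposition~\ref{proppull} type is available or needed.

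For the degree bound your ``bookkeeping'' paragraph is too vague to count as an argument. The paper's route is direct: write $M=R/S$ and look at the explicit expression of $H$ (e.g.\ in case~1, $H=M^2z^2+Mz-M'z^2-z^4+z^2E-4\nu^2+1$). After clearing $S^2$ the $E$-degree of the numerator is $\max(2\deg_E R,\,2\deg_E S+1)$, which immediately gives the inequality. The only subtlety is possible cancellation between numerator and denominator, and the paper disposes of this by noting that if a root $\alpha(E)$ of $S$ fails to be a pole of $H$ then it is still a singularity of $\psi$ (through the numerator), hence immobile, so any such cancellation involves only $z$-factors and does not affect the $E$-degree. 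Your talk of a ``Wronskian normalization'' contributing the extra $+1$ does not match what is actually happening.

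Finally, you do not address the converse at all. The paper handles it by expressing $V$ as a rational function of $E$, $M$, $M'$, then eliminating $M'$ using the relation $H=w(E)P(z)/Q(z,E)$; one checks the result has no poles at roots of $w$, and the degree hypothesis then forces $\deg_E V=0$, so $V$ depends only on $z$.
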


\begin{proof}
Let us first remark that as the Schroedinger equation is linear, the singularities of the solutions are always singularities of the potential. And thus the denominators in the expression of Theorem \ref{thmmain0} are singularities of the equation, and thus poles of $V$. But then these poles should not depend on $E$. Then we can write
$$H(z,E)=\frac{w(E)P(z)}{Q(z,E)}$$
with $w,P,Q$ polynomials.

We have that $\deg_E \hbox{numer}(H)=\deg_E w$ and so we just have to prove that $\deg_E \hbox{numer}(H)\geq \deg_E \hbox{numer}(M)+\deg_E \hbox{denom}(M)+1$. Just noting $M=R/S$, replacing and taking the numerator, we find $\deg_E w=\max(2\deg_E R,2\deg_E S+1)$. This however skips the possible problem of simplifications of the fraction.

Assume there is a simplification. It would mean that a pole $\alpha(E)$ (root of $S$) is not a pole of $H$. Looking at the expression of $\psi$, this implies that $\alpha(E)$ would be a singularity of $\psi$ (as no simplification can occur with the denominator $\sqrt{H}$). And thus $\alpha(E)$ does not depend on $E$, as mobile singularities are forbidden. So factors depending on $z$ only could simplify, but this does not change the degrees in $E$. And thus we have always
$$\deg_E w\geq \max(2\deg_E R,2\deg_E S+1)\geq \deg_E \hbox{numer}(M)+\deg_E \hbox{denom}(M)+1$$

Now we prove the other way of the Proposition. We express $V(z)$ as a rational expression in $E,M$ and its derivatives. We then replace the derivatives of $M$ using $H$. We find that this expression has no singularities at the roots of $w$ (which are roots of $H$ and its derivatives). Knowing that $\deg_E w\geq \deg_E \hbox{numer}(M)+\deg_E \hbox{denom}(M)+1$, we obtain that the degree of $V$ is $0$. And thus $V$ only depend on $z$, and so is a potential.
\end{proof}\qed

Using Proposition \ref{propsing}, we have that $M$ is completely determined using Hermite rational interpolation by its evaluations at roots of $w$, and possibly derivatives in $E$ for multiple roots of $w$. Moreover, such a $M$ will always lead to a quantum integrable potential. Let us now look at what happen at a root $E_0$ of $w$. The denominator $H$ should vanish at $E_0$. We now pose for the following of the proof
$$M(z,E)=-\frac{Y'(z,E)}{Y(z,E)}$$
The equation $H=0$ becomes respectively in each of the $4$ cases of Theorem \ref{thmmain0}
\begin{equation}\begin{split}\label{eqsing}
z^2Y''(z,E_0)-zY'(z,E_0)-(z^4-E_0z^2-4\nu^2+1)Y(z,E_0)=0\\
4z^2Y''(z,E_0)+(4E_0z^2+4z-4\nu^2+1)Y(z,E_0)=0\\
4z^2Y''(z,E_0)+(4E_0z^2-4\nu^2+1)Y(z,E_0)=0\\
Y''(z,E_0)+(z+E_0)Y(z,E_0)=0
\end{split}\end{equation}
Now a simple necessary condition for getting a quantum integrable potential is that for all roots of $w$, these equations admit a hyperexponential solution (and this is sufficient when $w$ has no multiple root). The following of the proof of Theorem \ref{thmmain} will be split in $4$ parts, each corresponding to one possible eigenfunction class given by Theorem \ref{thmmain0}.

\subsection{Case $1$}

\begin{lem}[Galois groups of Whittaker equation in \cite{17}]\label{lem1}
The hyperexponential solutions of the first equation \eqref{eqsing} are
$$z^{2\epsilon_1\epsilon_2\nu+1}e^{-\epsilon_1z^2/2} \!_1F_1(-k,2\epsilon_1\epsilon_2\nu+1,\epsilon_1z^2)$$
with $E_0=\epsilon_1 (4k+2)+4\epsilon_2\nu,\; k\in\mathbb{N},\; \epsilon_1,\epsilon_2=\pm 1$.
\end{lem}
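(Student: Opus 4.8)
The plan is to recognise the first equation of \eqref{eqsing} as a Whittaker equation after the quadratic change of variable $t=z^2$, and then to read off its hyperexponential solutions from the Galois-group classification of \cite{17}.

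First I would substitute $t=z^2$. Since
$$z^2 Y'' - z Y' = 4t^2\,\frac{d^2 Y}{dt^2},\qquad t=z^2,$$
the first equation of \eqref{eqsing} becomes, after division by $4t^2$,
$$\frac{d^2 Y}{dt^2}+\left(-\frac14+\frac{E_0/4}{t}+\frac{1/4-\nu^2}{t^2}\right)Y=0,$$
which is precisely the equation satisfied by $\mathcal{W}(E_0/4,\nu,t)$. Because $z\mapsto z^2$ is a $2:1$ cover and the two pullbacks $y_1(z^2),y_2(z^2)$ of a Whittaker basis are linearly independent, every solution of the equation in $z$ is of the form $Y(z)=y(z^2)$ with $y$ a solution of the Whittaker equation; moreover $Y'/Y=2z\,(y'/y)(z^2)$, so $Y$ is hyperexponential in $z$ if and only if $y$ is hyperexponential in $t$. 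Thus the problem reduces to listing the hyperexponential solutions of the Whittaker equation with parameters $\mu=E_0/4$ and $\nu$.

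Next I would invoke the classification in \cite{17}: the differential Galois group of the Whittaker equation is $SL_2(\mathbb{C})$, hence it has no hyperexponential solution, unless one of the two confluent hypergeometric factors terminates into a polynomial. Using
$$\mathcal{W}(E_0/4,\nu,t)\ \propto\ t^{\nu+1/2}e^{-t/2}\,{}_1F_1\!\left(\nu-\tfrac{E_0}{4}+\tfrac12,\,2\nu+1,\,t\right),$$
the termination condition $\nu-E_0/4+1/2=-k$ gives $E_0=(4k+2)+4\nu$ and the corresponding hyperexponential solution $t^{\nu+1/2}e^{-t/2}\,{}_1F_1(-k,2\nu+1,t)$. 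The two discrete symmetries of the Whittaker equation then produce the remaining families: replacing $\nu$ by $-\nu$ (which fixes the equation) yields the sign $\epsilon_2=\pm1$, and the reflection $(t,\mu)\mapsto(-t,-\mu)$, which exchanges the $e^{-t/2}$ and $e^{+t/2}$ asymptotics, yields the sign $\epsilon_1=\pm1$. Combining them gives exactly $E_0=\epsilon_1(4k+2)+4\epsilon_2\nu$ with solution $t^{\epsilon_1\epsilon_2\nu+1/2}e^{-\epsilon_1 t/2}\,{}_1F_1(-k,2\epsilon_1\epsilon_2\nu+1,\epsilon_1 t)$, and substituting back $t=z^2$ produces the stated form.

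The substance of the argument is entirely in the cited Galois computation; the work left to me is the bookkeeping needed to show that these four sign families are genuinely all the hyperexponential solutions and that they are exhaustive. The main obstacle I anticipate is the degenerate parameter values: when $2\epsilon_1\epsilon_2\nu+1$ is a non-positive integer the naive ${}_1F_1$ must be replaced by the second (logarithmic) solution, and for special $\nu$ two of the four families coincide. I would handle these by the limiting/interpolation process already mentioned in the remarks following Theorem \ref{thmmain}, and I would confirm by the same classification that for all $E_0$ outside the listed values the Whittaker equation is irreducible, so that no further hyperexponential solution can exist.
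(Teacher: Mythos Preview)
Your proposal is correct and is essentially what the paper does: the lemma is stated with the tag ``Galois groups of Whittaker equation in \cite{17}'' and is not proved in the paper at all, being taken directly from that reference. Your reduction via $t=z^2$ to the standard Whittaker equation with $\mu=E_0/4$ is exactly the link the paper is implicitly using (it is also the pullback announced in case~1 of Theorem~\ref{thmmain0}), and your reading-off of the termination conditions and the $\epsilon_1,\epsilon_2$ symmetries is the routine unpacking of what the citation provides; the degenerate-$\nu$ caveat you raise is likewise handled in the paper only later, by the same limit/regularisation you describe.
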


Remark that it is possible that the equation admits $2$ hyperexponential solutions for some specific $E_0$. This case corresponds to when $E_0$ can be written $\epsilon_1 (4k+2)+4\epsilon_2\nu$ in two different ways. Now this Lemma gives a condition on the roots of $w$, and gives a formula for $M$ at the roots of $w$. If $w$ has no multiple roots, $M$ can be recovered through Pade interpolation, giving Theorem \ref{thmmain} in the case $1$.

We now focus on multiple roots of $w$. As $w$ vanishes at some $E_0$ at order $p\geq 2$, we can differentiate equation \eqref{eqsing} with respect to $E_0$, giving us additional equations. Now the condition for getting a quantum integrable potential is that the logarithmic derivative in $z$ of the series solution $Y$ in $E$ at order $p$ has rational coefficients in $z$ (the function $M$ can then be recovered by Pade Hermite interpolation).

\begin{lem}\label{lem2}
Assume $w$ has a double root at $E_0=\epsilon_1 (4k+2)+4\epsilon_2\nu$ with $Y(z,E_0)$ the hyperexponential function given by Lemma \ref{lem1}. The function $Y$ leads to a quantum integrable potential if and only if $2\epsilon_1\epsilon_2\nu+k\in \mathbb{N}$.
\end{lem}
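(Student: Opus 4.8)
The plan is to translate the double-root condition into a rationality statement about the first-order coefficient of $M$ in its local expansion at $E_0$, and then to reduce that statement to a first-order linear ODE whose rational solvability can be read off from the Laguerre structure of the hyperexponential solution of Lemma \ref{lem1}. First I would expand $Y(z,E)=Y_0(z)+(E-E_0)Y_1(z)+O((E-E_0)^2)$, where $Y_0$ is the hyperexponential solution of Lemma \ref{lem1}. Since $M=-Y'/Y$, this gives $M=-\partial_z\ln Y_0-(E-E_0)\,\partial_z(Y_1/Y_0)+O((E-E_0)^2)$. The zeroth-order term $-\partial_z\ln Y_0$ is automatically rational in $z$: writing $Y_0=z^{a}e^{-\epsilon_1 z^2/2}P(z)$ with $a=2\epsilon_1\epsilon_2\nu+1$ and $P={}_1F_1(-k,a,\epsilon_1 z^2)$ a polynomial in $z^2$, its logarithmic derivative $\tfrac a z-\epsilon_1 z+P'/P$ is rational. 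Hence the whole content of ``a double root yields a quantum integrable potential'' is the rationality in $z$ of $v:=\partial_z(Y_1/Y_0)$. The function $Y_1$ is pinned down by differentiating the first equation of \eqref{eqsing} with respect to $E_0$, which yields the inhomogeneous equation $L_{E_0}Y_1=-z^2Y_0$, where $L_{E_0}$ is the operator of the first line of \eqref{eqsing}.

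Next I would substitute $Y_1=uY_0$ and use $L_{E_0}Y_0=0$ to collapse this to a first-order linear equation for $v=u'$, namely $z^2v'+v\,(2z^2 Y_0'/Y_0-z)=-z^2$. Passing to $t=z^2$ and factoring out the polynomial part via $v=z\,\sigma(t)/P(t)^2$ turns it into
\[
t\,\sigma'(t)+(a-\epsilon_1 t)\,\sigma(t)=-\tfrac12\,P(t)^2,\qquad P(t)={}_1F_1(-k,a,\epsilon_1 t),
\]
so the question becomes purely: for which $(a,k)$ does this equation admit a rational solution $\sigma$? Its homogeneous solutions are $t^{-a}e^{\epsilon_1 t}$, never rational, so a rational particular solution is unique when it exists, and the integrating factor $t^ae^{-\epsilon_1 t}$ gives $\sigma=t^{-a}e^{\epsilon_1 t}\bigl(C-\tfrac12\int t^{a-1}e^{-\epsilon_1 t}P(t)^2\,dt\bigr)$.

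The decisive step is to recognise $P$ as a rescaled Laguerre polynomial $L_k^{(a-1)}$ and to isolate the single obstruction to rationality. For $\epsilon_1=1$ and $a>0$ I would anchor the antiderivative at $+\infty$; the small-$t$ expansion then produces a term $t^{-a}$ whose coefficient is, up to a nonzero constant, the Laguerre squared norm $\int_0^\infty t^{a-1}e^{-t}P(t)^2\,dt$, which is strictly positive. Thus $\sigma$ is rational exactly when this unavoidable $t^{-a}$ is itself rational, i.e. when $a$ is a positive integer; for $a\notin\mathbb{Z}$ the incomplete-Gamma contribution genuinely survives and no rational $\sigma$ exists. The opposite sign $\epsilon_1=-1$ and the sign of $\nu$ (that is, $\epsilon_2$) are handled by the symmetry $z\mapsto iz$, which exchanges the two cases. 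Collecting these, a rational $v$ exists precisely when $a\in\mathbb{Z}$ with $a+k\ge1$, which is exactly $2\epsilon_1\epsilon_2\nu+k\in\mathbb{N}$.

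I expect the main obstacle to be the degenerate range $a\in\mathbb{Z}_{\le0}$, where both the polynomial $P$ and the norm integral above break down (the Pochhammer $(a)_k$ may vanish and the weight $t^{a-1}$ is non-integrable at $0$). There I would instead use the explicit recursion for the coefficients of a candidate polynomial $\sigma$ of degree $2k-1$ obtained by matching powers of $t$: every coefficient is fixed from the top downward, leaving a single consistency equation $a\,\sigma_0=-\tfrac12$ at the constant term, which I would evaluate through the Kummer/Laguerre identities relating $L_k^{(-\ell-1)}$ to $L_{k-\ell-1}^{(\ell+1)}$ to confirm that it holds precisely on the threshold $a+k\ge1$. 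This integer-threshold bookkeeping at the boundary is the only genuinely delicate part; everywhere else the argument reduces to the clean dichotomy ``the sole obstruction is the (nonzero) Laguerre norm, harmless exactly when $a$ is a positive integer.''
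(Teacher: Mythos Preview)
Your reduction is the same as the paper's: both arrive at the condition that
\[
M_1(z)=\frac{z}{Y_0(z)^2}\int \frac{Y_0(z)^2}{z}\,dz
\]
be rational, and after the substitution $t=z^2$ this is exactly your integral $\int t^{a-1}e^{-\epsilon_1 t}P(t)^2\,dt$ with $a=2\epsilon_1\epsilon_2\nu+1$. The difference is only in how the single obstruction coefficient is shown to be nonzero. The paper expands $P^2=\sum_n v_{n,k}t^n$, writes the rationality condition as the vanishing of $\sum_n v_{n,k}\epsilon_1^n\Gamma(a+n)/\Gamma(a-1)$, and evaluates this sum by holonomic computer algebra (Zeilberger-type creative telescoping), obtaining the closed form $\epsilon_1\epsilon_2\,2\Gamma(a)\,k!\,\nu/\Gamma(k+a)$, nonzero for $2\nu\notin\mathbb Z$. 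Your identification of this same quantity with the Laguerre squared norm is more conceptual and avoids the machine computation entirely, which is a genuine gain.

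The gap is that you justify non-vanishing by \emph{positivity} of the Laguerre norm, and that only makes sense for real $a>0$ with $\epsilon_1=1$. The lemma is stated for arbitrary complex $\nu$; the symmetry $z\mapsto iz$ you invoke swaps $\epsilon_1$ but sends $a\mapsto 2-a$, so it does not reduce every complex $a$ to the convergent-weight range. The repair is immediate: replace the positivity appeal by the explicit classical value
\[
\int_0^\infty t^{a-1}e^{-t}\,{}_1F_1(-k,a,t)^2\,dt=\frac{k!\,\Gamma(a)^2}{\Gamma(a+k)},
\]
note that the obstruction coefficient is, for each fixed $k$, a rational function of $a$ determined by a finite linear system, hence this identity extends to all $a$ off the poles; the right-hand side vanishes nowhere on $\mathbb C\setminus\mathbb Z_{\le 0}$, which gives necessity for $2\nu\notin\mathbb Z$ exactly as in the paper. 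For the degenerate range $a\in\{0,-1,\dots,1-k\}$ the paper proceeds as you anticipate, regularizing ${}_1F_1$ by $\Gamma(a+k)/\Gamma(a)$ and then observing that the integrand has valuation $a+k-1\ge 0$ at $t=0$, so one is integrating a polynomial times an exponential and rationality is automatic; your recursion sketch would also work but is not needed.
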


\begin{proof}
We have the equation
$$M(z,E)^2z^2+M(z,E)z-M'(z,E)z^2-z^4+z^2E-4\nu^2+1=O(E^2)$$
Using Lemma \ref{lem1}, we know that
$$M(z,E_0)=-\frac{Y'(z)}{Y(z)}$$
with
$$Y(z)=z^{2\epsilon_1\epsilon_2\nu+1}e^{-\epsilon_1z^2/2} \!_1F_1(-k,2\epsilon_1\epsilon_2\nu+1,\epsilon_1z^2)$$
Noting
$$M(z,E)=-\frac{Y'(z)}{Y(z)}+(E-E_0)M_1(z)+O((E-E_0)^2)$$
and injecting it in the equation of $M$ above, we find the solutions for $M_1$
$$M_1(z)=\frac{z}{Y(z)^2}\int \frac{Y(z)^2}{z} dz$$
The condition for leading to a quantum integrable potential is that $M(z,E)$ should be rational in $z,E$, and thus that $M_1(z)$ should be rational in $z$. Looking at the integral above, this condition becomes
$$\int z^{4\epsilon_1\epsilon_2\nu+1}e^{-\epsilon_1z^2}\!_1F_1(-k,2\epsilon_1\epsilon_2\nu+1,\epsilon_1z^2)^2 dz \in  e^{-\epsilon_1z^2}z^{4\epsilon_1\epsilon_2\nu}\mathbb{C}(z)$$

We perform a variable change transforming the antiderivative computation in 
$$\int z^{2\epsilon_1\epsilon_2\nu}e^{-\epsilon_1z}\!_1F_1(-k,2\epsilon_1\epsilon_2\nu+1,\epsilon_1z)^2 dz$$
This has to be an element of $e^{-\epsilon_1z}z^{2\epsilon_1\epsilon_2\nu}\mathbb{C}(z)$.

Let us first look at when $\nu\in 1/2\mathbb{Z}$, we are integrating a rational function times exponential. The only possible pole is at $z=0$ (the $\!_1F_1$ is a polynomial). However, for such $\nu$, the function $\!_1F_1$ can become singular. This can be solve using a regularization process, multiplying by some function in $\nu$, giving
$$\frac{\Gamma(2\epsilon_1\epsilon_2\nu+1+k)}{\Gamma(2\epsilon_1\epsilon_2\nu+1)}\;\!\!_1F_1(-k,2\epsilon_1\epsilon_2\nu+1,\epsilon_1z)$$
instead of the $\!_1F_1$. Now the valuation at $z=0$ of
$$z^{2\epsilon_1\epsilon_2\nu}\frac{\Gamma(2\epsilon_1\epsilon_2\nu+1+k)^2}{\Gamma(2\epsilon_1\epsilon_2\nu+1)^2}\;\!\!_1F_1(-k,2\epsilon_1\epsilon_2\nu+1,\epsilon_1z)^2$$
is $2\epsilon_1\epsilon_2\nu+k$, and thus if this quantity is non-negative, we are integrating a polynomial times exponential. The condition is then fulfilled.

Let us now prove that for $2\nu\notin\mathbb{Z}$, the condition cannot be satisfied. Let us note $v_{n,k}$ the coefficients in $z^n$ of the polynomial $\!_1F_1(-k,2\epsilon_1\epsilon_2\nu+1,\epsilon_1z)^2$. Trying to express the antiderivative as an element of $e^{-\epsilon_1z}z^{2\epsilon_1\epsilon_2\nu}\mathbb{C}(z)$, we obtain a big linear system, and when $2\nu\epsilon_1\epsilon_2 \notin\mathbb{Z}$, we can solve it under the condition
$$\sum\limits_{n=0}^{2k}v_{n,k}\epsilon_1^n\frac{\Gamma(2\epsilon_1\epsilon_2\nu+n+1)}{\Gamma(2\epsilon_1\epsilon_2\nu)}=0$$

The coefficients $v_{n,k}$ satisfy holonomic system (see \cite{18,19} for basic properties) with shifts in $n,k$ as a convolution of $P$-finite sequences, the coefficients of the $\!_1F_1$. So is $\epsilon_1^n\Gamma(2\epsilon_1\epsilon_2\nu+n+1)$. As the holonomic property is stable by definite summation, this above sum as a sequence in $k$ also satisfy a recurrence equation. This can be found thanks to the holonomic package \cite{20}, giving the relation
$$\sum\limits_{n=0}^{2k}v_{n,k}\epsilon_1^n\frac{\Gamma(2\epsilon_1\epsilon_2\nu+n+1)}{\Gamma(2\epsilon_1\epsilon_2\nu)}=
\epsilon_1\epsilon_2\frac{2\Gamma(2\epsilon_1\epsilon_2\nu+1)\Gamma(k+1)\nu}{\Gamma(k+1+2\epsilon_1\epsilon_2\nu)}$$
The only admissible root of the righthandside is $\nu=0$, which is excluded. Remark that the poles of the righthandside are when $2\nu\in\mathbb{Z}$, and thus are excluded (these are exactly the singularities of the $\!_1F_1$ function in $\nu$).
\end{proof}\qed

Double roots are only possible for $2\epsilon_1\epsilon_2\nu+k\in \mathbb{N}$. Looking now at the case with simple roots (with generic $\nu$), we can produce a double root by taking a particular $\nu$: indeed, if there are two roots of the form $4\nu +(4k_1+2)$, $-4\nu +(4k_2+2)$, they fuse together when $\nu=(k_2-k_1)/2$. So the double root case at $\epsilon_2 \nu_0 +\epsilon_1 (k_0+1/2)$ can be obtained from the simple root case with two roots
$$\epsilon_2 \nu+ \epsilon_1 (k_0+1/2), -\epsilon_2 \nu +\epsilon_1 (k_0+2\epsilon_2\epsilon_1\nu_0+1/2) \hbox{ if } \epsilon_1\epsilon_2=1$$
$$\epsilon_2 \nu+ \epsilon_1 (k_0-2\epsilon_2\epsilon_1\nu_0+1/2), -\epsilon_2 \nu +\epsilon_1 (k_0+1/2) \hbox{ if } \epsilon_1\epsilon_2=-1 $$
So the double root case is included in the simple root case as a limit for a specific $\nu$.

To conclude, let us prove that triple root or more are not possible

\begin{lem}\label{lem3}
If $w$ has a triple or more root $E_0$, then there is no $Y$ leading to a quantum integrable potential
\end{lem}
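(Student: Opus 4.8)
The plan is to reduce the triple-root case to the impossibility of making the first two Taylor coefficients of $M$ in $E-E_0$ simultaneously rational, and then to exhibit a local obstruction at the zeros of the confluent hypergeometric factor.

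First I would push the expansion of Lemma \ref{lem2} one order further. Writing
$$M(z,E)=-\frac{Y'(z)}{Y(z)}+(E-E_0)M_1(z)+(E-E_0)^2 M_2(z)+O((E-E_0)^3)$$
and imposing that the first equation of \eqref{eqsing} (the vanishing of $H$) holds to order $(E-E_0)^3$, the coefficient of $(E-E_0)^2$ yields a first order linear equation for $M_2$ whose homogeneous part is identical to the one already solved for $M_1$, but with source term $M_1^2$. Using the same integrating factor $Y^2/z$ one obtains
$$M_2(z)=\frac{z}{Y(z)^2}\int\frac{M_1(z)^2 Y(z)^2}{z}\,dz.$$
A root of $w$ of order $p\geq 3$ forces in particular both $M_1$ and $M_2$ to be rational in $z$, so it suffices to prove that these two conditions are incompatible; the higher coefficients $M_3,\dots$ only add further constraints.

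Next I would insert the explicit shape of $M_1$. By Lemma \ref{lem2}, $M_1$ rational already requires $2\epsilon_1\epsilon_2\nu+k\in\mathbb{N}$, and in that case $M_1=R_1/(zL^2)$ with $L(z)={}_1F_1(-k,2\epsilon_1\epsilon_2\nu+1,\epsilon_1 z^2)$ and $R_1$ a polynomial. Substituting into the formula for $M_2$ and reducing exactly as in Lemma \ref{lem2}, rationality of $M_2$ becomes equivalent to
$$\int R_1(z)^2 z^{4\epsilon_1\epsilon_2\nu-1}e^{-\epsilon_1 z^2}L(z)^{-2}\,dz\in e^{-\epsilon_1 z^2}z^{4\epsilon_1\epsilon_2\nu}\mathbb{C}(z).$$
The essential new feature, compared with Lemma \ref{lem2} where one integrated a polynomial times an exponential, is that $L$ now appears in the denominator: the integrand has genuine double poles at the nonzero roots $\beta$ of $L$, which are simple roots since $L$ is a Laguerre type polynomial in $z^2$ (and for the values $2\nu\in\mathbb{Z}$ I would use the same regularization as in Lemma \ref{lem2}).

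The hard part will be to show that these double poles cannot be absorbed into the target field. Membership in $e^{-\epsilon_1 z^2}z^{4\epsilon_1\epsilon_2\nu}\mathbb{C}(z)$ is equivalent to the existence of a rational solution $R_2=zL^2M_2$ of
$$R_2'+\left(\frac{4\epsilon_1\epsilon_2\nu}{z}-2\epsilon_1 z\right)R_2=\frac{R_1(z)^2}{zL(z)^2}.$$
At a nonzero root $\beta$ of $L$ the homogeneous coefficient is regular, so local analysis forces $R_2$ to have a simple pole whose residue is pinned down by the coefficient of $(z-\beta)^{-2}$ on the right; matching the coefficient of $(z-\beta)^{-1}$ then leaves a single compatibility relation between that residue and the true residue of $R_1^2/(zL^2)$ at $\beta$. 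I expect this relation to fail, which forbids any rational $R_2$. After checking that $R_1$ is not divisible by $L$, so that the poles are genuine, I would establish the failure along the lines of Lemma \ref{lem2}: the compatibility defect, expressed through the coefficients of $L^2$, is a definite sum of $P$-finite sequences, and creative telescoping evaluates it in closed form to a quantity vanishing only for the excluded values of $\nu$. This closed-form evaluation is the main obstacle, being the analogue of the $\Gamma$-quotient identity that concluded Lemma \ref{lem2}. Since a triple root already forces $M_2$ rational, its impossibility rules out roots of order three or more, proving the Lemma.
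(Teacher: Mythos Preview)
Your setup agrees with the paper: the expansion of $M$ to second order, the first--order linear ODE for $M_2$, and the integral formula
$$M_2(z)=\frac{z}{Y(z)^2}\int\frac{M_1(z)^2 Y(z)^2}{z}\,dz=\frac{z}{Y(z)^2}\int\frac{z}{Y(z)^2}\left(\int\frac{Y(z)^2}{z}\,dz\right)^2 dz$$
are all correct. The divergence from the paper is in \emph{where} you look for the obstruction to rationality of $M_2$, and this is where your plan runs into real difficulties.

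You propose to localise at the nonzero roots $\beta$ of $L={}_1F_1(-k,2\epsilon_1\epsilon_2\nu+1,\epsilon_1 z^2)$. Two concrete problems arise. First, when $k=0$ one has $L\equiv 1$, so there are no such roots at all and your obstruction is vacuous, yet this case must still be excluded (and the paper's formula for the residue gives $p!/4\neq 0$ there). Second, even for $k\geq 1$, the compatibility defect you describe at a root $\beta$ involves $\beta$ itself, and the roots of Laguerre--type polynomials are only implicitly defined transcendental functions of the parameters $k,p$; they are not holonomic data, so the creative--telescoping machinery you invoke cannot be applied to a quantity evaluated at such a point. Summing over all roots to get symmetric functions might salvage holonomicity, but you have not articulated this, and in any event the empty sum at $k=0$ still gives nothing.

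The paper avoids both issues by placing the obstruction at the \emph{fixed} point $z=0$: it shows the integrand $\tfrac{z}{Y^2}\bigl(\int Y^2/z\,dz\bigr)^2$ has a nonzero residue there, so the outer integral acquires a logarithm and $M_2$ cannot be rational. The key device, which your proposal is missing, is an integration by parts that trades the problematic $1/Y^2$ for the second independent solution $\tilde Y=Y\int z/Y^2\,dz$, rewriting the residue as $\tfrac{1}{2\pi i}\oint_0 \tfrac{2}{z}Y\tilde Y\bigl(\int Y^2/z\,dz\bigr)\,dz$. This integrand is a product of genuinely holonomic objects in $k,p$, so creative telescoping applies and yields the closed form $\Gamma(k+1)\Gamma(p+1-k)^2/(4\Gamma(p+1))$, which never vanishes. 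That single rewriting is the idea your argument needs.
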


\begin{proof}
We have the equation
$$M(z,E)^2z^2+M(z,E)z-M'(z,E)z^2-z^4+z^2E-4\nu^2+1=O(E^3)$$
Using Lemma \ref{lem2}, we can assume $E_0=\epsilon_1(4k+2)+2\epsilon_1(p-k)$ and $\nu=\epsilon_1\epsilon_2(p-k)/2$ with $k,p\in\mathbb{N}$. We also know that
$$M(z,E_0)=-\frac{Y'(z)}{Y(z)}$$
with
$$Y(z)=z^{p-k+1}e^{-\epsilon_1z^2/2} \!_1F_1(-k,p-k+1,\epsilon_1z^2)$$
Noting
$$M(z,E)=-\frac{Y'(z)}{Y(z)}+(E-E_0)M_1(z)+(E-E_0)^2M_2(z)+O((E-E_0)^3)$$
and injecting it in the equation of $M$ above, we find the solutions for $M_2$
\begin{equation}\label{eqM2}
M_2(z)=\frac{z}{Y(z)^2}\int \frac{z}{Y(z)^2} \left(\int \frac{Y(z)^2}{z} dz\right)^2 dz
\end{equation}

As the integrability condition of Lemma \ref{lem2} is satisfied, we know that
$$\frac{z}{Y(z)^2} \int \frac{Y(z)^2}{z} dz \in\mathbb{C}(z)$$
Let us now prove that $M_2$ has monodromy around $0$. More precisely, we will prove that the series expansion at $0$ of
\begin{equation}\label{eqmon}
\frac{z}{Y(z)^2} \left(\int \frac{Y(z)^2}{z} dz\right)^2
\end{equation}
has a non zero residue. However, the residue of this expression does not appear to be holonomic (recall that dividing by $Y(z)$ is a priori forbidden). Let us first remark that
$$\tilde{Y}(z)= Y(z)\int \frac{z}{Y(z)^2} dz$$
is in fact the second independent solution of the first equation \eqref{eqsing}, and thus is holonomic. We now rewrite the residue expression using integration by parts
\begin{align*}
\frac{1}{2i\pi}\oint_0 \frac{z}{Y(z)^2} \left(\int \frac{Y(z)^2}{z} dz\right)^2 dz=\\
\frac{1}{2i\pi}\oint_0 \frac{2}{z}Y(z)^2\int \frac{z}{Y(z)^2} dz\int \frac{Y(z)^2}{z} dz dz= \\
\frac{1}{2i\pi} \oint_0 \frac{2}{z}Y(z)\tilde{Y}(z)\int \frac{Y(z)^2}{z} dz dz
\end{align*}
which is now clearly holonomic. Thus we can find an holonomic system in $p,k$ for the monodromy of this expression around $0$. We now use the holonomic package \cite{20}, and we find the simple formula
$$\frac{1}{2i\pi}\oint_0 \frac{z}{Y(z)^2} \left(\int \frac{Y(z)^2}{z} dz\right)^2 dz=\frac{\Gamma(k+1)\Gamma(p+1-k)^2}{4\Gamma(p+1)}$$
Remark that the formula degenerates for $p<k$. This is due to the singular definition of function $Y$, as it has a pole of order one in such case. Noting that the expression is homogeneous of degree $2$ in $Y(z)$, we expect a pole of order $2$ of the righthandside. We then regularize the formula by a limit process
$$\lim_{\alpha=0}\alpha^2\frac{\Gamma(k+1)\Gamma(p+\alpha+1-k)^2}{4\Gamma(p+\alpha+1)}= \frac{\Gamma(k+1)}{\Gamma(k-p)^2\Gamma(p+1)}$$
We now see that these expressions never vanish for $p,k\in\mathbb{N}$, implying that $M_2$ can never be rational.

\end{proof}\qed

\subsection{Case $2$}

The most important remark here is that the second equation of \eqref{eqsing} is the same as he first equation of \eqref{eqsing} after the variable change
$$Y(z,E)=K\left(\frac{1}{4}z^2E,-\frac{4}{E^2}\right)$$
Thus the previous results of last section apply the same, carrying the variable change in $z$ and $E$. The change of parameter $E$ change accordingly the values of $E_0$ for which the equation admits a hyperexponential solution, and the multiple root cases are also the same after this variable change. We thus obtain the same following results

\begin{lem}\label{lem1b}
The hyperexponential solutions of the second equation \eqref{eqsing} are
$$z^{\epsilon\nu+1/2}e^{-\frac{z}{2\epsilon\nu+2k+1}} \!_1F_1\left(-k,2\epsilon\nu+1,\frac{2z}{2\epsilon\nu+2k+1}\right)$$
with $E_0=-1/(2\epsilon\nu+2k+1)^2,\; k\in\mathbb{N},\; \epsilon=\pm 1$.
\end{lem}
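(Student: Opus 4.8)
The plan is to identify the second equation of \eqref{eqsing} with a rescaled Whittaker equation and then invoke the very same confluent--hypergeometric termination (equivalently, Galois reducibility) condition that already produced Lemma \ref{lem1}. Concretely, I would divide that equation by $4z^2$ to obtain $Y'' + \left(E_0 + \frac{1}{z} + \frac{1/4-\nu^2}{z^2}\right)Y = 0$, and then scale the variable by $\zeta = \sqrt{-4E_0}\,z$ to bring it to the form $\mathcal{W}(\mu,\nu,\zeta)$ with $\mu = (-4E_0)^{-1/2}$, in agreement with case $2$ of Theorem \ref{thmmain0}. This is exactly the content of the remark preceding the statement: up to a change of the independent variable and of the spectral parameter, this is the same special-function equation as in case $1$, so the reducibility analysis of the previous subsection transports. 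Since the transformation is algebraic, the logarithmic derivative $Y'/Y$ transforms rationally, so hyperexponentiality is preserved in both directions and it suffices to translate the explicit list of Lemma \ref{lem1}.

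Next I would set up the parameter dictionary. The hyperexponential case is precisely the reducible case of the Whittaker/confluent equation, i.e. the case in which the Kummer series $\!_1F_1\!\left(\tfrac12+\nu-\mu,\,1+2\nu,\,\cdot\right)$ terminates, which happens when $\mu = \epsilon\nu + k + \tfrac12$ with $k\in\mathbb{N}$ and $\epsilon=\pm1$; the two signs $\epsilon_1,\epsilon_2$ of Lemma \ref{lem1} collapse into the single sign $\epsilon$ through the $\nu\mapsto-\nu$ symmetry of the equation. Substituting $\mu = (-4E_0)^{-1/2}$ and solving $(-4E_0)^{-1/2} = (2\epsilon\nu+2k+1)/2$ gives $\sqrt{-4E_0} = 2/(2\epsilon\nu+2k+1)$, hence $E_0 = -1/(2\epsilon\nu+2k+1)^2$, which is the claimed set of admissible values.

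I would then write down and simplify the corresponding solution. Starting from $M_{\mu,\nu}(\zeta) = \zeta^{\nu+1/2}e^{-\zeta/2}\,\!_1F_1(-k,2\nu+1,\zeta)$ with $\zeta = \sqrt{-4E_0}\,z = 2z/(2\epsilon\nu+2k+1)$, undoing the scaling and discarding the irrelevant constant factor coming from $\zeta^{\nu+1/2}$ yields $z^{\epsilon\nu+1/2}e^{-z/(2\epsilon\nu+2k+1)}\,\!_1F_1\!\left(-k,2\epsilon\nu+1,\frac{2z}{2\epsilon\nu+2k+1}\right)$, which is exactly the asserted formula (one can also verify it is the image of the first-equation solution $z^{2\epsilon_1\epsilon_2\nu+1}e^{-\epsilon_1z^2/2}\,\!_1F_1(-k,2\epsilon_1\epsilon_2\nu+1,\epsilon_1z^2)$ under the change of variables of the remark).

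The essential mathematical input here is borrowed from \cite{17} and Lemma \ref{lem1}, so the \emph{main obstacle} is not a new difficulty but careful bookkeeping: tracking how the Gaussian factor and the $z^2$-argument of the Kummer function in case $1$ turn into the linear exponential and linear argument of case $2$, and fixing the correct branch of $\sqrt{-4E_0}$ so that all sign conventions match. I would finish, as after Lemma \ref{lem1}, by noting that for special values of $\nu$ the value $E_0 = -(2\epsilon\nu+2k+1)^{-2}$ may be representable in two distinct ways $(\epsilon,k)$, in which case the equation admits two hyperexponential solutions.
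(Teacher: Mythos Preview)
Your proposal is correct and follows essentially the same approach as the paper: the paper reduces the second equation of \eqref{eqsing} to the first one via the explicit substitution $Y(z,E)=K(\tfrac14 z^2E,-4/E^2)$ and then simply transports Lemma~\ref{lem1}, while you perform the equivalent reduction directly to the Whittaker normal form and invoke the same Kummer termination criterion from \cite{17}. The only difference is cosmetic---you compose the two changes of variable into one---and you already note this equivalence in your final paragraph.
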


It is still possible that the equation admits $2$ hyperexponential solutions for some specific $E_0$. This case corresponds to when $E_0$ can be written $-1/(2\epsilon\nu+2k+1)^2$ in two different ways. Now this Lemma gives a condition on the roots of $w$, and gives a formula for $M$ at the roots of $w$. If $w$ has no multiple roots, $M$ can be recovered through Pade interpolation, giving Theorem \ref{thmmain} in the case $2$.

We now focus on multiple roots of $w$. As $w$ vanishes at some $E_0$ at order $p\geq 2$, we can differentiate equation \eqref{eqsing} with respect to $E_0$, giving us additional equations. Now the condition for getting a quantum integrable potential is that the logarithmic derivative in $z$ of the series solution $Y$ in $E$ at order $p$ has rational coefficients in $z$ (the function $M$ can then be recovered by Pade Hermite interpolation).

\begin{lem}\label{lem2b}
Assume $w$ has a double root at $E_0=-1/(2\epsilon\nu+2k+1)^2$ with $Y(z,E_0)$ the hyperexponential function given by Lemma \ref{lem1b}. If $Y$ leads to a quantum integrable potential, then $2\epsilon\nu+k\in \mathbb{N}$.
\end{lem}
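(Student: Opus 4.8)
\emph{Strategy.} The plan is to transport Lemma \ref{lem2} through the variable change $Y(z,E)=K\left(\frac{1}{4}z^2E,-\frac{4}{E^2}\right)$ that identifies the second equation in \eqref{eqsing} with the first. Under it the hyperexponential solution of Lemma \ref{lem1b} is the image of the one in Lemma \ref{lem1} with $\epsilon=\epsilon_1\epsilon_2$ and the same $k$, and since $E\mapsto -4/E^2$ is a local biholomorphism away from $0,\infty$, a double root of $w$ at $E_0=-1/(2\epsilon\nu+2k+1)^2$ corresponds to a double root of the Case-$1$ weight. The obstruction to quantum integrability is therefore the same object in both pictures, and I would read off the necessary condition from the analysis already carried out for Case $1$.

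\emph{Concrete perturbative set-up.} Equivalently, and to keep the computation self-contained, I would expand directly. Writing $M(z,E)=-Y'(z)/Y(z)+(E-E_0)M_1(z)+O((E-E_0)^2)$ and inserting it into the Case-$2$ vanishing condition $4M^2z^2+4z^2E-4M'z^2-4\nu^2+4z+1=O((E-E_0)^2)$, the zeroth order is exactly the second equation in \eqref{eqsing}, while the coefficient of $(E-E_0)$ gives the linear equation $M_1'+2(Y'/Y)M_1=1$. Its integrating factor is $Y^2$, so $M_1(z)=\frac{1}{Y(z)^2}\int Y(z)^2\,dz$. Quantum integrability requires $M_1\in\mathbb{C}(z)$, i.e. $\int Y^2\,dz\in Y^2\,\mathbb{C}(z)$; substituting $Y$ from Lemma \ref{lem1b} and rescaling $u=\frac{2z}{2\epsilon\nu+2k+1}$ turns this into the requirement
$$\int u^{2\epsilon\nu+1}e^{-u}\!_1F_1(-k,2\epsilon\nu+1,u)^2\,du\in u^{2\epsilon\nu+1}e^{-u}\mathbb{C}(u),$$
which is the exact analogue of the integral condition appearing in Lemma \ref{lem2}.

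\emph{Analysis of the integral, following Lemma \ref{lem2}.} I would then split into two cases. For $\nu\in\frac12\mathbb{Z}$ the integrand is a rational function times $u^{2\epsilon\nu+1}e^{-u}$, whose only possible pole is at $u=0$; after the same $\Gamma$-regularization of the (possibly singular) $\!_1F_1$ used in Lemma \ref{lem2}, the valuation at the origin is $2\epsilon\nu+k$, so the antiderivative stays in the required class precisely when $2\epsilon\nu+k\ge 0$. For $2\nu\notin\mathbb{Z}$ the linear solvability condition collapses to a single definite sum of the coefficients of $\!_1F_1^2$ weighted by $\Gamma$-factors; this sum is $P$-finite in $k$, and evaluating it with the holonomic package \cite{20} yields a closed Gamma-quotient whose only admissible zero is $\nu=0$, which is excluded. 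Hence the condition fails unless $2\epsilon\nu+k\in\mathbb{N}$.

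\emph{Main obstacle and the one-sided statement.} The genuinely nontrivial step is the $2\nu\notin\mathbb{Z}$ case: the solvability constraint is not manifestly holonomic, and what makes it tractable is recognising the sum as a convolution of $P$-finite sequences and summing it in closed form --- exactly the mechanism that powered Lemma \ref{lem2}, which is why the variable change lets me recycle it. I would only claim the necessary direction here: the map $E\mapsto -4/E^2$ transports the rationality obstruction forward cleanly, whereas realising a double root with $2\epsilon\nu+k\in\mathbb{N}$ is obtained, just as in Case $1$, as a confluence limit of two simple roots at a special value of $\nu$, so the converse need not be re-proved.
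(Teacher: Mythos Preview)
Your primary strategy---transporting Lemma~\ref{lem2} through the change of variables $Y(z,E)=K\!\left(\tfrac14 z^2E,-4/E^2\right)$---is exactly what the paper does: it states that the second equation of \eqref{eqsing} is the first after this substitution and declares Lemmas~\ref{lem1b}--\ref{lem3b} to follow from Lemmas~\ref{lem1}--\ref{lem3} ``carrying the variable change in $z$ and $E$''. No separate proof is given.

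Your alternative direct computation is sound, but one claim is imprecise. The equation $M_1'+2(Y'/Y)M_1=1$ and $M_1=Y^{-2}\!\int Y^2\,dz$ are correct, and the substitution yields $\int u^{2\epsilon\nu+1}e^{-u}\,{}_1F_1(-k,2\epsilon\nu+1,u)^2\,du$. However this is \emph{not} the integral of Lemma~\ref{lem2}: there the exponent is $2\epsilon\nu$, not $2\epsilon\nu+1$. Consequently you cannot literally import the closed Gamma-quotient computed there; the holonomic evaluation of your sum $\sum_n v_{n,k}\Gamma(2\epsilon\nu+n+2)$ produces the Case-1 obstruction multiplied by the extra factor $2\epsilon\nu+2k+1$ (this is just $\int_0^\infty u^{\alpha+1}e^{-u}L_k^{(\alpha)}(u)^2du=(2k+\alpha+1)\int_0^\infty u^{\alpha}e^{-u}L_k^{(\alpha)}(u)^2du$ via the three-term recurrence). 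Since $E_0=-1/(2\epsilon\nu+2k+1)^2$ forces $2\epsilon\nu+2k+1\neq 0$, the conclusion survives---but the valuation in the half-integer case shifts by one as well, so ``the valuation at the origin is $2\epsilon\nu+k$'' should read $2\epsilon\nu+k+1$. None of this breaks the argument, it just means ``following Lemma~\ref{lem2}'' requires redoing the evaluation rather than quoting it.
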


\begin{lem}\label{lem3b}
If $w$ has a triple or more root, then there is no $Y$ leading to a quantum integrable potential
\end{lem}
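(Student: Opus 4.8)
The plan is to reduce the statement to its Case~1 analogue, Lemma~\ref{lem3}, through the change of variables $Y(z,E)=K\!\left(\frac14 z^2E,-\frac{4}{E^2}\right)$ already used to identify the second equation of \eqref{eqsing} with the first. Writing $\xi=\frac14 z^2E$ for the transformed main variable and $\eta=-4/E^2$ for the transformed energy, and letting $K$ solve the first equation, the chain rule gives $Y'(z,E)=\tfrac12 zE\,K_\xi$, so the two logarithmic derivatives are related by
$$M_{\mathrm{II}}(z,E)=-\frac{Y'}{Y}=\frac12 zE\left(-\frac{K_\xi}{K}\right)\!\Big|_{\xi=\frac14 z^2E,\ \eta=-4/E^2}=\frac12 zE\;M_{\mathrm{I}}(\xi,\eta),$$
where $M_{\mathrm{I}},M_{\mathrm{II}}$ denote the $M$ functions attached to the first and second equations. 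As $\tfrac12 zE$ and the substitutions $\xi,\eta$ are rational in $z,E$, rationality on the Case~1 side forces it on the Case~2 side; the lemma is the converse obstruction, which I would carry over by following the monodromy at the ramification point.

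First I would verify that a root of order $\geq 3$ is preserved and isolate the obstruction. The energy map $\eta(E)=-4/E^2$ satisfies $\eta'(E)=8/E^3\neq 0$ at every Case~2 energy $E_0=-1/(2\epsilon\nu+2k+1)^2$, all of which are non-zero, so $\eta$ is a local biholomorphism at each $E_0$ and a root of $w$ of order $\geq 3$ at $E_0$ corresponds to one of order $\geq 3$ at $\eta_0=\eta(E_0)$. Expanding $M_{\mathrm{II}}(z,E)=\frac12 zE\,M_{\mathrm{I}}\!\left(\frac14 z^2E,-4/E^2\right)$ in powers of $E-E_0$, every contribution to the order-two coefficient $M_2^{\mathrm{II}}(z)$ involves $M_{\mathrm{I}}$ and its $\xi$- and $\eta$-derivatives at $(\xi_0,\eta_0)$ with $\xi_0=\frac14 z^2E_0$. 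By Lemma~\ref{lem2b} (and the sufficiency direction of its Case~1 counterpart Lemma~\ref{lem2}) the double-root condition $2\epsilon\nu+k\in\mathbb{N}$ holds, so the order-zero and order-one coefficients of $M_{\mathrm{I}}$ in $\eta$, together with all their $\xi$-derivatives, are rational; the only non-rational ingredient is the pure second $\eta$-coefficient $M_2^{\mathrm{I}}$ of \eqref{eqM2}, and it enters $M_2^{\mathrm{II}}$ with the non-zero factor $\frac12 zE_0\,\eta'(E_0)^2$.

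The heart of the matter is the point $z=0$. In the proof of Lemma~\ref{lem3} the obstruction is that $M_2^{\mathrm{I}}$ has a non-vanishing residue at $\xi=0$,
$$\frac{1}{2i\pi}\oint_{0}\frac{\xi}{K(\xi)^2}\left(\int\frac{K(\xi)^2}{\xi}\,d\xi\right)^2 d\xi=\frac{\Gamma(k+1)\Gamma(p+1-k)^2}{4\,\Gamma(p+1)}\neq 0,$$
so that $M_2^{\mathrm{I}}$ carries a logarithmic term $\log\xi$ at the origin. Under $\xi=\frac14 z^2E_0$ one has $\log\xi=2\log z+\mathrm{const}$, and the point $z=0$ maps to $\xi=0$; a small loop around $z=0$ winds twice around $\xi=0$ while $\eta$, being independent of $z$, stays fixed. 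Hence $M_2^{\mathrm{II}}(z)=\frac12 zE_0\,\eta'(E_0)^2\,M_2^{\mathrm{I}}\!\left(\tfrac14 z^2E_0\right)+(\text{rational in }z)$ inherits a logarithmic term $\log z$ with non-zero coefficient, so it cannot be rational in $z$. This contradicts the integrability requirement that $M$ be rational, proving Lemma~\ref{lem3b}.

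I expect the main obstacle to be the bookkeeping of the second paragraph: because $\xi=\frac14 z^2E$ couples the main variable to the energy, the change of variables does not act diagonally on the two-variable expansions, and one must check that no other term contributing to $M_2^{\mathrm{II}}$ produces a competing logarithmic singularity at $z=0$ that could cancel the one above. This reduces to confirming that every coefficient of $M_{\mathrm{I}}$ through order one in $\eta$ — and all its $\xi$-derivatives — is rational (so contributes no $\log$), which is exactly the double-root analysis of Lemma~\ref{lem2b}, and that the factor $\frac12 zE_0\,\eta'(E_0)^2$ multiplying $M_2^{\mathrm{I}}$ is non-zero, which holds since $E_0\neq 0$.
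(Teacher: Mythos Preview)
Your approach is essentially the same as the paper's: the paper does not give a separate proof of Lemma~\ref{lem3b} but simply asserts that the Case~1 results (Lemmas~\ref{lem1}--\ref{lem3}) transfer to Case~2 via the variable change $Y(z,E)=K\!\left(\tfrac14 z^2E,-4/E^2\right)$, and you carry out precisely that transfer. Your argument supplies the bookkeeping the paper omits---tracking how the Taylor coefficients of $M_{\mathrm{II}}$ in $E$ decompose into those of $M_{\mathrm{I}}$ in $\eta$, isolating $M_2^{\mathrm{I}}$ as the sole non-rational contribution with nonzero prefactor $\tfrac12 zE_0\,\eta'(E_0)^2$, and checking that the logarithmic singularity at $\xi=0$ survives the pullback $\xi=\tfrac14 z^2E_0$---so the reasoning is sound and aligned with the paper's intended route.
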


Again, double roots are only possible for $2\epsilon\nu+k\in \mathbb{N}$. Looking now at the case with simple roots (with generic $\nu$), we can produce a double root by taking a particular $\nu$: indeed, if there are two roots of the form $-(2\nu+2k_1+1)^{-2}$, $-(-2\nu+2k_2+1)^{-2}$, they fuse together when $\nu=(k_2-k_1)/2$. So as in previous section, the double root case can be obtained from the simple root case by fusing two roots through a limiting process for a specific $\nu$.

\subsection{Case $3$}

The third equation \eqref{eqsing} is a Bessel equation. It has hyperexponential solutions if and only if $E_0=0$, \cite{17}. Thus we have $w(E)=E^k$. The equation for $M$ is then
\begin{equation}\label{eqricatti3}
4M(z,E)^2z^2+4z^2E-4M'(z,E)z^2-4\nu^2+1=O(E^k)
\end{equation}

\noindent\textbf{Case $\nu\notin 1/2\mathbb{Z}$}

At $E=0$, we find only two possible rational solutions
$$M(z,0)= \frac{2\epsilon\nu-1}{2z},\qquad \epsilon=\pm 1$$
Let us write
$$M(z,E)=\sum\limits_{i=0}^{k-1} M_i(z) E^i,\qquad M_0(z)=\epsilon\nu/z$$
Injecting this in the differential equation, we obtain a list of differential equations in the $M_i$ of the form
$$\frac{ 2\epsilon\nu-1}{z} M_i(z)-M'_i(z)=\hbox{Polynomial}(M_j(z)_{j<i})$$
defining the $M_i$ recursively. The important point is that these differential equations are linear, and that the homogeneous part
$$\frac{ 2\epsilon\nu-1}{z} M_i(z)-M'_i(z)=0$$
has no non-zero rational solutions. Thus the equation \eqref{eqricatti3} admits at most two rational solutions (one for each $\epsilon$). In particular, the $M_i$ are uniquely determined by $M_0$.

We now solve equation \eqref{eqricatti3} with zero righthandside. We find two interesting solutions ($\epsilon=\pm 1$)
\begin{equation}\label{eqbessel}
M(z,E)=-\frac{\partial}{\partial z} \ln\left( \mathcal{W}(0,\epsilon\nu,\sqrt{-4E}z)\right)
\end{equation}
These solutions satisfy $M(z,0)=(2\epsilon\nu-1)/(2z)$, and thus their series expansion at order $k$ is the unique series solution we are searching. We now note $M_\epsilon(z,E)$ the rational function obtained by Pade series from the series expansion of \eqref{eqbessel} at order $k$.

The Whittaker function with $\mu=0$ rewrites in terms of the Bessel function. Now recall there is a recurrence relation between the Bessel function. In particular a linear combination of $\mathcal{W}(0,\nu+j,\sqrt{-E}z), \mathcal{W}(0,\nu+1+j,\sqrt{-E}z)$ can be rewritten as a linear combination of $\mathcal{W}(0,\nu,\sqrt{-E}z), \mathcal{W}(0,\nu+1,\sqrt{-E}z)$. And this relation gives a homographic transformation on $M$. Using degree considerations in $E$, we find that the two solutions $M_\epsilon(z,E)$ are such that
$$
\frac{z \left ( \frac{M_\epsilon(z,E)}{\sqrt{-4E}} \mathcal{W}(0,\nu,z\sqrt{-4E})+
\mathcal{W}'(0,\nu,z\sqrt{-4E}) \right)}{\sqrt{4M_\epsilon(z,E)^2z^2+4z^2E-4M_\epsilon'(z,E)z^2-4\nu^2+1}}=
\mathcal{W}(0,\nu+\epsilon k,z\sqrt{-4E})
$$
So these possible $M$ give in fact the same eigenfunction as $M=\infty$, proving third case of Theorem \ref{thmmain}.\\

\noindent\textbf{Case $\nu\in \mathbb{Z}$}

Let us first remark that we can assume $\nu\in [0,1[$ (as we can always shift $\nu$ by an integer). And so we can assume $\nu=0$. So the third equation \eqref{eqsing} becomes
$$4z^2Y''(z,E_0)+(4E_0z^2+1)Y(z,E_0)=0$$
We can differentiate this equation in $E_0$ up to order $k$, the multiplicity of the root $0$ in $w$. Noting
$$Y(z,E)=\sum\limits_{i=0}^{k-1} Y_i(z)E^i+O(E^k),\qquad D=-\partial_z^2-1/(4z^2)$$
we obtain
$$DY_0=0,\;\; DY_{i+1}=Y_i\; i=1\dots k-2$$
So the solutions can be obtained by applying the (pseudo) inverse of $D$ iteratively. Let us remark that the equation
$$Df=g,\quad g\in\sqrt{z}\mathbb{C}[z^2]+\ln(z) \sqrt{z}\mathbb{C}[z^2]$$
has solutions in the vector space $K=\sqrt{z}\mathbb{C}[z^2]+\ln(z)\sqrt{z} \mathbb{C}[z^2]$. So this vector space is stable by these iterations of taking the inverse of $D$. At each step, the degrees of the polynomials in $z^2$ grows by one.

So, possible functions $M$ are given by the series expansion
$$M(z,E)=-\frac{\partial}{\partial z} \ln \left(\sum\limits_{i=0}^{k-1} D^i F(z) E^{k-1-i} \right) +O(E^k)$$
$$F\in K,\hbox{ with degrees } k-1$$
However, we need to check that this series has rational coefficients in $z$. This is not automatic as the function $F$ can contain logs. Knowing that $Y(z,0)$ should have a rational logarithmic derivative, the only possible solutions are $Y(z,0)=a\sqrt{z}$. The constant $a$ can be assumed to be non zero, as we can multiply by a power of $E$ without changing $M$.
This condition rewrites in terms of $F(z)$ by the constraint $F(z)=\sqrt{z}P_1(z^2)+\ln z \sqrt{z} P_2(z^2)$ with $\deg P_1=k-1$. The constant $a$ can further be assumed equal to $1$, after multiplication of $Y$ by a constant, allowing to apply the following Lemma to conclude.

\begin{lem}
A series
$$Y(z,E)=\sum\limits_{i=0}^{k-1} D^i F(z) E^{k-1-i} +O(E^k)$$
with $Y(z,0)=\sqrt{z}$ has a logarithmic derivative which is a series in $E$ with rational coefficients in $z$ if and only if
$$F(z)=\sqrt{z}P_1(z^2)+\ln z \sqrt{z} P_2(z^2),\quad \deg P_2 \leq k/2-1$$
\end{lem}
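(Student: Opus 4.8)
The plan is to strip the logarithm out of $Y$ and reduce the rationality of $\partial_z\ln Y$ to the vanishing, to order $E^k$, of one explicit quantity depending only on $P_2$. Since $D$ preserves $\sqrt z\,\mathbb{C}[z^2]$, sends $\ln z\,\sqrt z\,\mathbb{C}[z^2]$ into $\sqrt z\,\mathbb{C}[z^2]\oplus\ln z\,\sqrt z\,\mathbb{C}[z^2]$, and never raises the power of $\ln z$ (one checks $D(\sqrt z\,z^{2n})=-4n^2\sqrt z\,z^{2(n-1)}$ and $D(\ln z\,\sqrt z\,z^{2n})=-4n\,\sqrt z\,z^{2(n-1)}-4n^2\ln z\,\sqrt z\,z^{2(n-1)}$), every iterate $D^iF$ is a log-free term plus $\ln z$ times a log-free term. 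I would therefore write $Y=A+B\ln z$ with $A,B\in\sqrt z\,\mathbb{C}[z^2][[E]]$, the coefficient $B$ of $\ln z$ being built solely from $F_B=\sqrt z\,P_2(z^2)$. Expanding in powers of $\ln z$, the coefficient of $\ln z$ in $Y'/Y$ equals $(W(A,B)-B^2/z)/A^2$ with $W(A,B)=AB'-A'B$, and every higher power of $\ln z$ carries the same factor $W(A,B)-B^2/z$. Because $Y(z,0)=\sqrt z$ makes $A$ invertible as a series in $E$, and because only the coefficients of $E^0,\dots,E^{k-1}$ in $Y'/Y$ are required to be rational, the entire statement reduces to the single condition
$$W(A,B)=AB'-A'B=\frac{B^2}{z}+O(E^k).$$

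Next I would exploit that $Y$ approximately solves the Bessel equation: from $Y_j=D^{k-1-j}F$, $DY_{j+1}=Y_j$ and $D\sqrt z=0$ one gets $(D-E)Y=O(E^k)$. Separating this identity into its log-free and $\ln z$ parts by means of $D(B\ln z)=(DB)\ln z-2B'/z+B/z^2$ yields $(D-E)B=O(E^k)$ and $(D-E)A=2B'/z-B/z^2+O(E^k)$. Differentiating the Wronskian and substituting $\partial_z^2=-D-1/(4z^2)$ gives $W(A,B)'=AB''-A''B=B\,DA-A\,DB$; feeding in the two relations, the terms $EAB$ cancel and
$$W(A,B)'=\frac{2BB'}{z}-\frac{B^2}{z^2}+O(E^k)=\left(\frac{B^2}{z}\right)'+O(E^k).$$
Hence $W(A,B)-B^2/z=c(E)+O(E^k)$ is constant in $z$ up to order $E^k$, and the problem is to compute this constant.

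To evaluate $c(E)$ I would write $A=\sqrt z\,\alpha(z^2)$, $B=\sqrt z\,\beta(z^2)$, so that $W(A,B)=2z^2(\alpha\dot\beta-\dot\alpha\beta)$ and $B^2/z=\beta^2$ with $\dot{}$ the derivative in $z^2$; both lie in $\mathbb{C}[z^2]$, so setting $z^2=0$ in their $z$-constant difference gives $c(E)=-\beta(0)^2+O(E^k)$, where $\beta(0)$ is the coefficient of $\sqrt z$ in $B$. Since $D(\sqrt z\,z^{2n})=-4n^2\sqrt z\,z^{2(n-1)}$ produces a nonzero constant term only after exactly $n$ applications, writing $P_2=\sum_n p_n z^{2n}$ of degree $d$ gives
$$\beta(0)=\sum_{n=0}^{d}p_n(-4)^n(n!)^2\,E^{k-1-n},$$
whose lowest term is $p_d(-4)^d(d!)^2\,E^{k-1-d}$ with $p_d\neq0$. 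Therefore $c(E)=-\beta(0)^2=O(E^k)$ if and only if $2(k-1-d)\geq k$, i.e. $d\leq k/2-1$, which is exactly the asserted bound; when $d>k/2-1$ the nonzero term of order $E^{2(k-1-d)}<E^k$ survives and $\partial_z\ln Y$ carries a logarithm at that order.

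The routine parts are the two explicit actions of $D$ and the degree count of the last paragraph. The main obstacle is conceptual and lies in the first two paragraphs: recognizing that rationality of $\partial_z\ln Y$ is equivalent to the single Wronskian condition $W(A,B)=B^2/z+O(E^k)$, and then seeing that the derivative of this Wronskian collapses, against the Bessel equation, to the total derivative $(B^2/z)'$, which is what makes $W(A,B)-B^2/z$ a constant in $z$. A second point requiring care is that rationality is only imposed modulo $O(E^k)$: a nonzero $B$ does force logarithms into $\partial_z\ln Y$, but exactly when $d\le k/2-1$ they appear only beyond order $E^{k-1}$; one must track the $E$-orders throughout so that inverting $A^2=O(1)$ and handling $B=O(E^{k-1-d})$ remain legitimate.
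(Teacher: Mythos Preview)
Your proof is correct and takes a genuinely different route from the paper's. The paper argues the two implications separately. For the forward direction it writes $Y=e^{-\int M\,dz}$ with $M$ rational, deduces that $\int M\,dz$ has at worst simple logarithmic terms $a_i\ln z$ at each $E$-order, and observes that expanding the exponential forces powers of $\ln z$ unless $a_i=0$ for $i\le (k-1)/2$; this translates into the degree bound on $P_2$. For the backward direction it works with the Riccati equation $4z^2M^2+4Ez^2+1-4z^2M'=O(E^k)$, solves for the successive $M_i$ by the recursion $M_i(z)=z^{-1}\int z\sum_{j<i}M_jM_{i-j}\,dz$, and tracks valuations at $z=0$ to show that no logarithms are produced in the integration. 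Your approach instead splits $Y=A+B\ln z$, reduces rationality of $Y'/Y$ to the single Wronskian condition $AB'-A'B-B^2/z=O(E^k)$, and uses the approximate Bessel equation for $A$ and $B$ to show that this quantity is constant in $z$; evaluating at $z^2=0$ collapses everything to $-\beta(0)^2$, whose $E$-valuation you compute explicitly. This is cleaner: both directions fall out of one identity, the Wronskian collapse against the ODE is the only nontrivial step, and the answer $2(k-1-d)\ge k$ appears without any inductive bookkeeping. The paper's argument, by contrast, is more hands-on but makes the mechanism (residues of $M$ at $0$ controlling the appearance of logs) perhaps more transparent for someone thinking in terms of the Riccati flow. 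One small point worth making explicit in your write-up: the hypothesis $Y(z,0)=\sqrt z$ forces $B(z,0)=0$, hence $B/A$ has positive $E$-valuation, which is what makes the geometric expansion of $(1+(B/A)\ln z)^{-1}$ legitimate as an $E$-adic series and guarantees that the higher $\ln z$ powers are automatically $O(E^k)$ once the first one is.
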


Let us first assume that the series $Y(z,E)$ can be written under the form
$$Y(z,E)=e^{-\int M(z,E) dz }$$
where $M(z,E)$ is a series in $E$ with rational coefficients. So after integration, we can obtain logs. Putting $E=0$ in the above expression, and knowing that the first term of $Y(z,E)$ is $\sqrt{z}$, we deduce that $M(z,E)=-1/(2z)+O(E)$. The next terms of the series cannot have singularities outside $0$, as $Y$ does not. And the possible singularity at $0$ is of order $1$ at most, due to the form of $Y$ (the singular behaviour at $0$ is in $\ln$). Thus we have
\begin{equation}\label{eqexp}
Y(z,E)=e^{\sum\limits_{i=1}^{k-1} (Q_i(z)+a_i \ln z)E^i +O(E^k)}
\end{equation}
with $Q_i$ polynomials and $a_i$ constants. We now make a series expansion of the righthandside in $E=0$, and we see that powers of $\ln$ can appear. These are impossible as $Y\in K[[E]]$. A necessary condition to avoid powers of logs is that $a_i=0,\forall i=1\dots (k-1)/2$. This implies in particular that the series expansion of $Y$ in $E$ has only polynomials in $z$ as coefficients up to $E^{(k-1)/2}$ included. As the coefficients have the form $D^i F(z)$, we obtain that
$$F(z)=\sqrt{z}P_1(z^2)+\ln z \sqrt{z} P_2(z^2),\quad \deg P_2 \leq k/2-1$$

Now let us prove the opposite way. Assume
$$Y(z)=P_1(z^2)+\ln z P_2(z^2),\quad \deg P_2 \leq k/2-1,$$
and let us prove that $Y$ has a logarithmic derivative which is a series in $E$ with rational coefficients. For $k=1,2$, this can be directly verified. So assume $k\geq 3$. Recall that minus this logarithmic derivative is in fact a solution of the non linear equation
$$4z^2M(z,E)^2+4Ez^2+1-4z^2M'(z,E)=O(E^k)$$
We know that $M$ has a series expansion in $E$ with rational coefficients up to $E^{(k-1)/2}$ included. These are moreover odd functions in $z$. Let us prove that this above equation implies that the next terms of the series are also rational. Noting $M(z,E)=-1/(2z)+\sum_{i=1}^{k-1} M_i(z)E^i$, we obtain the relations from the above equation in $M$
$$\sum\limits_{j=1}^{i-1} M_j(z) M_{i-j}(z)-M_i'(z)-M_i(z)/z=0$$
So these relations give a system of \textbf{linear} differential equations in $M_i,\;i >(k-1)/2$. The solutions are
\begin{equation}\label{eqrecM}
M_i(z)=\frac{1}{z}\int z\sum\limits_{j=1}^{i-1} M_j(z) M_{i-j}(z) dz
\end{equation}
We know that the $M_j$ with $j\leq (k-1)/2$ are odd rational in $z$. Moreover, for $1\leq j\leq (k-1)/2$, they have no singularity at $z=0$, as else the coefficients $a_i$ in equation \eqref{eqexp} would be non zero, and so $Y$ could not satisfy the hypothesis. We also know a priori that $M\in\mathbb{C}(z)[\ln z][[E]]$. So we just have to prove that logs do not appear when making the integration in equation \eqref{eqrecM}.Let us keep track of the valuation at $z=0$ of the $M_j$. We have $\hbox{val} M_j\geq 1,\; 1\leq j\leq (k-1)/2$. Let us prove by recurrence that the $M_i$ have no logs and valuation $\geq -1$. 

For $1\leq i\leq (k-1)/2$, it is already done. For larger $i$, we look the integrand of equation \eqref{eqrecM}, and we see in each product, $M_j$ or $M_{i-j}$ has index  $\leq (k-1)/2$. Thus the valuation of the sum is at least $1-1=0$ (using here the recurrence hypothesis $\hbox{val} M_j\geq -1,\; \forall j<i$). Thus the valuation of the integrand is at least $1$, and so no logs appear in the integration. Moreover, we then divide by $z$, dropping the valuation by $1$, and thus $\hbox{val} M_i\geq -1$. This gives the Lemma, proving fourth case of Theorem \ref{thmmain}.\\

\noindent\textbf{Case $\nu\in 1/2+\mathbb{Z}$}

Let us first remark that we can assume $\nu\in [0,1[$ (as we can always shift $\nu$ by an integer). And so we can assume $\nu=1/2$. So the third equation \eqref{eqsing} becomes
$$Y''(z,E_0)+E_0Y(z,E_0)=0$$
We can differentiate this equation in $E_0$ up to order $k$, the multiplicity of the root $0$ in $w$. Noting
$$Y(z,E)=\sum\limits_{i=0}^{k-1} Y_i(z)E^i+O(E^k),\qquad D=-\partial_z^2$$
we obtain
$$DY_0=0,\;\; DY_{i+1}=Y_i\; i=1\dots k-2$$
So the solutions can be obtained by applying the (pseudo) inverse of $D$ iteratively. Let us remark that the equation
$$Df=g,\quad g\in\mathbb{C}[z]$$
has polynomial solutions. So the vector space of polynomials $\mathbb{C}[z]$ is stable by these iterations of taking the inverse of $D$. At each step, the degrees of the polynomial grows by two.

So, possible functions $M$ are given by the series expansion
$$M(z,E)=-\frac{\partial}{\partial z} \ln \left(\sum\limits_{i=0}^{k-1} D^i F(z) E^{k-1-i} \right) +O(E^k)$$
$$F\in \mathbb{C}[z],\deg F \leq 2k-1$$
Here the series has always coefficients rational in $z$. We finally need to ensure that the precision of the series does not drop by taking the logarithmic derivative, i.e. $Y(z,0)\neq 0$. This implies $\deg F= 2k-1 \hbox{ or } 2k-2$. We then always obtain a rational $M$ through Pade series, proving fifth case of Theorem \ref{thmmain}.

\subsection{Case $4$}

The fourth equation \eqref{eqsing} is an Airy equation, and never has a hyperexponential solution. Thus only the singular $M=\infty$ remains in this case, leading to an affine potential. This proves the sixth case of Theorem \ref{thmmain}.

\section{Examples}

Outside of the special cases $V(z)=z^2+\alpha/z^2,1/z+\alpha/z^2,z,\alpha/z^2$, all the other cases are generated by constructing a gauge transformation function $M$ which is a Pade interpolation or Pade series. In Theorem \ref{thmmain}, these non trivial gauge transformations split in four families, corresponding respectively to eigenfunctions of Theorem \ref{thmmain0} in case $1$, case $2$, case $3$ with $\nu=0$, case $3$ with $\nu=1/2$.

These four families are described completely explicitly: given a set of points or a polynomial (or log-polynomial), we perform a Pade interpolation or Pade series to produce a function $M$, and then a potential $V$. The $4$ families can be generated by algorithms given in the Appendix. The Maple code can be directly copied and is able to generate the integrable potentials of Theorem \ref{thmmain}. The programs take in input a list of elements (for case $1,2$) or a function (for case $3$ with $\nu=0,1/2$).

Here we will make explicit computation of the spectrum for one example of each of the $4$ families. These examples were chosen as they seem to exhibit interesting properties for physical applications.\\

\noindent
\textbf{An anharmonic potential}\\
Let us consider the potential of the first family
$$V(z)=-z^2-2-\frac{8}{2z^2+1}+\frac{16}{(2z^2+1)^2}$$
The potential is analytic on $\mathbb{R}$, comes from the first case of Theorem \ref{thmmain} with the list $[4\nu+6]$, giving the gauge (of degree $0$ in $E$)
$$M(z,E)=-\frac{z^4-4\nu z^2+4\nu^2-4z^2+4\nu+1}{z(-z^2+2\nu+1)}$$
and then taking $\nu=-3/4$. The denominator of the expression of the eigenfunction (case $1$ Theorem \ref{thmmain0}) is $H=(E-3)z^2$, and thus the expression becomes singular for $E=3$.

The potential is analytic on $\mathbb{R}$, and thus so are the eigenfunctions. So the square integrability condition only put a condition near infinity. It is not trivial as the function $\mathcal{W}$ can be exponentially diverging at infinity. So let us first look at the co triangular condition of Proposition \ref{proptri}. The Stokes matrices at infinity of $\mathcal{W}$ are cotriangularizable if and only if $E\in 2\mathbb{Z}+1$. So we already know that the spectrum is a subset of this. It happens that this set leads to Liouvillian functions. We now compute the first eigenfunctions of this potential

\begin{align*}
\frac{e^{-z^2/2}}{2z^2+1}\quad & E=-1 & \frac{e^{-z^2/2}z(4z^4-5)}{2z^2+1}\quad & E=9\\
\frac{e^{-z^2/2}z(2z^2+3)}{2z^2+1}\quad & E=5 & \frac{e^{-z^2/2} (8z^6-12z^4-18z^2+3)}{2z^2+1}\quad & E=11\\
\frac{e^{-z^2/2}(4z^4+4z^2-1)}{2z^2+1}\quad & E=7 & \frac{e^{-z^2/2}z(8z^6-28z^4-14z^2+21)}{2z^2+1} \quad & E=13\\
\end{align*}
The polynomial appearing are in fact a linear combination with coefficients in $\mathbb{C}(z,E)$ of Hermite polynomials. Let us remark that the spectrum is similar to $-z^2$, except for few ``accidents'', $E=-1,1,3$. The accident $E=3$ is related to the singularity of the Gauge transformation $M$. In particular, as they are built, the gauge tranformation functions $M$ always have a particular behaviour at some specific points. However, if we evaluate the eigenfunction in $E$, the formula breaks downs for these particular $E$'s.\\

\begin{center}
\begin{figure}
\includegraphics[width=13.5cm]{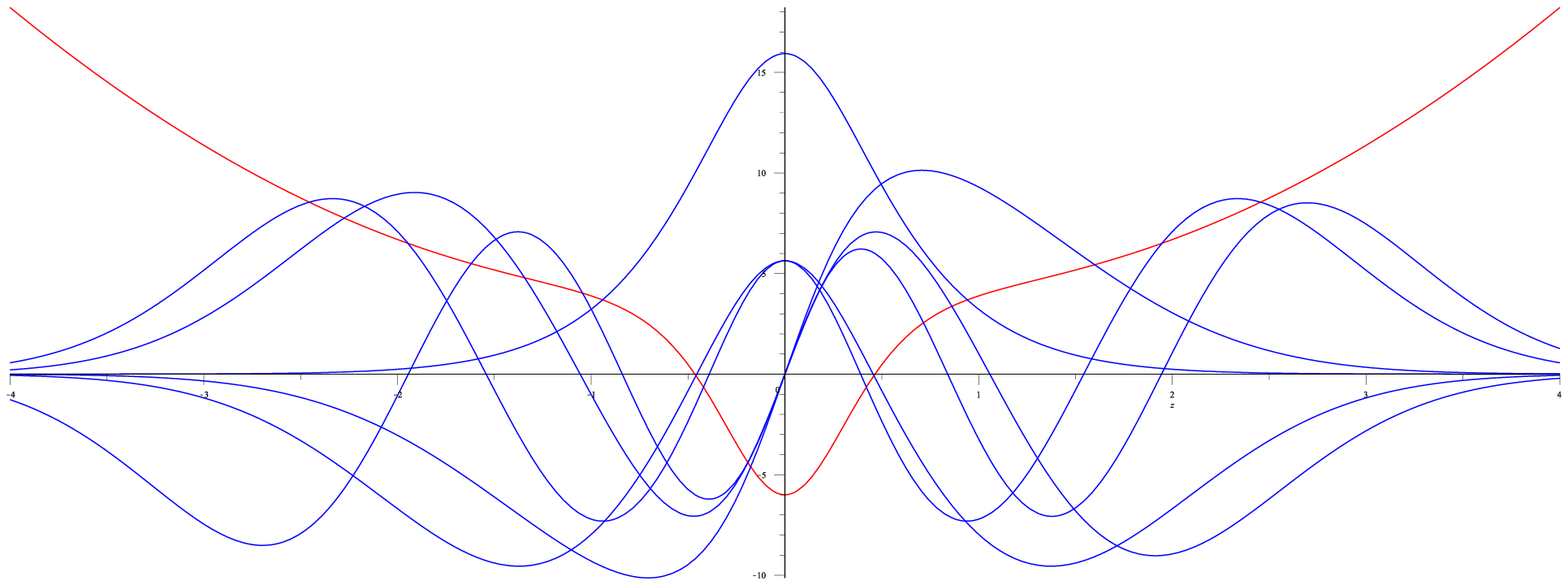}
\end{figure}
\end{center}

\noindent
\textbf{A fusion potential}

Let us consider a potential of the second family
$$V(z)=\frac{1}{z}-\frac{4}{z^2+2z+2}+\frac{8}{(z^2+2z+2)^2}$$
The potential is analytic on $\mathbb{R}^*$, comes from the second case of Theorem \ref{thmmain} with the list $[-1/(2\nu-1)^2,-1/(2\nu+2)^2]$, giving the gauge (of degree $1$ in $E$)
$$M(z,E)=\frac{-(2\nu+3)(2\nu-1)(8\nu^4+20\nu^3-8\nu^2z+6\nu^2-12\nu z+2z^2-9\nu)}{4z(4\nu^2+8\nu-2z+3)}E+$$
$$\frac{-8\nu^4-20\nu^3+8\nu^2z-30\nu^2+12\nu z-2z^2-31\nu+16z-6}{4z(4\nu^2+8\nu-2z+3)}$$
and then taking $\nu=-1/2$. The denominator of the expression of the eigenfunction (case $2$ Theorem \ref{thmmain0}) is 
$$H=\frac{(z^2+2z+2)^2(4E+1)^2}{4z^2},$$
and thus the expression becomes singular for $E=-1/4$. Remark this is a case of $w$ with a double root, the fusion occurring when taking $\nu=-1/2$.

The square integrability condition puts a condition near infinity and near $0$. There is problem on the interval definition of the solutions: the spectrum on $\mathbb{R}$, on $\mathbb{R}^+$ and $\mathbb{R}^-$ are not the same. The square integrability condition implies that at least $2$ of the three matrices involved (monodromy matrix at $0$ and $2$ Stokes matrices at infinity) should be cotriangularizable. As there is a multiplicative relation between these matrices, this implies that the differential Galois group is triangularizable. In other words, the condition $\mathcal{C}_1$ from Proposition \ref{proptri} are all the same. The condition $\mathcal{C}_1$ is then given by $E=-1/(4k^2),\; k\in\mathbb{N}^*$. The case $k=1$ leads to a square integrable solution on $\mathbb{R}^-$, the other ones on $\mathbb{R}^+$ (and none on $\mathbb{R}$).

\begin{align*}
\frac{e^{z/2}z}{z^2+2z+2}\quad & E=-1/4\\
\frac{e^{-z/4}z(z^3+6z^2+18z+24)}{z^2+2z+2}\quad & E=-1/16\\
\frac{e^{-z/6}z(z^4-4z^3-40z^2-144z-216)}{z^2+2z+2}\quad & E=-1/36\\
\frac{e^{-z/8}z(z^5-30z^4+50z^3+800z^2+3200z+5120)}{z^2+2z+2}\quad & E=-1/64\\
\end{align*}

\begin{center}
\begin{figure}
\includegraphics[width=13.5cm]{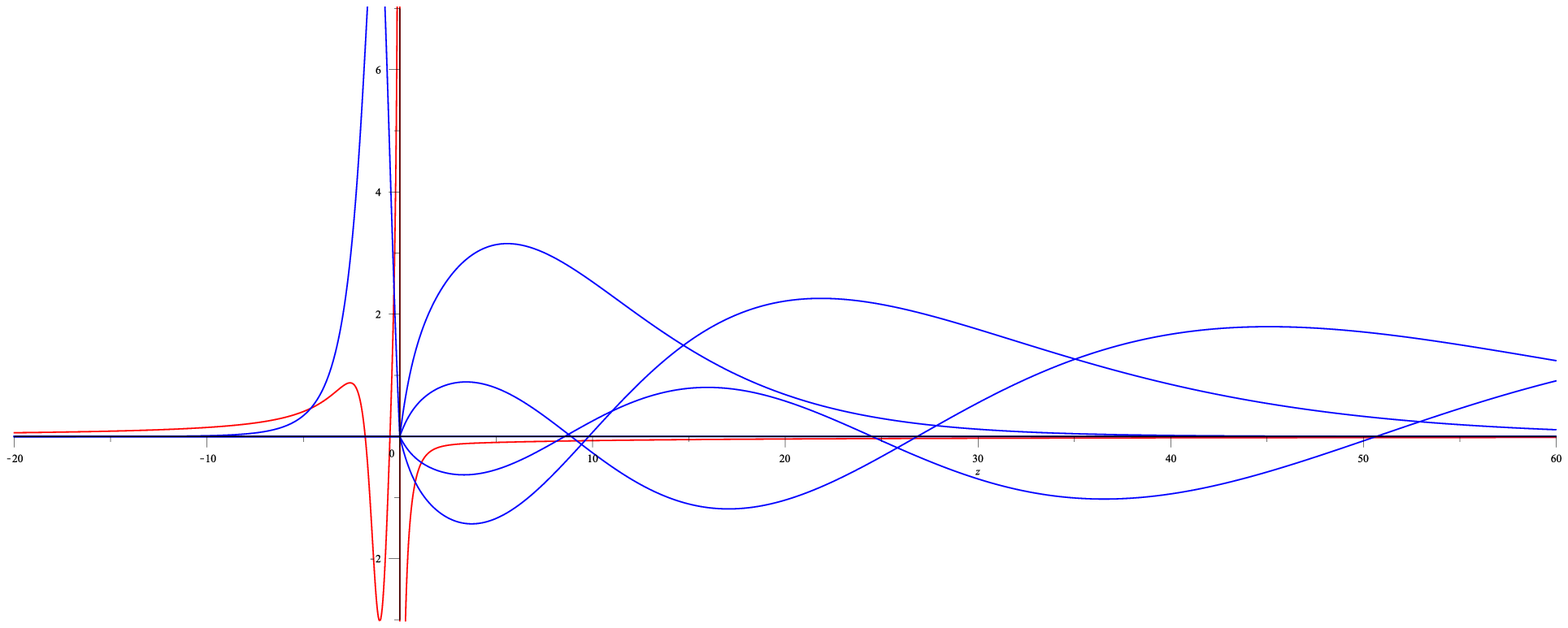}
\end{figure}
\end{center}

\noindent
\textbf{Continuous spectrum potentials}\\

The third case of Theorem \ref{thmmain0} gives two types of eigenfunctions, those with the Bessel function, and the Liouvillian ones. The potentials have continuous spectrum as their solutions are isomonodromic with respect to $E$. We give here two examples for low degree gauge functions $M$.\\

The fourth case of Theorem \ref{thmmain0} with $\nu=0$, $F(z)=\sqrt{z}(a+z^2+b\ln z)$, gives for the gauge
$$M(z,E)=\frac{2Ez^2+Eb-2}{4z}$$
The denominator of the expression of the eigenfunction (case $3$ Theorem \ref{thmmain0}) is 
$$H=\frac{1}{4}E^2(2z^2+b)^2,$$
giving here an example of a double root of $w$ at $E=0$. The corresponding potential is
$$V(z)=\frac{1}{4z^2}-\frac{8}{2z^2+b}+\frac{16b}{(2z^2+b)^2}$$
The eigenfunctions are given by case $3$ of Theorem \ref{thmmain0} with the $M$ given above.\\

The fifth case of Theorem \ref{thmmain0} with $\nu=1/2$, $F(z)=z^4+az^3+bz^2+cz+d$, gives for the gauge
$$M(z,E)=-\frac{3(4z+a)^2E}{(3a^2z+12az^2+16z^3+ab-2c)E-12a-48z}$$
The denominator of the expression of the eigenfunction (case $3$ Theorem \ref{thmmain0}) is 
$$H=\frac{4z^2(3a^2z+12az^2+16z^3+ab-2c)^2E^3}{(3Ea^2z+12Eaz^2+16Ez^3+Eab-2Ec-12a-48z)^2},$$
giving here an example of a triple root of $w$ at $E=0$. The corresponding potential is $V(z)=$
\begin{small}
$$\frac{-96z-24a}{3a^2z+12az^2+16z^3+ab-2c}-\frac{18a^4+72a^3z-72a^2b-288abz+144ac+576cz}{(3a^2z+12az^2+16z^3+ab-2c)^2}$$
\end{small}
and the (Liouvillian) eigenfunction
\begin{small}
$$\frac{(3a^2z+12az^2+16z^3+ab-2c)E+(3a^2+24az+48z^2)\sqrt{-E}-12a-48z}{3a^2z+12az^2+16z^3+ab-2c} e^{z\sqrt{-E}}$$
\end{small}

In these two cases, the monodromy and Stokes do not depend on $E$, and so any natural boundary condition are trivial (leading to a full $\mathbb{C}^*$ spectrum or empty spectrum). This implies by the way it is also the case for almost natural boundary conditions. 

\section{Conclusion}

We defined an explicit notion of quantum integrability for $1$ dimensional quantum system by building a differential field over $\mathbb{C}(z)$ with nice properties with respect to the monodromy/Stokes computations. All come down more or less to compute Gauge transformations of hypergeometric or confluent hypergeometric functions and hyperexponential functions. With these notions, we were able to completely classify integrable $1$ dimensional quantum problems in this sense. Remark that our classification effectively generates the integrable potentials, but does not answer the opposite question, i.e. given a potential, is it integrable? This can however be done using Theorem \ref{thmmain0}. Indeed, we need to search for rational gauge transformations of three particular differential equations. There is an algorithm in the Bessel case in \cite{16}, and probably soon for Whittaker functions. The quantum integrability would then be decidable in dimension $1$.

In the case of discrete spectrum, we always need one time or another to compute monodromy/Stokes matrices, as these appear in the boundary conditions and ``produce'' the spectrum. However, we do not have a complete understanding of the relation between the spectrum, in particular the gaps appearing in the examples and the singularities of the function $w$. Moreover, as many possible gauge functions $M$ are possible for one potential $V$, and we do not have a canonic gauge for a potential $V$, the roots of $w$ can depend on the choice of the gauge $M$.

In the continuous case, the monodromy/Stokes matrices do not play any role outside of being constant with respect to $E$. An important point is that the potentials obtained have always poles of order $2$, and this seems related to this continuous spectrum property. A natural question would be to ask if there are other systems for which the monodromy/Stokes matrices do not depend on $E$. Said otherwise, to find all rational functions $V\in\mathbb{C}(z)$ such that $\psi''(z)+(V(z)+E)\psi(z)=0$ is isomonodromic with respect to the parameter $E$. This problem is probably related to isomonodromic deformations and Painleve functions \cite{9}.

Another possibly way of generalization would be the higher dimensional case. Even in dimension $2$, the full classification is probably out of reach as we do not even know all ``classical'' integrable potentials. But still producing a definition of the same flavour would be interesting. In higher dimension, the notion of commutative observables become important: this is the analogue of the Liouville integrability of Hamiltonian system. So eigenfunctions are not anymore solution of a single PDE, but several. The natural condition to put on such a PDE system is holonomicity. Considering the characteristic variety associated to the corresponding differential ideal, we see that the holonomicity condition is in fact a dimension condition on this variety, and so similar to the independence conditions on first integrals in Hamiltonian systems. So what are the rigid functions solutions of a holonomic PDE system? The notions of differential Galois group can be generalized as we are still on some finite dimensional space. The notion of monodromy and Stokes are possibly more difficult. Hypergeometric functions have been generalized in many ways in higher dimensions, in particular $A$-hypergeometric functions. However, the possibility to carry an explicit computation of the monodromy is here of fundamental importance. Such result have not been yet obtained for $A$-hypergeometric functions.

\label{}

\bibliographystyle{model1-num-names}
\bibliography{quantum}

\appendix
\section*{Appendix}

These are the Maple code used to generate each of the $4$ non trivial families of quantum integrable potentials of Theorem \ref{thmmain}. These codes are standalone, i.e. then can be copy/pasted directly into working programs.

\begin{scriptsize}
\begin{verbatim}

genpot1:=proc(L);
[seq(factor(-diff(ln(DETools[kovacicsols](
-z*diff(Y(z),z)+z^2*diff(Y(z),z,z)+Y(z)*(-z^4+L[i]*z^2-4*nu^2+1),Y(z))[1]),z)),i=1..nops(L))];
1/CurveFitting[RationalInterpolation]([seq([L[i],1/%[i]],i=1..nops(L))],E);
simplify(eval(subs(M(z)=%,z^(3/2)/sqrt(M(z)^2*z^2-z^4+z^2*E-diff(M(z),z)*z^2+M(z)*z-4*nu^2+1)*
(M(z)/(2*z)*W(E/4,nu,z^2)+D[3](W)(E/4,nu,z^2))))):
subs((D[3,3](W))(E/4,nu,z^2)=-(-1/4+E/4/z^2+(1/4-nu^2)/z^4)*W(E/4,nu,z^2),diff(%,z)):
convert(-factor(simplify(subs((D[3,3](W))(E/4,nu,z^2)=
-(-1/4+E/4/z^2+(1/4-nu^2)/z^4)*W(E/4,nu,z^2),diff(%,z)))/%%+E),parfrac,z);
end:

genpot2:=proc(L);
[seq(factor(-diff(ln(DETools[kovacicsols](
4*z^2*diff(Y(z),z,z)+(4*L[i]*z^2-4*nu^2+4*z+1)*Y(z),Y(z))[1]),z)),i=1..nops(L))];
piecewise(nops(L)=1,%[1],
1/CurveFitting[RationalInterpolation]([seq([L[i],1/%[i]],i=1..nops(L))],E)):
simplify(eval(subs(M(z)=%,z/sqrt(4*M(z)^2*z^2+4*E*z^2-4*(diff(M(z), z))*z^2-4*nu^2+4*z+1)*
(M(z)/sqrt(-4*E)*W(1/sqrt(-4*E),nu,z*sqrt(-4*E))+D[3](W)(1/sqrt(-4*E),nu,z*sqrt(-4*E))))));
subs((D[3,3](W))(1/sqrt(-4*E),nu,z*sqrt(-4*E))=
-((-1/4+1/(-4*E)/z+(1/4-nu^2)/z^2/(-4*E)))*W(1/sqrt(-4*E),nu,z*sqrt(-4*E)),diff(%,z)):
convert(-subs(gamma=sqrt(-E),factor(simplify(subs(E=-gamma^2,
subs((D[3,3](W))(1/sqrt(-4*E),nu,z*sqrt(-4*E))=
-((-1/4+1/(-4*E)/z+(1/4-nu^2)/z^2/(-4*E)))*W(1/sqrt(-4*E),nu,z*sqrt(-4*E)),
diff(%,z))/%%)+E))),parfrac,z);
end:

genpot3:=proc(F) local n,i,S;
n:=degree(coeff(expand(F/sqrt(z)),ln(z),0),z)/2+1:
S:=E^(n-1)*F: for i from 1 to n-1 do S:=(-diff(S,z,z)-1/(4*z^2)*S)/E+E^(n-1)*F: od:
numapprox[pade](series(-diff(ln(collect(S,E,factor)),z),E=0,n),E,[floor(n/2),floor((n-1)/2)]);
subs(M(z)=%,z/sqrt(4*M(z)^2*z^2+4*E*z^2-4*(diff(M(z), z))*z^2-4*(0)^2+1)*
(M(z)/sqrt(-4*E)*W(0,0,z*sqrt(-4*E))+D[3](W)(0,0,z*sqrt(-4*E))));
subs((D[3,3](W))(0,0,z*sqrt(-4*E))=-(-1/4+1/4/(z^2*(-4*E)))*W(0,0,z*sqrt(-4*E)),diff(%,z)):
convert(-subs(gamma=sqrt(-E),factor(simplify(subs(E=-gamma^2,
subs((D[3,3](W))(0,0,z*sqrt(-4*E))=
-(-1/4+1/4/(z^2*(-4*E)))*W(0,0,z*sqrt(-4*E)),diff(%,z))/%%)+E))),parfrac,z);
end:

genpot4:=proc(F) local n;
n:=degree(F)/2+1:
numapprox[pade](-diff(ln(add((-1)^i*piecewise(i=0,F,diff(F,z$(2*i)))*E^(n-1-i),i=0..n-1)),z)
,E,[floor(n/2),floor((n-1)/2)]);
simplify(eval(subs(M(z)=%,z/sqrt(4*M(z)^2*z^2+4*E*z^2-4*(diff(M(z), z))*z^2-4*(1/2)^2+1)*
(M(z)/sqrt(-4*E)*W(0,1/2,z*sqrt(-4*E))+D[3](W)(0,1/2,z*sqrt(-4*E))))));
subs((D[3,3](W))(0,1/2,z*sqrt(-4*E))=1/4*W(0,1/2,z*sqrt(-4*E)),diff(%,z)):
convert(-subs(gamma=sqrt(-E),factor(simplify(subs(E=-gamma^2,
subs((D[3,3](W))(0,1/2,z*sqrt(-4*E))=1/4*W(0,1/2,z*sqrt(-4*E)),diff(%,z))/%%)+E))),parfrac,z);
end:

\end{verbatim}
\end{scriptsize}

The input for the two first ones are lists of the form required by Theorem \ref{thmmain}, and for the last two are functions, the function $F$ of the form required by Theorem \ref{thmmain}. Remark moreover that these last two can handle functions $F$ with parameters. Some implementation tricks have been used
\begin{itemize}
\item The $\!_1F_1$ functions of Theorem \ref{thmmain} are generated on the fly by the Kovacic algorithm
\item The rational interpolation is made on the function $1/M$, as the default degrees of the interpolation algorithm then meet with the requirements of Theorem \ref{thmmain}
\item In the third one, the series defining $M$ is generated recursively through iterated application of differential operator $D$
\item The substitution of $E$ by $-\gamma^2$ is used to force simplifications of the square roots of $E$ (and choose the same valuation for all of them), so that the resulting potential can be put under partial decomposition form.
\end{itemize}

\end{document}